
\documentclass[12pt,a4paper]{article} 

\usepackage[top=3cm, bottom=3cm, left=2.5cm, right=2.5cm]{geometry}

\usepackage[utf8]{inputenc} 
\usepackage[english]{babel}
\usepackage{authblk} 

 \usepackage{newtxtext}


\makeatletter
\renewcommand{\section}{\@startsection{section}{1}{\z@}%
                       {-3.5ex \@plus -1ex \@minus -.2ex}%
                       {2.3ex \@plus.2ex}%
                       {\normalfont\large\bfseries}}
\renewcommand{\subsection}{\@startsection{subsection}{2}{\z@}%
                       {-3.25ex \@plus -1ex \@minus -.2ex}%
                       {1.5ex \@plus .2ex}%
                       {\normalfont\normalsize\bfseries}}
\makeatother


 \usepackage{latexsym}
 \usepackage{amsmath}
 \usepackage{amsfonts}
 \usepackage{amssymb}
 \usepackage{cases}
 \usepackage{pdflscape}  
\usepackage{graphicx}
\usepackage[normalem]{ulem}

\usepackage[T1]{fontenc}
\usepackage[utf8]{inputenc}

\usepackage[unicode=true]{hyperref}
\hypersetup{
  pdfstartview={FitH},
  pdffitwindow=false,
  pdfborder={0 0 0},
  colorlinks=true,
  linktoc=page,
  linkcolor=blue,
  urlcolor=blue,
  hypertexnames=false,
  pdfdisplaydoctitle=true,
  pdfduplex=DuplexFlipLongEdge,
}

\input{amssym.def}
\input{amssym.tex}



\newtheorem{theoreme}{Theorem }[section]
\newtheorem{proposition}[theoreme]{Proposition}
\newtheorem{lemma}[theoreme]{Lemma}
\newtheorem{definition}[theoreme]{Definition}

\newtheorem{remark}[theoreme]{Remark}

\newcommand{\beq}{\begin{equation}}
\newcommand{\eeq}{\end{equation}}
\newcommand{\bes}{\begin{subequations}}
  \newcommand{\ees}{\end{subequations}}
\newcommand{\gen}{\mathrm{gen}}

\def\bel{\begin{lemma}}
\def\eel{\end{lemma}}
\def\bet{\begin{theoreme}}
\def\eet{\end{theoreme}}
\def\bed{\begin{definition}}
\def\eed{\end{definition}}
\def\bep{\begin{proposition}}
\def\eep{\end{proposition}}
\def\ber{\begin{remark}}
\def\eer{\end{remark}}

\numberwithin{equation}{section}

\newcounter{smallarabics}
\newenvironment{arabicenumerate}
{\begin{list}{{\normalfont\textrm{(\arabic{smallarabics})}}}
  {\usecounter{smallarabics}\setlength{\itemindent}{0cm}
   \setlength{\leftmargin}{5ex}\setlength{\labelwidth}{4ex}
   \setlength{\topsep}{0.75\parsep}\setlength{\partopsep}{0ex}
   \setlength{\itemsep}{0ex}}}
{\end{list}}

\newcounter{smallroman}

\newcommand{\ben}{\begin{arabicenumerate}}
\newcommand{\een}{\end{arabicenumerate}}

\def\rr{{\mathbb R}}
\def\hh{{\mathbb H}}
\def\SS{{\mathbb S}}
\def\zz{{\mathbb Z}}
\def\cc{{\mathbb C}}

\def\nn{{\mathbb N}}
\def\dd{{\mathrm d}}
\def\ii{{\rm i}}

\def\bbbone{{\mathchoice {\rm 1\mskip-4mu l} {\rm 1\mskip-4mu l}
{\rm 1\mskip-4.5mu l} {\rm 1\mskip-5mu l}}}
\def\one{\bbbone}


\let\Re\relax
\DeclareMathOperator{\Re}{Re}


\def\0{{\rm\scriptscriptstyle 0}}

\def\pp{{\mathbb P}}

\def\ge{\mathrm{gen}\!\!}

\def\i{{\rm i}}

\def\sgn{{\rm sgn}}

\def\e{{\rm e}}
\def\d{{\rm d}}

\newcommand{\RR}{\mathrm{R}}
\newcommand{\II}{\mathrm{I}}



\def\slim{{\rm s-}\lim}

\def\12{\frac{1}{2}}
\def\32{\frac{3}{2}}
\def\52{\frac{5}{2}}

\def\qed{\hfill $\Box$\medskip}
\def\proof{{\bf Proof.}\ \ }


\newcommand\s{\mathrm{s}}
  \newcommand\h{\mathrm{h}}



\newcommand\ee{\mathrm{e}}


\begin{document}


\title {Point potentials  
on  Euclidean space,\\ hyperbolic space
  and sphere 
in any dimension
}

\author[1]{Jan Dereziński}
\author[1]{Christian Ga\ss}
\author[2]{Błażej Ruba}

\affil[1]{Department of Mathematical Methods in Physics, Faculty of Physics, \protect\\
University of Warsaw, Pasteura 5, 02-093 Warszawa, Poland, \protect\\ 
email: jan.derezinski@fuw.edu.pl, christian.gass@fuw.edu.pl}
\affil[2]{Department of Mathematics, University of Copenhagen, \protect\\ 
Universitetsparken 5, DK-2100 Copenhagen Ø, Denmark, \protect\\
email: btr@math.ku.dk}
\date{\today}
\maketitle


  \begin{abstract}
In dimensions $d=1,2,3$ the Laplacian can be
perturbed by a point potential. In higher dimensions the Laplacian
with a point potential cannot be defined as a~self-adjoint operator. 
However, for any dimension there exists a~natural family of functions that can be
 interpreted as Green's functions of the Laplacian with a spherically symmetric point potential.
In~dimensions $1,2,3$ they are the integral kernels of the resolvent of well-defined
self-adjoint operators. In higher dimensions they are not even
integral kernels of
bounded operators. Their construction uses the so-called generalized integral,
a concept going back to Riesz and Hadamard.

We consider the Laplace(-Beltrami) operator on the Euclidean space,
the hyperbolic space and the sphere in any dimension. We describe 
the corresponding  Green's functions, also perturbed by  a point 
potential. We describe their limit as the  scaled hyperbolic space 
and the scaled sphere approach the Euclidean space. Especially 
interesting is the behavior of positive eigenvalues of the spherical 
Laplacian, which undergo a shift
proportional to a negative power of the radius of the sphere.

 We expect that in any dimension our constructions yield possible  behaviors
  of the integral kernel of the resolvent of a perturbed Laplacian
  far from the  support of the perturbation. Besides,
   they can be viewed as toy models illustrating various aspects of renormalization in Quantum Field Theory, especially  the
 point-splitting method and  dimensional regularization.
\end{abstract}

\section{Introduction}
\subsection{Euclidean, hyperbolic and spherical Laplacian}
Let $\hh^d$ and $\SS^d$ denote the {\em hyperbolic space},
resp. the {\em unit sphere}, both $d$-dimensional. Let $\Delta_d$, $\Delta_d^\h$,
$\Delta_d^\s$ denote the {\em Laplace(-Beltrami) operators} on
$\rr^d$, $\hh^d$, resp. $\SS^d$.
It is convenient to shift the {\em hyperbolic Laplacian} by
$-\frac{(d-1)^2}{4}$ and the {\em spherical Laplacian} by $\frac{(d-1)^2}{4}$.
Our paper is devoted to the operators
\begin{align}
\label{eq:def_Hd}
  H_d:=-\Delta_d,\quad H_d^\h:=-\Delta_d^\h-\frac{(d-1)^2}{4},\quad H_d^\s:=-\Delta_d^\s+\frac{(d-1)^2}{4},
\end{align}
possibly perturbed by a {\em point potential}.

The operators $H_d$, $H_d^\h$, $H_d^\s$ can be viewed as self-adjoint
operators on $L^2(\rr^d)$, $L^2(\hh^d)$, resp. $L^2(\SS^d)$.
For $z\in\cc$ outside of the spectrum of $H_d$, $H_d^\h$,
resp. $H_d^\s$
one can define their {\em resolvent} ({\em Green's operator})
\begin{align}
 G_d(z):=(-z+H_d)^{-1},\quad G_d^\h(z):=(-z+H_d^\h)^{-1},\quad 
  G_d^\s(z):=(-z+H_d^\s)^{-1}.
\end{align}

The spectrum of $H_d$ and  $H_d^\h$ is continuous and 
coincides with $[0,\infty[$. 
The spectrum of $H_d^\s$ is discrete and equals $\left\{  \left(l+\frac{d-1}{2}\right)^2  \
  |\ l=0,1,\dots\right\}\subset [0,\infty[$. Therefore, it is
often   convenient to represent the spectral parameter $z\in\cc\backslash [0,\infty[$ as $z=-\beta^2$ with
    $\Re\beta>0$, so that
\begin{align}    
\label{eq:resolvents_beta}
        G_d(-\beta^2)=
        (\beta^2+H_d)^{-1},\quad
          G_d^\h(-\beta^2)=
    (\beta^2+H_d^\h)^{-1},\quad     G_d^\s(-\beta^2)=
    (\beta^2+H_d^\s)^{-1}.
\end{align}
Sometimes we will also write $\zeta^2$ for $z$.

For $0\leq a<b$ one can define the {\em spectral projections onto $[a,b]$}:
\begin{align}
  \pp_d(a,b):=\one_{[a,b]}(H_d),\quad \pp_d^\h(a,b):=\one_{[a,b]}(H_d^\h),\quad\pp_d^\s(a,b):=\one_{[a,b]}(H_d^\s).\label{specfunc}
\end{align}
We can also introduce the  {\em spectral
projections
onto eigenvalues} of $H_d^\s$:
\begin{align} \label{specfunc1}
\pp_{d,l}^\s:=\one_{ \left(l+\frac{d-1}{2}\right)^2 }(H_d^\s).
\end{align}

The integral kernels of the resolvents \eqref{eq:resolvents_beta}, 
denoted by $G_d(-\beta^2;x,x')$, $G_d^{\h}(-\beta^2;x,x')$ and 
 $G_d^{\s}(-\beta^2;x,x')$, are often called {\em Green's
   functions}. The integral kernels of the spectral projections
 \eqref{specfunc} are denoted
 $\pp_d(a,b;x,x')$, $\pp_d^\h(a,b;x,x')$, $\pp_d^\s(a,b;x,x')$. 
  The integral kernel of  \eqref{specfunc1} is denoted
 $\pp_{d,l}^\s(x,x')$. Explicit formulas for these in terms of special functions are known, and for convenience of the reader we provide them in our paper.

The integral kernels  related to $H_d$ are expressed in terms of
functions from  the Bessel  family. The integral kernels related to $H_d^\h$ and $H_d^\s$ are 
expressed in terms of Gegenbauer functions. Here are, for instance,
the formulas for Green's functions:
\begin{align}
       G_d\big(-\beta^2;x,x'\big)&
     =\frac{1}{(2\pi)^{\frac{d}{2}}}\Big(\frac\beta{r}\Big)^{\frac{d}{2}-1}K_{\frac{d}{2}-1}
         \big(\beta r\big), \label{green1}\\
     G_{d}^\h\big(-\beta^2;x,x'\big)&
     =\frac{\sqrt\pi\Gamma(\frac{d-1}{2}+\beta)}{\sqrt2(2\pi)^{\frac{d}{2}}2^{\beta}
         }{\bf Z}_{\frac{d}{2}-1,\beta }
  \big(\cosh(r)\big), \label{green2}\\
  G_{d}^\s\big(-\beta^2;x,x'\big)&
     =\frac{\Gamma(\frac{d-1}{2}+\i\beta )
       \Gamma(\frac{d-1}{2}-\i\beta )
     }{(4\pi)^{\frac{d}{2}}
  }{\bf S}_{\frac{d}{2}-1,\i\beta }         \big(-\cos(r)\big). \label{green3}
  \end{align}
Above, $r$ denotes the Euclidean, hyperbolic, resp. spherical distance
between $x$ and $x'$. $K_\alpha$ is the Macdonald function (one of
functions from the Bessel family). ${\bf 
  S}_{\alpha,\lambda}$ and ${\bf Z}_{\alpha,\lambda}$ are two kinds of
Gegenbauer
functions, see Appendix \ref{app:gegenbauer}.

One should 
note that Bessel and Gegenbauer functions have special properties  when their parameter $\alpha$
is  half-integer or integer.
For half-integer $\alpha$  Bessel and Gegenbauer functions can
be expressed as elementary functions. For integer $\alpha$ Bessel and
Gegenbauer functions have a~logarithmic singularity.
From the point of view of Green's operators, 
these values are important:
half-integer $\alpha$ is used in odd dimensions and integer $\alpha$
in  even dimensions.

All Green's 
functions 
\eqref{green1}, \eqref{green2} and \eqref{green3} behave similarly for 
$x,x'$ close to one another, which follows from well-known expansions 
of $K_\alpha$, ${\bf 
  S}_{\alpha,\lambda}$ and ${\bf Z}_{\alpha,\lambda}$. 
However, for large 
distances they are rather different.  This can be seen by comparing
the expansions of \eqref{green1}, \eqref{green2} and \eqref{green3}
for large distances, which we describe in 
\eqref{explicit4}, \eqref{integg4} and \eqref{evo1}.

\subsection{Point potentials}

The main goal of this paper  is  to extend the 
above theory to the operators $H_d$, $H_d^\h$ and $H_d^\s$ perturbed 
by a {\em point potential} (also called a {\em contact} or {\em delta potential}). 
It is a well known fact that the 1-dimensional Laplacian can be
perturbed by a delta potential in the form sense \cite{RSII}. In~dimensions 2 and
3 the Laplacian can also be perturbed by a
point-like perturbation, however one cannot use the naive form
formalism anymore \cite{AGHH,AK,BF}.  Thus in dimensions $d=1,2,3$ 
we obtain 1-parameter families of self-adjoint operators $H_d^\gamma$, 
$H_d^{\h,\gamma}$, $H_d^{\s,\gamma}$. We denote their 
resolvents by $G_d^\gamma(z)$, $G_d^{\h,\gamma}(z)$, 
$G_d^{\s,\gamma}(z)$. Their integral kernels have the form
\begin{align}
 \label{eq:Gdgamma+}
  G_d^\gamma( z ;x,x')&=   G_d( z ;x,x')+ 
                          \frac{ G_d( z;x,x_0)G_d( z;x_0,x')}{\gamma+\Sigma_d(z)},\\ \label{eq:Gdgamma+h}
    G_d^{\h,\gamma}( z;x,x')&=   G_d^\h( z;x,x')+ 
  \frac{
                                 G_d^\h( z;x,x_0)G_d^\h( z;x_0,x')}{\gamma+\Sigma_d^\h(z)},\\ \label{eq:Gdgamma+s}
      G_d^{\s,\gamma}( z;x,x')&=   G_d^\s( z;x,x')+ 
  \frac{ G_d^\s( z;x,x_0)G_d^\s( z;x_0,x')}{\gamma+\Sigma_d^\s(z)},
\end{align}
where $x_0$ is the position of the point potential
(e.g. the origin of coordinates of $\rr^d$ or the north pole of $\SS^d$).
 Here, the functions $\Sigma_d$, $\Sigma_d^\h$ and $\Sigma_d^\s$  satisfy
\begin{align}
  \partial_z\Sigma_d(z)=
&-\int_{\rr^d}G_d(z;x_0,x)^2\d x,\label{self1}\\
\partial_z\Sigma_d^\h(z)=
  &-\int_{\hh^d}G_d^\h(z;x_0,x)^2\d x,\label{self2}\\
  \partial_z\Sigma_d^\s(z)=
  &-\int_{\SS^d}G_d^\s(z;x_0,x)^2\d x.\label{self3}
\end{align}
 The parameter $\gamma\in\rr\cup\{\infty\}$  is a real  integration
constant  and  describes the strength of the 
perturbation.
The function $\gamma +\Sigma_d^\bullet(z)$, where $\bullet$ is empty,  $\h$
or $\s$, will be called the {\em full self-energy}. $\Sigma_d^\bullet(z)$ is the  {\em reference
  self-energy}, fixed by  imposing some additional conditions.

In dimensions $1$ and $3$ there exists a natural condition that
allows us to fix the reference self-energy:
$\lim\limits_{z\to-\infty}\Sigma_1^\bullet(z)=0$ and
$\lim\limits_{z\to-\infty}(\Sigma_3^\bullet(z)-\frac{\sqrt{-z}}{4\pi})=0$.

For $d=2$, one possible choice for the reference self-energy 
is to demand $\Sigma_2^\bullet(-\beta^2)\sim\frac{\ln\beta}{2\pi}$ 
for $\beta\to\infty$, or equivalently, $\Sigma_2^\bullet(-1)=0$. 
This, however, distinguishes a certain length scale corresponding to 
$\beta=1$. In order to avoid such an \emph{a priori} unphysical 
distinction, we treat all possible full self-energies on an equal 
footing as members of a family of reference self-energies parametrized 
by a real parameter $\varepsilon=-2\pi\gamma$:
\begin{align}
\gamma+\Sigma_2^{\bullet}(z)
=:\Sigma_2^{\bullet,\varepsilon}(z).
\end{align}
$\gamma$ (and $\varepsilon$ in $d=2$) are
closely related to the so-called {\em scattering
  length} $a$ used in the physical literature.
Here are the relations between these two parameters:
\begin{align}d=1,&\qquad a=-2\gamma;\\
  d=2,& \qquad a=\e^{2 \pi \gamma}=\e^{-\varepsilon} ;\\
  d=3,&\qquad  a=-\frac{1}{4\pi\gamma}.\end{align}

It is  well-known that the Laplacian is essentially self-adjoint 
on $C_\mathrm{c}^\infty(\rr^d\backslash\{0\})$ in dimensions $d\geq4$ \cite{RSII}.
In other words, there are no 
point-like perturbations of the Laplacian in dimensions $d\geq4$, if we
stick to the usual Hilbert space setting.
This corresponds to the divergence of the integrals in \eqref{self1},
\eqref{self2} and \eqref{self3} defining the self-energies.

The description of Green's functions for the Laplacians with a 
point potential in dimensions $d\geq4$ is probably the
  main novelty of our paper. Our starting points are equations
  \eqref{eq:Gdgamma+}, \eqref{eq:Gdgamma+h} and
  \eqref{eq:Gdgamma+s}. Hence, we need to give meaning to divergent
  self-energies. We will  
 consider two different but consistent methods to do this.
  The first will be called the {\em point splitting
  method}, and the second the {\em minimal subtraction method}.

In the first method we start with replacing the integrals 
\eqref{self1}, \eqref{self2},
\eqref{self2} by their ``point-split versions'', which are then
repeatedly differentiated in $z$ (the ``energy'') until convergent integrals are obtained.
Then we repeatedly integrate them to
get the self-energy. Integration constants from multiple integrations
can be gathered in a polynomial $\gamma(z)$, which replaces the
integration constant $\gamma$ used in lower dimensions. 
$\gamma(z)$ is 
a polynomial of degree $\leq n=\big\lfloor \frac{d-2}{2}\big\rfloor $, 
i.e. $n=\frac{d-3}{2}$ if $d$ is odd and degree $n=\frac{d-2}{2}$ if $d$ 
is even.

The second approach to define self-energies is to replace \eqref{self1},
\eqref{self2}, \eqref{self3} with the corresponding {\em generalized 
  integrals}. Then the self-energies  $\Sigma_d^\bullet(z)$ are well 
  defined in all dimensions up to only  one integration constant.

  As we explain in  Appendix \ref{app:genInts}, the 
  generalized integral is a natural  extension 
  of the classical integration to a certain class of not
  necessarily integrable functions. It resembles the minimal 
  subtraction scheme in QFT. Clearly, it is only one of many
  linear extensions of the integration functional. Other extensions
  differ by
  an
  additional polynomial of degree $\leq \big\lfloor \frac{d-2}{2}\big
  \rfloor $,  whose parameters 
  can be viewed as arbitrary ``renormalization constants''. Thus both approaches to 
  defining self-energies agree.

A generalized integral is said to have a {\em scaling anomaly} if it transforms inhomogeneously upon a rescaling of the integration variable. There is a big difference between
non-anomalous and anomalous generalized integrals. In the non-anomalous
case, the computation of a generalized integral essentially reduces to the
analytic continuation of the usual integral in a
certain parameter, which often (in particular, in our case) can be
interpreted as the dimension. In the anomalous case in addition to
analytic continuation one has to perform an appropriate subtraction.

One could ask whether it is natural to fix a certain full self-energy,
and to call it the {\em reference
  self-energy}.
We would like our reference self-energies to be algebraically as simple as possible, in particular,
they should be factorized in simple factors.

The generalized integral suggests a certain
expression, which we denote $\Sigma_d^{\bullet,\mathrm{ms}}$, (where
we fix a single  integration constant in some natural way and
$\mathrm{ms}$ stands for ``minimal subtraction'').
In odd dimensions $d\geq5$ there is an obvious choice of reference
self-energy which is given by an
algebraically simple expression. 
 This reference self energy is equal to $\Sigma_d^{\bullet,\mathrm{ms}}$, and can also be obtained by formally extending
$\Sigma_d^\bullet(z)$ to complex $d$ in the region 
 $|\Re d-2|<2$ ($d \neq 2$), and
then by using analytic continuation. 

In even dimensions $d\geq4$ we are in the
anomalous case, which is much more complicated. The generalized
integrals on the right-hand side of \eqref{self1}, \eqref{self2} and \eqref{self3}
involve non-elementary functions: the logarithm or the digamma
function $\psi(z):=\frac{\Gamma'(z)}{\Gamma(z)}$.

The anomalous generalized integral is not invariant under a change 
of variables. In the Euclidean case, the natural variable is 
$r$, the distance from the origin in some fixed units.
Since the generalized integral is invariant under a change of variable $r\to r^\alpha$ for any $\alpha>0$, one can equivalently use the coordinate $r^2$.

In the hyperbolic 
and spherical cases the variables $r$ (or $r^2$), now denoting the hyperbolic 
resp. spherical distance, seem not convenient to compute
self-energies. Instead, in \cite{DGR23a}, to this end
we used the variables $w=2(\cosh(r)-1)$ resp. $w=2(1-\cos(r))$. These variables are convenient in calculations involving resolvents of the Laplacian, and they seem to be a natural choice. Note that $w=r^2 + \mathcal{O}(r^4)$ is a function of $r^2$ in both cases.
Anyway, if we change the variable in the generalized integral 
according to \eqref{eq:change_of_var}, the resulting change in 
the self-energy is a polynomial of degree $\leq \big\lfloor 
\frac{d-2}{2}\big\rfloor $, which is consistent with the ambiguity 
in the point splitting approach.\footnote{We remark that it 
makes a difference whether we allow for coordinate changes that are functions 
of $r$ or only of $r^2$. In the latter case, the generalized integral transforms anomalously only in even dimensions. In the 
former case, it transforms anomalously also in odd dimensions, 
giving a polynomial freedom in any dimension. Since 
$w$ is a function of $r^2$, the change of variables 
$r^2\to w$ does not affect the generalized integral in odd dimensions.}

Thus, for even $d\geq4$,
selecting in some way the integration constant, we can introduce
the self-energy given by the generalized integral
$\Sigma_d^{\bullet,\mathrm{ms}}$. All self-energies are given by
 $\gamma(z)+\Sigma_d^{\bullet,\mathrm{ms}}(z)$, where
$\gamma$ is of
degree  $\leq \frac{d-2}{2}$.
In the hyperbolic and
spherical cases $\Sigma_d^{\bullet,\mathrm{ms}}$ is rather complicated and
has no obvious factorization. There exists, however,
a 1-parameter family of factorized expressions 
  $\Sigma_d^{\bullet,\varepsilon}(z)$, $\varepsilon \in \rr$, which
one can use as reference self-energies. We
absorb the highest
term of the polynomial $\gamma$ in  $\varepsilon$, so that now the remaining freedom
consists of a polynomial $\eta(z)$ of degree only $\leq\frac{d-4}{2}$. Thus
the general form of a full self-energy in even dimensions is now given by 
$\eta(z)+\Sigma_d^{\bullet,\varepsilon}(z)$.

Summarizing, for odd $d$ we obtained the families of functions
\begin{align}
  G_d^\gamma( z ;x,x')&=   G_d( z ;x,x')+ 
    \frac{ G_d( z;x,x_0)G_d( z;x_0,x')}{\gamma(z)+\Sigma_d(z)},\label{uyu1}\\
    G_d^{\h,\gamma}( z;x,x')&=   G_d^\h( z;x,x')+ 
  \frac{G_d^\h( z;x,x_0)G_d^\h( z;x_0,x')}{\gamma(z)+\Sigma_d^\h(z)},\label{uyu2}\\
      G_d^{\s,\gamma}( z;x,x')&=   G_d^\s( z;x,x')+ 
  \frac{ G_d^\s( z;x,x_0)G_d^\s( z;x_0,x')}{\gamma(z)+\Sigma_d^\s(z)},\label{uyu3}
\end{align}
parametrized by an arbitrary
polynomial $\gamma$ of degree $\leq\frac{d-3}{2}$.

For even $d$ we need to slightly modify \eqref{uyu1}, \eqref{uyu2} and
\eqref{uyu3}: we replace the superscript $\gamma$ with
$\varepsilon,\eta$ and $\gamma(z)+\Sigma_d^\bullet(z)$ with 
$\eta(z)+\Sigma_d^{\bullet,\varepsilon}(z)$. Here $\varepsilon$ is 
a real number and $\eta$ is an arbitrary polynomial of degree 
$\leq\frac{d-4}{2}$.

In what follows, abusing the notation, we will sometimes write 
$\gamma$ for a pair $\varepsilon,\eta$. $G_d^{\bullet,\gamma}(z)$ 
will be called {\em Green's functions}.  For  $d\geq4$ they are not 
integral kernels of bounded operators. Hence for such $d$,  they are 
not resolvents of  well-defined self-adjoint operators.

Here is the list of the reference self-energies in various dimensions:
 \begin{subequations}
\begin{align}\label{d1}
  d=1:\qquad&\Sigma_1(-\beta^2) =-\frac{1}{2\beta},\\\notag
            &\Sigma_1^\h(-\beta^2) =-\frac{1}{2\beta},\\\notag
            &\Sigma_1^\s(-\beta^2)=-\frac{\coth\pi\beta}{2\beta};
\\  
\label{d2}
  d=2:\qquad&\Sigma_2^\varepsilon(-\beta^2 )=\frac{1}{2\pi}\big(\ln\beta-\varepsilon\big),\\\notag
            &\Sigma_2^{\h,\varepsilon}(-\beta^2) =\frac{1 }{2\pi}\Big(\psi(\tfrac12+\beta)-\varepsilon\Big),\\\notag
            &\Sigma_2^{\s,\varepsilon}(-\beta^2)=\frac{1 }{4\pi}\Big(\psi(\tfrac12+\i\beta)+\psi(\tfrac12-\i\beta)-2\varepsilon\Big);
  \\           
             \label{d3}
  d=3:\qquad&\Sigma_3(-\beta^2 )=\frac{\beta}{4\pi},\\\notag
            &\Sigma_3^\h(-\beta^2) =\frac{\beta}{4\pi},\\\notag
            &\Sigma_3^\s(-\beta^2)=\frac{\beta\coth\pi\beta}{4\pi};
 \\
    \label{d4}
  \text{even }d\geq4:\qquad&\Sigma_d^\varepsilon(-\beta^2) = \frac{1
    }{(4
  \pi)^{\frac{d}{2}} \Gamma\big(\tfrac{d}{2}\big)} \big( \ln(\beta^2)-2\varepsilon  \big)
           (-\beta^2)^{\frac{d-2}{2}}         ,\\\notag
  &\Sigma_d^{\h,\varepsilon}(-\beta^2)=
    \frac{
  \psi\big(\tfrac{3-d}{2}+\beta\big)
  +\psi\big(\tfrac{d-1}{2}+\beta\big)
  -2\varepsilon
    }{ (4\pi)^{\tfrac{d}{2}} \Gamma\big(\tfrac{d}{2}\big)}
                       \notag
\prod_{j=0}^{\tfrac{d-4}{2}} 
 \Big(-\beta^2+\big(\tfrac12 + j\big)^2\Big),\\\notag
  &\Sigma_d^{\s,\varepsilon}(-\beta^2)=
    \frac{\psi\big(\tfrac{d-1}{2}+\i\beta\big)+\psi\big(\tfrac{d-1}{2}-\i\beta\big)-2\varepsilon
    }
{(4\pi)^{\frac{d}{2}}\Gamma(\frac{d}{2})}
 \prod_{j=0}^{\frac{d-4}{2}}\big(-\beta^2-(\tfrac12+j)^2\big) ;
 \\             
              \label{d5}
\text{odd }d\geq5:\qquad&
                          \Sigma_d(-\beta^2)=\frac{
                         \pi}{(4\pi)^{\frac{d}{2}}\Gamma(\frac{d}{2})}\beta (-\beta^2)^{\frac{d-3}{2}},\\\notag
& \Sigma_d^\h(-\beta^2) 
  =
                                                                                                                  \frac{
                                                                                   \pi }{ (4\pi)^{\tfrac{d}{2}} \, 
 \Gamma\big(\tfrac{d}{2}\big)}   \beta 
      \prod_{k=1}^{\frac{d-3}{2}}
    \big( -\beta^2 +k^2 \big),\\\notag
 &\Sigma_{d}^\s(-\beta^2) 
   = \frac{
   \pi  \coth(\pi\beta ) }
   { (4\pi)^{\frac{d}{2}}\Gamma\big(\tfrac{d}{2}\big)}  \beta 
      \prod_{k=1}^{\frac{d-3}{2}}
    \big( -\beta^2 -k^2 \big)
.\end{align}
\end{subequations}

Of course, some items of the above list are well-known. The 
self-energy in the Euclidean case for $d=1,2,3$ belongs to standard 
knowledge of contemporary quantum physics. The Euclidean self-energy for 
$d\geq4$ obtained with help of the generalized integral is partially covered in the literature, see e.g. \cite{KuPa} for odd 
dimensions. 
The self-energies for the hyperbolic and spherical Laplacian appear to 
be new.

As we stressed above, for $d\geq4$, the functions  \eqref{eq:Gdgamma+},
 \eqref{eq:Gdgamma+h}  and  \eqref{eq:Gdgamma+s} do not define bounded operators 
and their
inverses do not define self-adjoint operators. It is natural to ask
what is their meaning.

One approach that can be found in the literature is to extend the
Hilbert space, typically, to a Pontryagin space (with an indefinite
metric product). This approach is described e.g. in \cite{Ku}.

One can also consider a different interpretation. Fix a point $x_0$ in $\rr^d$, $\hh^d$ or $\SS^d$. Suppose that $H_d^\bullet+V$ is a self-adjoint operator obtained by perturbing $H_d$ in a ball around $x_0$ of small radius $r$. We expect that far away from that ball, the integral kernel of $(H_d^\bullet+V-z)^{-1}$ is well approximated by $G_{d}^{\bullet, \gamma}$ for some $\gamma$ determined by $V$. Thus coefficients of $\gamma$ summarize universal long distance properties of $V$. We will discuss this idea further in a separate paper, which is in preparation.

The reference self-energies (corresponding to $\gamma=0$ in 
odd dimensions and $\eta=0$ in even dimensions) are in some sense 
distinguished---the poles of the corresponding Green's functions can be 
easily computed. In the Euclidean case they are also distiguished by their 
scaling property (they are ``fixed points of the renormalization group'').

 Our analysis of point interactions in dimensions
$d\geq4$ resembles
renormalization in Quantum Field Theory. In QFT,
especially in the Wilsonian approach, one does not worry too much
whether the quantities computed by renormalization techiques
correspond to a well-defined Hamiltonian. They should reproduce the
``infrared behavior of correlation functions''. We apply a similar
philosophy
to Green's functions. Note in particular that the borderline case when
the perturbed
Green's functions 
do not correspond to self-adjoint operators is $d=4$---the physical
dimension of our spacetime. (Our spacetime has a Lorentzian signature,
however using the Wick rotation it can often be replaced by
the Euclidean $\rr^4$.)

Our analysis can be viewed as a toy model illustrating various
aspects of renormalization in QFT. As explained above, we use two
methods to define
self-energies. The first  applies the so-called {\em point-splitting}
and then  regularization by
{\em differentiation in the energy}. The second method, using 
generalized integrals, resembles the {\em minimal subtraction}
method. To compute them we use
{\em dimensional regularization}.
Both methods have their widely used counterparts in QFT.

\subsection{Flat limit}

Let $R>0$ and let $\hh_R^d$, $\SS_R^d$ denote the rescaled hyperbolic space of 
curvature $-\frac{1}{R^2}$, resp. the rescaled sphere of curvature 
$\frac1{R^2}$ (that means, of radius $R$). 
Intuitively it is clear that in some sense  $\hh_R^d$, $\SS_R^d$
converge to $\rr^d$ as $R\to\infty$.

Green's functions on the rescaled spaces are
   \begin{align}
         G_{d,R}^\h(-\beta^2;x,x')&=R^{-d+2}G_d^\h\Big(-(\beta
                                    R)^2,\frac{x}{R},\frac{x'}{R}\Big),\\
              G_{d,R}^\s(-\beta^2;x,x')&=R^{-d+2}G_d^\s\Big(-(\beta
                                         R)^2,\frac{x}{R},\frac{x'}{R}\Big).
                                         \end{align}
We describe the convergence of these Green's functions  to the Euclidean ones
$G_d(-\beta^2;x,x')$. This is of course well-known, see e.g. \cite{CDT}.

On the rescaled spaces  the reference self-energies are defined as follows:
\begin{align}
 \Sigma_{d,R}^\h(-\beta^2) &:
 =R^{2-d} 
    \Sigma_{d}^\h\big(-(\beta R)^2\big), &\text{ odd }d, \vspace{12pt} \\[1ex]
    \notag
   \Sigma_{d,R}^{\h,\varepsilon}(-\beta^2) & :=
  R^{2-d}\Sigma_{d}^{\h,\varepsilon+\ln R}\big(-(\beta R)^2\big) 
, & \text{ even }d, 
\\[2ex] \notag
   \Sigma_{d,R}^\s(-\beta^2) &:
 =R^{2-d}
    \Sigma_{d}^\s\big(-(\beta R)^2\big), &\text{ odd }d, \vspace{12pt} \\[1ex]
    \notag
   \Sigma_{d,R}^{\s,\varepsilon}(-\beta^2)
                           &:=R^{2-d}\Sigma_{d}^{\s,\varepsilon+\ln R}\big(-(\beta R)^2\big)
, &\text{ even }d. 
\end{align}
Note that in even dimensions we need an additional additive
renormalization, which can be traced back to rescaling of the variable in a generalized integral.

Using the above self-energies
we define the corresponding Green's functions.
We prove that they converge to 
the Euclidean Green's function with a point potential and
the same parameters. That is, in odd dimensions
$G_{d,R}^{\h\gamma}(-\beta^2;x,x')$ and
$G_{d,R}^{\s,\gamma}(-\beta^2;x,x')$ converge to
$G_d^\gamma(-\beta^2;x,x')$, and in even dimensions
$G_{d,R}^{\h,\varepsilon,\eta}(-\beta^2;x,x')$ and
$G_{d,R}^{\s,\varepsilon,\eta}(-\beta^2;x,x')$ converge to
$G_d^{\varepsilon,\eta}(-\beta^2;x,x')$.
This convergence is, perhaps, not
very surprising. However, it~requires a rather careful treatment of the self-energy
(including the choice of renormalization), especially
for even $d\geq4$, when there is the  scaling anomaly.

\subsection{Poles of Green's functions}

In dimensions 1,2,3, the singularities of Green's functions
$G_d^{\bullet,\gamma}( z )$, $G_d^{\bullet,\varepsilon,\eta}( z )$
are located at the spectrum
$H_d^{\bullet,\gamma}( z )$, $H_d^{\bullet,\varepsilon,\eta}( z )$.
In the Euclidean and hyperbolic case the continuous spectrum remains
$[0,\infty[$,
but the point potential may introduce an additional eigenvalue. In the
spherical case, the point potential shifts the old eigenvalues, and
may introduce a new one.
For example, in the Euclidean case we have
the following new eigenvalues:
\begin{align}H_1^\gamma:\quad&-\frac{1}{a^2},\quad \text{if } a<0,\\
  H_2^\varepsilon:\quad&-\frac{1}{a^2}, \\
    H_3^\gamma:\quad&-\frac{1}{a^2},\quad \text{if } a>0,\end{align}
             where we use the scattering length $a$  (see
             Subsect. \ref{ssc:Euclidean_ptpot}
              for its relation to $\gamma$ and $\varepsilon$).

              For dimensions $d\geq4$ the point potential
              may introduce additional poles of Green's functions located at 
$z$ satisfying
\begin{align}
  \gamma(z)+ \Sigma_d^\bullet(z)&=0, \qquad \text{odd }d;\label{uyu3_doubled}\\
  \eta(z)+ \Sigma_d^{\bullet,\varepsilon}(z)&=0, \qquad \text{even }d. \label{uyu4}
\end{align} 
The interpretation of these singularities 
is less clear. We may call them {\em eigenvalues} of
$H_d^{\bullet,\gamma}$, $H_d^{\bullet,\varepsilon,\eta}$, even though
strictly speaking these Hamiltonians do not exist in the Hilbert space sense.
For $d\geq4$ these poles may appear outside of the real line (after all, they
are not eigenvalues of a~true self-adjoint operator).

For  pure reference self-energies, the additional singularities are easy
to determine. If $d\geq3$ is odd, the singularities originating from 
\eqref{uyu3_doubled} with $\gamma=0$ are as follows:
\begin{align}
  H_d^{0}:\qquad&z=0; \label{uyu5} \\ \notag
  H_d^{\h,0}:\qquad&z=-k^2,\quad k=0,1,\dots,\frac{d-3}{2};\\ \notag
  H_d^{\s,0}:\qquad& 
  z=\big(k+\tfrac12)^2,\quad k\in \nn_0. 
\end{align}
If $d\geq2$ is even, the singularities originating from 
\eqref{uyu4} with $\eta=0$ are 
\begin{align}\label{uyu5_doubled}
  H_d^{0}:\qquad& 
  z=-\e^{2\varepsilon},\quad\text{and if } d\geq4 \text{ also } z=0;\\\notag
  H_d^{\h,0}:\qquad&z \text{ solving } \psi\big(\tfrac{3-d}{2}+\sqrt{-z}\big)
  +\psi\big(\tfrac{d-1}{2}+\sqrt{-z}\big)
  =2\varepsilon,\\\notag&
  \text{and}\quad z=-(k+\tfrac12)^2,\quad k=0,1,\dots,\frac{d-4}{2};\\  
  H_d^{\s,0}:\qquad&
z\text{ solving } \psi\big(\tfrac{d-1}{2}+\i\sqrt{-z}\big)+\psi\big(\tfrac{d-1}{2}-\i\sqrt{-z}\big)=2\varepsilon\notag\\ \notag
&\text{and}\quad         z=(k+\tfrac12)^2,\quad k=0,1,\dots,\frac{d-4}{2}. 
                     \end{align}

What is especially interesting
are eigenvalues (or poles of Green's
functions) in the spherical case inside $[0,\infty[$
for a general point potential, which we discuss
in Subsect. \ref{sec:eigs}.
In the
unperturbed case for the sphere of radius $R$ they are located at
\beq
\frac{(l+\frac{d-1}{2})^2}{R^2},\quad\text{with multiplicity
}\frac{(2l+d-1)(d+l-2)!}{(d-1)!l!},\quad l=0,1,\dots.\label{qrw}\eeq
One effect of the perturbation is that the multiplicity of each of
these eigenvalues is decreased by one (in particular $\left( \frac{d-1}{2R} \right)^2$ is not an eigenvalue) and a shifted eigenvalue appears.

Let $E_{d,l,R}^{\gamma}$ be the $l$th shifted eigenvalue in the odd 
case.
Below we give formulas for $E_{d,l,R}^{\gamma}$ in the generic case
$\gamma(z) \neq 0$. 
If $\nu$ denotes the order of vanishing of 
$\gamma(z)$ at $z=0$, we find:
\begin{align}
E_{1,l,R}^{ \gamma} & =   \frac{(l + \frac12)^2}{R^2}  + \frac{4 \gamma (l+\frac12)^2}{\pi R^3} + \mathcal{O} \left( \frac{\gamma^2}{R^4} \right) , \\
E_{d,l,R}^{\gamma}&=\frac{(l+\frac{d-1}{2})^2}{R^2}\\&
-\frac{2 (l+\frac{d-1}{2})^2
  \prod_{k=0}^{\frac{d-3}{2}}\big((l+\frac{d-1}{2})^2-k^2\big)}{(4\pi)^{\frac{d}{2}}\Gamma(\frac{d}{2})R^d \gamma \left( \frac{(l + \frac{d-1}{2})^2}{R^2} \right)}
  +\mathcal{O}(R^{2-2d+4\nu}),\quad \text{odd }d\geq3
;\notag \end{align}
 Note that $d=1$ is special.

Now consider the $l$th shifted eigenvalue $E_{d,l,R}^{\varepsilon,\eta}$ in the even-dimensional case. We have 
\begin{align}
E_{2,l,R}^{\varepsilon}&=\frac{(l+\frac12)^2}{R^2}+\frac{l+\frac12}{R^2 
\ln(R\e^{\varepsilon})}
+\mathcal{O}\Big(\frac{1}{R^2\ln^2 (R\e^{\varepsilon})}\Big),
\\E_{d,l,R}^{\varepsilon,\eta}&= \frac{(l+\frac{d-1}{2})^2}{R^2}\\
 & -\frac{2
  (l+\frac{d-1}{2})\prod_{j=0}^{\frac{d-4}{2}}\big((l+\frac{d-1}{2})^2-(j+\frac12)^2\big)}{(4\pi)^{\frac{d}{2}}\Gamma(\frac{d}{2})R^d \eta \left( \frac{(l + \frac{d-1}{2})^2}{R^2} \right)}
  + \mathcal{O}\Big( \ln(\ee^{\varepsilon} R) R^{-2d+2+4\nu}\Big),\quad
   \text{even }d\geq4.
   \notag
 \end{align}

Note that for $d\geq3$ we have a
systematic shift of the $l$th eigenvalue 
asymptotically proportional
to $\gamma(0)^{-1}$ resp. $\eta(0)^{-1}$ 
and inversely proportional to the volume of
$\SS^d$. The scaling of the shift with $R$ is changed if $\gamma(0)
=0$ resp. $\eta(0)=0$.

  In particular, for $d=3$ the $l=0$ eigenvalue moves up by
  \beq
  \approx-\frac{1}{|\SS^3|\gamma}=\frac{4\pi a}{|\SS^3|},\label{plpl}\eeq
  where $|\SS^3|$ is the volume of
  $\SS^3$. It is interesting to ask whether a similar formula is true for
  other compact manifolds.

\subsection{Comparison with the literature}

Explicit formulas for the Euclidean, hyperbolic and spherical Green's
functions in any dimension are known in the literature,  see e.g. 
\cite{CDT,Durand23}. In our presentation we made an effort to describe 
various facets of these Green's functions in a 
(hopefully) complete and transparent way. 
In particular, we use the Gegenbauer 
functions with the conventions of Appendix \ref{app:gegenbauer}, 
because they yield much simpler expressions than
the so-called associated Legendre functions, which are commonly found
in the literature
 \cite{CDT,Olver} in this context.

Point potentials in dimension $d=3$ go back to Fermi \cite{fermi}, and
since then have been  often used in the physics literature. Berezin and
Faddeev \cite{BF} seem to have been the first who interpreted them in a rigorous way.
They are the subject of an extensive mathematical literature, confer for example  
\cite{AGHH,AK}. Point potentials in dimension $d=1,2,3$ are special
cases of {\em singular perturbations}, that is, perturbations which
cannot be interpreted as operators. As we mentioned above, the case
$d=1$ can be interpreted as a form perturbation, so that one can use
the so-called KLMN Theorem \cite{RSII}. For $d=2,3$ the form technique
is not applicable, therefore in these dimensions point potentials
belong to the class of {\em form singular perturbations}.

The formula for the resolvent  of the form \eqref{eq:Gdgamma+} is
often called the {\em Krein formula} \cite{krein}. One can also find
the name {\em Aronszajn-Donoghue theory} for this kind of treatment of singular
rank one perturbations, see e.g. 
\cite{De17}.
\eqref{eq:Gdgamma+} is essentially the singular version of
the formula for the resolvent of an operator with a rank one  
perturbation, sometimes called
 the {\em 
  Sherman-Morrison formula}.

There exist a large literature about point potentials on $\rr^d$ for
$d\geq4$. These potentials are examples of {\em supersingular
  perturbations}. In order to interpret them as true linear operators
one needs to extend the Hilbert space by adding additional dimensions,
see e.g. \cite{KuPa}. This is reviewed in \cite{Ku}.
Note that our
approach is different: we do not look for the perturbed operator, we
try to compute Green's function associated with
point potentials in all dimensions.
It seems that the formulas
\eqref{d5}, \eqref{d4}  are new, at least in the hyperbolic and spherical case.

It is clear that point potentials can be
defined on a manifold of dimension 1,2,3, and have then similar
properties as on the Euclidean space. Nevertheless, we have never seen
their analysis on hyperbolic and spherical spaces including an
explicit formula for their resolvent. So we think that 
also the  identities \eqref{d2} and \eqref{d3} 
in the hyperbolic and spherical case are new.

The concept 
of a  generalized integral goes back to independent considerations of  
Hadamard \cite{Hadamard23, Hadamard32} and Riesz \cite{Riesz}.  The generalized integral is a linear extension of 
the integration functional to not necessarily integrable functions. 
It is closely related to the extension of homogeneous distributions \cite{Hoermander90}.
More recent accounts are given in \cite{Lesch97,Paycha}. 
In a parallel work \cite{DGR23a}, we revisited this concept in a~manner that is well-suited for  our applications.

The flat limit of hyperbolic and spherical Green's functions (without
point potentials) is discussed in \cite{CDT}. 
Note that the latter reference uses the \emph{associated Legendre 
equation} instead of the Gegenbauer equation. The two equations 
are equivalent. The relation between Gegenbauer functions in our 
convention and associated Legendre functions can be found in 
\cite{DGR23a}.

 Our 
results about the energy shift of eigenvalues of  the spherical Laplacian 
in dimensions $d\geq3$  seem to be new.  They are consistent with
the following known fact,   implicit   e.g.  in \cite{LSSY}:
in dimension $3$ in a large box the ground state energy of the 
Schr\"odinger Hamiltonian with a short range potential characterized 
by scattering length $a$ has the ground state energy 
$\approx\frac{4\pi a}{\mathrm{Vol}}$, where $\mathrm{Vol}$ is 
the volume of the box (compare with \eqref{plpl}).
This fact plays an important role in the well-known asymptotics of the
bound state energy of the $3$-dimensional $N$-body Bose gas.

The ground state energy of a dilute Bose gas as in dimensions $d\geq4$
was studied in \cite{Aaen}, where a similar asymptotics as for $d=3$ was obtained and analogs of the concept of the scattering length for higher  
dimensions proved useful. Our interest in point potentials in higher 
dimension was partially sparked by this paper.

Our paper extensively uses  results of the companion paper
\cite{DGR23a} by the same authors. In particular, we use the
conventions for special functions described in
\cite{DGR23a}. These conventions are also explained in
\cite{De1,De2}.

An important source of inspiration  for our paper is renormalization
in   Quantum Field
Theory. In fact, our paper applies the ideas of two distinct methods:
 the point splitting followed by differentiation in the energy, as
 well as dimensional regularization followed by  the minimal
subtraction. This is described e.g. in \cite{collins}.

\subsection{Strategy of the paper}
The study of the operators
$H_d$, $H_d^\h$ and $H_d^\s$ and their
point-like perturbations is carried out in Sections 
\ref{sec:Euclidean}, \ref{sec:hyperbolic} and \ref{sec:spherical}, 
respectively. In all three cases ($\rr^d, \hh^d$ and $\SS^d$), 
the general strategy to determine the family of renormalized Green's 
function is the same:
\begin{enumerate}
 \item We first describe (the well-known) 
Green's functions and spectral projections of the \emph{unperturbed} 
operators.
\item The subsequent analysis of point potentials greatly varies with the 
dimension. The computations in dimensions $d=1,2,3$ are straightforward and 
only employ the standard integral. These three dimensions are spelled out 
separately. Odd and even dimensions 
$d\geq 4$ need a more careful treatment.
We present two methods: the point-splitting method and the
minimal subtraction method. In particular, following the latter
approach we compute the self-energy derived from the generalized integral.
In the case of odd dimensions  
$d\geq5$, this integral is \emph{non-anomalous} and yields easily the
reference self-energy. In the case of even 
dimensions $d\geq4$, the generalized integral
is \emph{anomalous} and was computed in \cite{DGR23a}
via the method of dimensional regularization. We describe how to pass
from the minimal subtraction self-energy to the family of reference
self-energies that seem to be preferable.
\item In the hyperbolic and spherical cases, we describe the 
flat limit of the free and perturbed Green's functions. 
\item In the spherical case we discuss the poles of the perturbed Green's functions
  $G_d^{\s,\gamma}$  resp.   $G_d^{\s,\varepsilon,\eta}$.
\end{enumerate}

This paper has four appendices.
In Appendix \ref{app:genInts}, we briefly introduce the 
generalized integral and list its relevant properties.
Appendices \ref{app:bessel} 
and \ref{app:gegenbauer} contain basic information on Bessel resp. 
Gegenbauer functions as well as a collection of their bilinear generalized integrals, which are needed in our analysis of point-like perturbations. 
Moreover, Appendix \ref{app:gegenbauer} contains a list of the 
asymptotic behaviors of Gegenbauer functions, relevant 
in our description of the flat limit.
Both appendices
are based on our
parallel work \cite{DGR23a}, were the focus lies on  properties of the generalized integral and of Bessel and
Gegenbauer functions.
Finally, Appendix \ref{poch} contains some useful formulas related to
Pochhammer symbols and harmonic numbers.

\subsection{Notation for operators on Hilbert spaces}

The \emph{integral kernel} $A(x,y)$ of an operator $A$ on $L^2(\rr^d)$
is the function, or sometimes distribution, 
on $\rr^d\times\rr^d$, such that 
\begin{align}
(f|Ag)=\int\d x\int \d y \overline{f(x)}A(x,y)g(y).
\end{align}

More generally, one can also define the integral kernel of an operator
on $L^2(M,\d\mu)$, where $M$ is a manifold  with a measure $\d\mu$, and the scalar product is given by $\int \overline{f(x)}g(x)\d\mu(x)$.
Then we   define the integral kernel of an operator
$A$, also denoted $A(x,y)$,  by
\[(f|Ag)=\int\d \mu(x)\int \d\mu( y) \overline{f(x)}A(x,y)g(y).\]

Let $H$ be a self-adjoint operator. $\sigma(H)$ will denote its {\em spectrum}.
For $z\in\cc\backslash\sigma(H)$ we can define its {\em resolvent} $(-z+H)^{-1}$.
We will denote the {\em spectral projection} of $H$ corresponding to $a\in\sigma(H)$ by $\one_a(H)$. We will denote by $\one_{[a,b]}(H)$ the spectral projection of
$H$ corresponding to the closed interval $[a,b]$.

The spectral projections can be computed with help of the resolvent:
\begin{align}
  &\one_a(H)=-\slim_{\epsilon\searrow0}\i\epsilon(-a-\i\epsilon+H)^{-1}
,\label{stone0}\\
  &\one_{[a,b]}(H)-\frac12\big(\one_{a}(H)+\one_b(H)\big)
  \label{stone} \\
  \notag
  =& \slim_{\epsilon\searrow0} \int_a^b\frac{\d s}{2\pi\i} \big((-s-\i\epsilon+H)^{-1}-(-s+\i\epsilon+H)^{-1}\big).
\end{align}
$\slim$ denotes the limit in the strong operator topology. \eqref{stone} is called the {\em Stone formula}.

If $z\in\sigma(H)$, then $(-z+H)^{-1}$ is not well-defined. However,
in an appropriate topology  $(-z+H)^{-1}$ may have well-defined limits on the spectrum. Usually, the limit is different when we approach the spectrum from above and from below.
These limits are denoted by $(-z\mp \i0+H)^{-1}$.

Let $G(z):=(-z+H)^{-1}$ be the resolvent of a self-adjoint
operator $H$. Then 
\begin{align}\label{resol-}
  (H-z)G(z)=\one,\qquad
  G(z)^*=G(\bar z),\qquad
  \frac{\d}{\d z}G(z)=G(z)^2.
\end{align}

\section{Green's operators on Euclidean space}\label{sec:Euclidean}

\subsection{The Euclidean Laplacian}

As explained in introduction, in this subsection
we consider $H_d:=-\Delta_d$, where $\Delta_d$ is
the Laplacian on $L^2(\rr^d)$.
 We first recall the well-known   formulas for the integral kernel
of its resolvent
\beq G_d(z):=
(-z+H_d)^{-1}=(\beta^2+H_d)^{-1},\eeq
where we indicate two notations for the spectral parameter that we 
will use, $z=-\beta^2$. 
We also describe the integral kernel of $\pp_d(a,b)$, the spectral projection of
 $H_d$ onto $[a,b]$.
We express them in terms of various functions from the 
     Bessel family: $K_\alpha$, $J_\alpha$ and $H_\alpha^\pm$.   
     Relevant properties  of these functions are listed in Appendix \ref{app:bessel}. For
     completeness, we sketch a  proof of this theorem.

     \bet
     \begin{enumerate}\item
     For $\Re\beta>0$ we have
     \beq
     G_d(-\beta^2;x,x')
     =\frac{1}{(2\pi)^{\frac{d}{2}}}\Big(\frac\beta{|x-x'|}\Big)^{\frac{d}{2}-1}K_{\frac{d}{2}-1}
         \big(\beta|x-x'|\big). \label{eq:flat_resolvent}
         \eeq
\item      Green's function
         possesses limits on $]0,\infty[$ from above and below.
         For $\zeta\in\rr$, $\zeta>0$,  these limits are
     \beq
     G_d(\zeta^2\pm\i0;x,x')
     =\pm\frac{\i}{4}\Big(\frac{\zeta}{2\pi|x-x'|}\Big)^{\frac{d}{2}-1}
       H_{\frac{d}{2}-1}^\pm\big(\zeta|x-x'|\big).
       \label{besse1}  \eeq
     \item  For $d>2$, there exists also a limit at $z=0$:
       \beq
       G_d(0;x,x')=\frac{\Gamma(\frac{d}{2}-1)}{4\pi^\frac{d}{2}|x-x'|^{d-2}.}
       \eeq
\item       
         Finally,
         the integral  kernels of the spectral projections are
         \begin{align}
           \pp_d(a,b;x,x')
&=\int_{\sqrt{a}}^{\sqrt{b}}\Big(\frac{\zeta}{2\pi}\Big)^{\frac{d}{2}}       \frac{J_{\frac{d}{2}-1}\big(\zeta|x-x'|\big)}{|x-x'|^{\frac{d}{2}-1}}\d\zeta.          
 \label{besse2}      \end{align}
\end{enumerate}
\eet

\begin{proof}
Let $r:=|x-x'|$.
By Euclidean invariance, there exists a function $G_d(z,r)$ such that
$G_d(z;x,x')=G_d(z,r)$.  Away from $r=0$, we can write 
\begin{align}
0=  &\Big(-\partial_r^2-\frac{d-1}{r}\partial_r+\beta^2\Big)G_d(-\beta^2,r)\\
  =&r^{1-\frac{d}{2}}\Big(
     -\partial_r^2-\frac{1}{r}\partial_r+\frac{(\frac{d}{2}-1)^2}{r^2}+\beta^2\Big)r^{-1+\frac{d}{2}}G_d(-\beta^2,r). \nonumber
     \end{align}
     Then we find the solution vanishing at infinity and behaving for $r \to 0$ as
     \beq G_d(-\beta^2,r)\sim
     \begin{cases}
     \frac{\Gamma(\frac{d}{2}-1)}{4\pi^\frac{d}{2}}r^{2-d}, & d \neq 2, \\
      - \frac{1}{2 \pi} \ln r, & d = 2,
     \end{cases}  
     \eeq
     which implies the distributional differential equation 
     \begin{equation}
         (-\Delta_d - z) G_d(z;x,x')=\delta(x,x').
         \label{eq:Green_distr_eq}
     \end{equation}  
     Since $G_d(-z;x,x')$ is an integrable function of $x-x'$, i.e. 
     \begin{align}
      \int\Big|  G_d(-z;x,x') \Big| \dd x 
      =  \int \Big| G_d(-z;x,0) \Big| \dd x =:C,
     \end{align}
     Young's inequality for convolutions \cite[Theorem 4.2]{LiebLoss} implies 
     \begin{align}
      \Bigg|\!\Bigg| \int  G_d(-z;\cdot,x) f(x) \dd x\Bigg|\!\Bigg|_2 
      \leq C |\!| f |\!|_2, \quad f\in L^2(\rr^d).
     \end{align}
     Thus, $G_d(-z;x,x')$  is the integral kernel of a bounded operator, 
     and hence by \eqref{eq:Green_distr_eq} the integral kernel of the resolvent of $H_d$. This shows \eqref{eq:flat_resolvent}.

To derive the limits \eqref{besse1}, write $\beta=\beta_\RR+\i\beta_\II$ with
$\beta_\RR>0$. Then
\beq z=-(\beta_\RR+\i\beta_\II)^2=\big(|\beta_\II|-\i\beta_\RR\sgn(\beta_\II)\big)^2
\underset{\beta_\RR\searrow0}{\to}
|\beta_\II|^2-\i 0 \, \sgn(\beta_\II).\eeq
Hence, to get $z=\zeta^2\pm\i0$ we need to insert
$\beta=\mp\i (\zeta\pm\i0)$  with $\zeta>0$.

                 (\ref{besse2}) follows from   (\ref{besse1})
          by (\ref{stone}):
\beq\text{   lhs of      (\ref{besse2})}=\frac{1}{2\pi\i}\int_{\sqrt{a}}^{\sqrt{b}}2\zeta 
           \Big(G_d(\zeta^2+\i0;x,x')-G_d(\zeta^2-\i0;x,x')\Big) \d\zeta.\eeq 
\qed
\end{proof}

We remark that the function 
     \beq
   G_d(-\beta^2,r):
     =\frac{1}{(2\pi)^{\frac{d}{2}}}\Big(\frac\beta r\Big)^{\frac{d}{2}-1}K_{\frac{d}{2}-1}
         \big(\beta r\big). \label{eq:flat_resolvent.}
         \eeq
 is well-defined for
all  $d\in\cc$, and not only for positive integers. Note its symmetry:
\beq
G_{4-d}(-\beta^2,r)=\Big(\frac{\beta}{2\pi r} \Big)^{2-d}G_d(-\beta^2,r),
\eeq
coming from the symmetry of the Macdonald function $K_{\alpha}(z) = 
K_{-\alpha}(z)$ \cite{NIST}. We have also the homogenity relation
\begin{equation}
\label{eq:resolv_homog}
    G_d(-(\lambda \beta)^2, \lambda^{-1} r) = \lambda^{d-2} G_d(- \beta^2 , r),
\end{equation}
which is equivalent to the fact that $H_d$ is an operator homogeneous of degree $-2$.

Let us sum up the properties of Green's function for various $d$ (not
only positive integers). The behavior near zero is described by the 
power series (cf. \eqref{eq:BesselI_series}, \eqref{macdo1}): 
  \begin{subnumcases}{G_d(-\beta^2;r) =}
    \frac{1}{4\pi^{\frac{d}{2}}r^{d-2}}\sum_{k=0}^\infty\frac{(-1)^k\Gamma(\frac{d-2}{2}-k)}{k!}\big(\tfrac{\beta 
        r}{2}\big)^{2k}
\label{explicit1} \\ 
+\frac{\beta^{d-2}}{(4\pi)^{\frac{d}{2}}}\sum_{j=0}^\infty\frac{(-1)^j\Gamma(\frac{2-d}{2}-j)}{j!}\big(\tfrac{\beta 
                            r}{2}\big)^{2j}, 
                             &\quad $d\not\in 2\zz$; \notag

                             \vspace{18pt}
                            \\ \label{explicit2}
\frac{1}{4\pi^{\frac{d}{2}}r^{d-2}}
\sum_{k=0}^{\frac{d-4}{2}}\frac{(-1)^k(\frac{d-4}{2}-k)!}{k!}\big(\tfrac{\beta r}{2}\big)^{2k}
\\ 
  +\frac{\beta^{d-2}(-1)^{\frac{d}{2}}}{(4\pi)^{\frac{d}{2}}}
  \sum_{j=0}^\infty\frac{2\ln(\frac{\beta
    r}{2})+2\gamma_\mathrm{E}-H_j-H_{\frac{d-2}{2}+j}}{j!(\frac{d-2}2+j)!}
\big(\tfrac{\beta
                            r}{2}\big)^{2j}, & $d=2,4,6,\dots$.
                            \notag
   \end{subnumcases}

Note that the singular part of the first line of
\eqref{explicit1} has the same form as the first line of 
\eqref{explicit2}, and that for  $\Re d< 2+2n$ there exists the limit
\begin{equation}
\label{eq:limit_derivative_G}
    \lim_{x \to 0} \frac{\partial^n}{\partial z^n} G_d(z;x,0) = \frac{\Gamma(\frac{2-d}{2} +n)}{(4 \pi)^{\frac{d}{2}}}\beta^{d - 2 -2 n} . 
\end{equation}
The latter equation can be derived from the power series 
\eqref{explicit1} and \eqref{explicit2}. Note that there is no logarithmic  
singularity as $r\to0$ for $d=2,4,6,\dots$ because $2n>d-2$ and hence, 
at least one derivative must act on $\ln\big(\tfrac{\beta r}{2}\big)$.

There is an elementary formula for odd dimensions (cf. \eqref{eq:MacDonald_halfinteger}):
\begin{align}\label{explicit3}
  G_d(-\beta^2;x,x')&=\frac{1}{2(2\pi)^{\frac{d-1}{2}}}\Big(-\frac{1}{r}\partial_r\Big)^{\frac{d-3}{2}}\frac{\e^{-\beta
                   r}}{r},\quad d=3,5,\dots.
  \end{align}
For $\beta>0$ the behavior for large $r$ can be described by the
following asymptotic (divergent) series (cf. \eqref{asy}):
\begin{align}\label{explicit4}
  G_d(-\beta^2;x,x')&\simeq\frac{1}{2\beta}\Big(\frac{\beta}{2\pi r}\Big)^{\frac{d-1}{2}}
                      \e^{-\beta r}\sum_{j=0}^\infty
                      \frac{(\frac{d-1}{2}-j)_{2j}}
                      {j!(2\beta r)^j}.  \end{align}

\subsection{Point potentials on Euclidean space}
\label{ssc:Euclidean_ptpot}
Suppose that $H_d^\gamma$ is a self-adjoint extension of the 
restriction of $H_d$ to $C_\mathrm{c}^\infty(\rr^d\backslash\{0\})$. 
Consider its resolvent
\begin{align}
G_d^\gamma(z)=(H_d^\gamma+\beta^2)^{-1}=(H_d^\gamma-z)^{-1}.
\end{align}
By \eqref{resol-}, the integral kernel of $G_d^\gamma(z)$, denoted
$G_d^\gamma(z;x,x')$, satisfies
\begin{align}(-\Delta_x + \beta^2)  G_d^\gamma(z;x,x')&=\delta(x-x'), 
\label{eq:green_eqn} \qquad 
x,x'\neq0,         \\
   G_d^\gamma(z;x,x')&=   G_d^\gamma(z;x',x), \label{eq:green_sym} \\
\partial_z   G_d^\gamma(z;x,x')&= \int  G_d^\gamma(z;x,y)
 G_d^\gamma(z;y,x')\d y. \label{eq:green_res_id}
\end{align}
To solve these equations we make an ansatz 
\begin{align}
 \label{eq:ansatz_green_eqn}
  G_d^\gamma(z;x,x')&=   G_d(z;x,x')+ \frac{1}{\gamma(z)+\Sigma_d(z)} 
  G_d(z;x,0)G_d(z;0,x'),
\end{align}
which already incorporates the conditions \eqref{eq:green_eqn} and 
\eqref{eq:green_sym}. The denominator of the second term is split as 
$\gamma(z)+ \Sigma_d(z)$ because this expression will depend on some 
number of free parameters. We will fix $\Sigma_d(z)$ for every dimension 
and collect all free parameters in $\gamma(z)$.

\begin{remark}
Note that \eqref{eq:ansatz_green_eqn} describes a~spherically 
symmetric perturbation. In particular it excludes the $\delta'$ 
potential for $d=1$.
\end{remark}

Let us insert \eqref{eq:ansatz_green_eqn} into 
\eqref{eq:green_res_id} to determine $\Sigma_d(z)$. 
$G_d^\gamma(z;x,x')$ satisfies 
\eqref{eq:green_res_id} if 
\begin{equation}
  \frac{\d }{\d z} (\gamma(z)+\Sigma_d(z)) = -\sigma_d(z),
  \label{eq:Sigma_ODE}
\end{equation}
where
\beq
\sigma_d(z): =\int_{\rr^d} G_d(z;0,y) G_d(z;y,0)\d 
y =\frac{(\beta^2)^{\frac{d}{2}-1}2\pi^{\frac{d}{2}}}{(2\pi)^d
\Gamma(\frac{d}{2})}\int_0^\infty
     K_{\frac{d}{2}-1}(\beta r)^2 r\d r.
    \label{eq:sigma_def}
    \eeq

Note that
the rightmost integral in \eqref{eq:sigma_def} makes sense for complex 
$d$. It converges only for $|\Re(d-2)|<2$, which includes the dimensions
$d=1,2,3$ \cite{GR}: 
\begin{equation}
    \label{sigma0}
 \sigma_d(-\beta^2) = 
 \frac{\Gamma \left( \frac{4-d}{2} \right)}{(4\pi)^{\frac{d}{2}}}
 \beta^{d-4},\quad |\Re(d-2)|<2.
\end{equation} 
For these dimensions we take $\Sigma_d(z)$ to be a fixed 
anti-derivative of $\sigma_d(z)$. Then \eqref{eq:Sigma_ODE} 
says that $\gamma$ is a constant. $G_d^\gamma$ is the integral 
kernel of the resolvent of a closed operator $H_d^\gamma$, which 
is self-adjoint if $\gamma$ is real. One has $H_d=H_d^\infty$.

Below we propose how to define $\Sigma_d(z)$ in all dimensions. 
It will be seen that in contrast to $d=1,2,3$, it is natural to take 
$\gamma(z)$ to be a polynomial in $z$ of degree depending on $d$. 
Hence $G_d^\gamma$ depends on several parameters. For every choice of 
$\gamma(z)$, $G_d^\gamma(z;x,x')$ is a well-defined locally integrable 
function, but for $d\geq4$ it is not the integral kernel of a~bounded operator. It 
describes the asymptotic behavior of Green's function of a Laplacian 
with a perturbation of a very small support, as explained in a 
separate paper.  Let us discuss various dimensions separately. 

\paragraph{Dimension  $1$.}
We have $\sigma_1(-\beta^2)=\frac{1}{4\beta^3}$, so we define 
$\Sigma_1(-\beta^2)=-\frac{1}{2\beta}$, homogeneous and vanishing 
at infinity. We have 
\beq\label{d=1}
G_1^\gamma(-\beta^2;x,x')=\frac{\e^{-\beta|x-x'|}}{2\beta}
+\frac{\e^{-\beta|x|}\e^{-\beta|x'|}}{(2\beta)^2\big(\gamma-\frac{1}{2\beta}
\big)}
.\eeq 
Sometimes $a:=-2\gamma$ is called the {\em scattering length}. The operator
$H_1^\gamma$ is the perturbation of $H_1$ by the
quadratic form $\frac2{a}\delta(x)$. If $\gamma=0$, it is $-\Delta_d$ with 
Dirichlet boundary condition at $0$, and it is homogeneous of degree $-2$. 
Functions in the domain of $H_1^\gamma$ with $\gamma \neq 0$ have the leading 
singularity near zero proportional to $\frac{|x|}{a}-1$. For $a<0$ there exists 
a bound state $\e^{\frac{|x|}{a}}$ with eigenvalue $-\frac{1}{a^2}$.

\paragraph{Dimension  $2$.}
 The full self-energy is now
\begin{align} \label{eq:gamSig_d2}
\gamma+\Sigma_2(-\beta^2) = 
\gamma+\frac{\ln\beta}{2\pi}.\end{align}
 $\Sigma_2(-\beta^2)$ diverges both if $\beta\to 0$ and 
$\beta\to\infty$.  $\gamma$ is an arbitrary constant of integration.
 It is convenient to replace \eqref{eq:gamSig_d2} by a family of 
 reference self-energies introducing $\varepsilon:=-2\pi\gamma$ and
\beq
\Sigma_{2}^{\varepsilon}(-\beta^2):=\frac{1}{2\pi}(\ln\beta-\varepsilon).\eeq
Then $a:=\exp(2 \pi \gamma)=\e^{-\varepsilon}$ specifies a length scale, 
in the physics literature again  called the {\em  scattering length}.
We find
\beq
 G_2^\varepsilon(-\beta^2;x,x')
=\frac{K_0(\beta |x-x'|)}{2\pi} 
+\frac{K_0(\beta |x|)K_0(\beta |x'|)}{2\pi\ln(\beta \e^{-\varepsilon})}
.\label{resol1}
\eeq
 In contrast to $d=1$, the scattering 
length cannot be negative. The denominator of the second term of 
\eqref{resol1} can be rewritten as  $2 \pi \ln(\beta a)$. Functions 
in the domain of $H_2^\varepsilon$ behave near zero as
$\ln(\frac{|x|}{2a})+\gamma_\mathrm{E}$, where $\gamma_\mathrm{E}$ is 
the Euler-Mascheroni constant.
For all $a$ there is a~bound state $K_0 \big( \frac{|x|}{a} \big)$ with 
eigenvalue $-\frac{1}{a^{2}}$.

\paragraph{Dimension  $3$.} In this case we take 
$\Sigma_3(-\beta^2)=\frac{\beta}{4\pi}$, homogeneous and vanishing at $0$. 
We have
\beq\label{resol3}
G_3^\gamma(-\beta^2;x,x')=\frac{\e^{-\beta |x-x'|}}{4\pi|x-x'|}
+\frac{\e^{-\beta |x|}\e^{-\beta |x'|}}{(4\pi)^2 |x||x'| 
(\gamma+\frac{ \beta }{4\pi})}.\eeq

As in lower dimensions, one usually introduces 
the {\em scattering length}, now given by $a=-\frac{1}{4\pi\gamma}$.
The denominator of the second term of
\eqref{resol3} can be rewritten as $4\pi(\beta-\frac{1}{a})|x||x'|$. 
Functions in the domain of $H_3^\gamma$ behave as $1-\frac{a}{|x|}$ near $0$. Operator $H_3^0$ is homogeneous of degree $-2$. For $a>0$ there is a bound state $\frac{1}{|x|}\exp(-\frac{|x|}{a})$ with
eigenvalue $-\frac{1}{a^{2}}$.

\paragraph{Higher dimensions.}
  We will describe two methods of introducing self-energy in higher
dimensions. The first, in the physical terminology, is based on the
differentiation with respect to the energy and point splitting. First we rewrite
the definition of $\sigma_d$ from \eqref{eq:sigma_def} as
\begin{align}
  \sigma_d(z)&=\lim_{x\to0}\int G_d(z;x,y) G_d(z;y,0)\d 
y =\lim_{x\to0}\frac{\partial}{\partial z} G_d(z;x,0).\end{align}
             The limit on the right in general does not
             exist. However, if  $d<2+2n$ and we differentiate both
             sides $n$ times in $z$, then the limit becomes finite:
\begin{align}
    \sigma_d^{(n-1)}(z) 
& =  \lim_{x \to 0} \frac{\partial^n}{\partial z^n} G_d(z;x,0) = \frac{\Gamma(\frac{2-d}{2} +n)}{(4 \pi)^{\frac{d}{2}}}\beta^{d - 2 -2 n}, \label{eq:Sigma_nth_der_sep} 
\end{align}
where we inserted \eqref{eq:limit_derivative_G} in the last 
step.

We take $n$ to be the smallest integer greater than $\frac{d-2}{2}$ and choose some $\Sigma_d(z)$ satisfying
\begin{equation}
    \Sigma_d^{(n)}(z) = - \frac{\Gamma(\frac{2-d}{2} +n)}{(4 \pi)^{\frac{d}{2}}}\beta^{d - 2 -2 n}.
    \label{eq:Sigma_nth_der_expl}
\end{equation}
Then differentiating \eqref{eq:Sigma_ODE} $n-1$ times we find that $\gamma(z)$ is a polynomial of degree $n-1$.

The second method of defining the self-energy
yields  a concrete $\Sigma_d$ satisfying 
\eqref{eq:Sigma_nth_der_expl} for every dimension. To this end 
we define $\sigma_d(z)$  by replacing the Lebesgue integral in \eqref{eq:sigma_def} with the generalized integral $\gen \int_0^\infty$, 
which is defined in \eqref{gener}.
Then we can choose $\Sigma_d$ to be an  antiderivative 
of $-\sigma_d(z)$.

We will check that thus defined
$\Sigma_d$ satisfies \eqref{eq:Sigma_nth_der_expl} by explicit
computation, separately for odd and even dimensions.
 One can also see this by the following general argument.

 Consider
 \begin{equation}
\sigma_d(z) :
    = \frac{2\pi^{\frac{d}{2}}}{(2\pi)^d\Gamma(\frac{d}{2})} 
    \ge \int_0^\infty 
     \beta^{d-2}K_{\frac{d}{2}-1}(\beta r)^2 r\d r. 
\end{equation}
 Since the exponents of terms non-integrable near $0$ do not depend 
on $z$, one can check that
\begin{equation}
    \frac{\partial^{n-1}}{\partial z^{n-1}} \sigma_d(z) 
    = \frac{2\pi^{\frac{d}{2}}}{(2\pi)^d\Gamma(\frac{d}{2})} 
    \ge \int_0^\infty \frac{\partial^{n-1}}{\partial z^{n-1}} \Big(
     \beta^{d-2}K_{\frac{d}{2}-1}(\beta r)^2 \Big) r\d r.  \label{erer}
\end{equation} 
If $d < 2 +2n$, the generalized integral on the 
right-hand side converges in the classical sense, and we can write
\begin{align}
   \frac{\partial^{n-1}}{\partial z^{n-1}} \sigma_d(z) 
&    = \frac{2\pi^{\frac{d}{2}}}{(2\pi)^d\Gamma(\frac{d}{2})} 
\int_0^\infty \frac{\partial^{n-1}}{\partial z^{n-1}} \Big(
\beta^{d-2}K_{\frac{d}{2}-1}(\beta r)^2 \Big) r\d r\\
&    =  
\int \frac{\partial^{n-1}}{\partial z^{n-1}} \lim_{x\to0} \Big(
 G_d(z;x,y) G_d(z;y,0)\Big)\d y \nonumber \\
 &    =  
\int  \lim_{x\to0} \frac{\partial^{n-1}}{\partial z^{n-1}} \Big(
 G_d(z;x,y) G_d(z;y,0)\Big)\d y . \nonumber
\end{align}
Next, we rename $y=y_{n}$ and express the derivative in the last line as an $(n-1)$-fold integral using the resolvent identity repeatedly:
 \begin{align}
 \frac{\partial^{n-1}}{\partial z^{n-1}} \sigma_d(z)  &=\int \lim_{x \to 0} \Big( \int G_d(z;x,y_1) \cdots G_d(z;y_n,0) \d y_1 \cdots \d y_{n-1}  \Big) \d y_n.
 \end{align}
Since $G_d(z;x,y_1) \cdots G_d(z;y_n,0)$ is an integrable function of $y_1, \dots, y_n$, we may take the limit out of the integral.\footnote{If $f \in L^1(\rr^N)$ and $f_a(x):= f(x-a)$ for $a \in \rr^N$, then $f_a \to f$ strongly in $L^1(\rr^N)$ as $a \to 0$.} Therefore,
 \begin{align}
 \frac{\partial^{n-1}}{\partial z^{n-1}} \sigma_d(z)  &=\lim_{x \to 0} \int  G_d(z;x,y_1) \cdots G_d(z;y_n,0) \d y_1 \cdots  \d y_n.
 \end{align}
The integral can now be computed using the resolvent identity again. This shows that \eqref{eq:Sigma_nth_der_sep}, and hence \eqref{eq:Sigma_nth_der_expl}, is satisfied.

One could use other definitions of generalized integration, for 
example in other coordinates (cf.~\eqref{eq:change_of_var}). An 
inspection of formulas \eqref{explicit1}, \eqref{explicit2}, and 
\eqref{eq:gen_int_prime} shows that the resulting $\sigma_d(z)$ 
differs only by a polynomial of degree $n-2$, where $n$ is the 
smallest integer greater than $\frac{d-2}{2}$. Integrating to 
find $\Sigma_d(z)$ leads to another integration constant, and 
this accounts for the same freedom in the choice of $\Sigma_d(z)$ 
as suggested by \eqref{eq:Sigma_nth_der_expl}.

We note that the leading term of $\Sigma_d(z)$ for large $z$ is 
uniquely determined by either \eqref{eq:Sigma_nth_der_expl} or 
by calculation of generalized integrals. The term $\gamma(z)$ 
containing free parameters is of lower order for large $z$.

\paragraph{Odd dimensions $d \geq 5$.} If \eqref{eq:sigma_def} 
is understood as a generalized integral and $d$ is not an even 
integer, then expression \eqref{sigma0} remains valid. It is 
convenient to rewrite \eqref{sigma0} as 
\begin{align}
 \sigma_d(-\beta^2) =  \beta^{d-4}
 \frac{\pi}{(4\pi)^{\frac{d}{2}} \Gamma \left( \frac{d-2}{2} \right)
 \cos\big(\pi\tfrac{d-3}{2}\big)}
,\quad d\in\cc\setminus 2\zz.
\end{align}
Therefore we take
\beq
\Sigma_d(-\beta^2)= \beta^{d-2} \frac{\pi}{(4 \pi)^{\frac{d}{2}} 
\Gamma \left( \frac{d}{2} \right)  \cos\big(\pi\tfrac{d-3}{2}\big)},
\quad d\in\cc\setminus 2\zz,
\label{eq:Sigma_generic}
\eeq
homogeneous and vanishing at $0$.  Since 
the generalized integral is non-anomalous for $d\in\cc\setminus2\zz$, 
the same result is obtained using the generalized integral with the 
integration variable $\lambda r$ for any constant $\lambda >0$. 
Specifying $d$ to be an odd integer we obtain
\beq\label{oddo}
\Sigma_d(-\beta^2)
=(-\beta^2)^{\frac{d-3}{2}}\beta
\frac{
\pi
}{(4\pi)^{\frac{d}{2}}\Gamma(\tfrac{d}{2})}.
\eeq
The polynomial $\gamma(z)$ is of degree $\frac{d-3}{2}$, so it depends 
on $2$ parameters for $d=5$, on $3$ parameters for $d=7$ etc. For large 
$z$ it is subleading with respect to $\Sigma_d(z)$ by at least one power 
of $\beta$, and in contrast to $\Sigma_d$ it 
may contain even powers of $\beta$.
The function $G_d^0$ (i.e.~with $\gamma(z)=0$) satisfies the same 
homogeneity relation \eqref{eq:resolv_homog} as $G_d$.

\paragraph{Even dimensions $d \geq 4$.} 
The expression \eqref{sigma0} for $\sigma_d(-\beta^2)$
is not valid, even in the generalized sense because of the scaling anomaly. 
For $d=4,6,\dots$, we find (cf. App. \ref{app:bessel})
\begin{equation}
    \sigma_d(-\beta^2) 
    = - \frac{ (- \beta^2)^{\frac{d-4}{2}}}{(4 \pi)^{\frac{d}{2}}\Gamma 
    \left( \frac{d}{2} \right)} 
    \Big( 1 + (d-2)\left( \ln \tfrac{\beta}{2} +1 - \psi \left( \tfrac{d}{2}
    \right) \right)  \Big) ,
\end{equation}
and therefore the  self-energy given by the minimal
  subtraction method is
\begin{align}\label{even.}
\Sigma_d^\mathrm{ms}(-\beta^2) &=\frac{ (- \beta^2)^{\frac{d-2}{2}}}{(4
  \pi)^{\frac{d}{2}} \Gamma \left( \frac{d}{2} \right)} \left( 2 -
2  \psi \left( \tfrac{d}{2} \right) + \ln \tfrac{\beta^2}{4}  \right)  ,
\end{align}
to which  we can add a polynomial $\gamma$ of degree $\leq\frac{d-2}{2}$.
If a rescaled radial coordinate is used in the generalized integral, the 
result is shifted by a multiple of $(- \beta^2)^{\frac{d-2}{2}}$. Note that 
this power of $- \beta^2$ is also the leading term of $\gamma(- \beta^2)$,
but it is still subleading in the self-energy due to the 
presence of the logarithm in \eqref{even.}. Therefore, there are $2$ parameters for $d=4$, $3$ parameters for $d=6$ 
etc.

It is convenient to introduce a  scale-dependent
reference self-energy
\begin{align}
\Sigma_d^\varepsilon(-\beta^2):=
                                \frac{1}{(4 \pi)^{\frac{d}{2}}\Gamma(\tfrac{d}{2})}
                                (- \beta^2)^{\frac{d-2}{2}} (\ln\beta^2-2\varepsilon) ,
\end{align}
where $\varepsilon$ is used to absorb the highest term 
$\gamma_{\tfrac{d-2}{2}}\; (- \beta^2)^{\frac{d-2}{2}}$ in $\gamma$:
\begin{align}
-2 \varepsilon
=(4  \pi)^{\frac{d}{2}} \Gamma \left( \tfrac{d}{2} \right)
\gamma_{\frac{d-2}{2}}
+ 2 - 2 \psi \left( \tfrac{d}{2}\right)-\ln 4.
\end{align}
Thus we obtain a family of Green's functions
\begin{align}
 \label{eq:Gdgamma.}
  G_d^{\varepsilon,\eta}(z;x,x')&=   G_d(z;x,x')+ 
  \frac{1}{\eta(z)+\Sigma_d^{\varepsilon}(z)} G_d(z;x,x_0)G_d(z;x_0,x'), 
  \end{align}
where $\eta$ is an arbitrary polynomial of degree $\leq\frac{d-4}{2}$.

\begin{remark}[Scattering length in higher dimensions] Let $d \geq 3$ be odd.  If $x$ is small, 
$y$ large and $z=0$, then 
\begin{align}
  G_d^\gamma(0;x,y)\approx G_d(0;0,y)\Big(1+\frac{G_d(0;x,0)}{\gamma(0)}\Big)
  =G_d(0;0,y)\Big(1-\frac{a}{|x|^{d-2}}\Big),\label{krow}
  \end{align}
  where, following \cite[Appendix C]{LSSY}, we introduced the 
  scattering length
  \beq \label{length}
  a  :=-\frac{\Gamma(\frac{d}{2}-1)}{4\pi^{\frac{d}{2}}\gamma(0)}.\eeq
  If $d\geq4$ is even we can do the same, replacing $G_d^\gamma$ with
  $G_d^{\varepsilon,\eta}$ and $\gamma(0)$ with $\eta(0)$.

  Calling  $a$ a {\em length} is actually a misnomer, since now \eqref{length} does not have 
  the dimension of length, unlike for $d=1,2,3$. Moreover, \eqref{length} 
  is not consistent with the definition of $a$ for $d=1,2$.
\end{remark}

\section{Green's operators on hyperbolic space}
\label{sec:hyperbolic}
\subsection{Hyperbolic Laplacian}   
       The space $\rr^{1+d}$ equipped with the bilinear form
         \[[x|y]         =x^0y^{0}-x^1y^{1}-\dots-x^dy^{d}
         \]will be denoted $\rr^{1,d}$.
         The set
         \[\hh^d:=\{x\in\rr^{1,d}
         \mid[x|x]=1, \ x^0 > 0
                  \}\]
         equipped with the Riemannian metric inherited from $\rr^{1,d}$ is called
         the {\em hyperbolic space}.
         The geodesic distance between $x,x'\in\hh^d$ is given by 
         \beq
          d^{\h}(x,x')
         = \cosh^{-1}\big([x|x']\big),\qquad \cosh  d^{\h}(x,x')=[x|x'].
         \eeq
         $\hh^d$  has also a measure induced by the metric. 

In this section we study 
         \beq H_d^\h:=-\Delta_{d}^\h-\frac{(d-1)^2}{4},\eeq
         where
 $\Delta_{d}^\h$ is    the Laplacian on $L^2(\hh^d)$ induced by the metric. $H_d^\h$ is a self-adjoint operator. 
 For  $z\in \cc\backslash\sigma(H_d^\h)=\cc\backslash[0,\infty[$ we define the 
   {\em hyperbolic Green's operator} $ G_{d}^\h(z):= (-z+H_{d}^\h)^{-1}$. The spectral
   projection  of $H_d^\h$ onto $[a,b[\subset[0,\infty[$ is denoted 
   $\pp_{d}^\h(a,b)$.
   In the following theorem,  we express the integral kernels of
$G_{d}^\h(z)$ and    $\pp_{d}^\h(a,b)$ in terms of two 
kinds of Gegenbauer functions, $\bf{S}_{\alpha,\beta}$ and 
$\bf{Z}_{\alpha,\beta}$, which are defined in Appendix 
\ref{app:gegenbauer}. 
\bet
    \begin{enumerate}\item
For
  $  \Re\beta>0$ the integral kernel of $G_{d}^\h(-\beta^2)$ is
     \beq
     G_{d}^\h\Big(-\beta^2;x,x'\Big)
     =\frac{\sqrt\pi\Gamma(\frac{d-1}{2}+\beta)}{\sqrt2(2\pi)^{\frac{d}{2}}2^{\beta}
         }{\bf Z}_{\frac{d}{2}-1,\beta }                        \big([x|x']\big).\label{fad1}\eeq
\item
For $\zeta>0$, it has the following limits 
     \beq 
     G_{d}^\h(\zeta^2\pm\i0;x,x')
     =\frac{\sqrt\pi\Gamma(\frac{d-1}{2}\mp\i\zeta)}{\sqrt2(2\pi)^{\frac{d}{2}}2^{\mp\i\zeta }
         }{\bf Z}_{\frac{d}{2}-1,\mp\i\zeta }                        \big([x|x']\big).\label{fad2}\eeq
 \item 
 The integral  kernel of $\pp_{d}^\h(a,b)$ is
  \begin{align}
  \label{leg1}
  \pp_{d}^\h(a,b;x,x') 
  = \int_{\sqrt{a}}^{\sqrt{b}}
  \frac{2 \zeta \sinh(\pi\zeta)\,
            \Gamma(\tfrac{d-1}{2}+\i\zeta)
             \Gamma(\tfrac{d-1}{2}-\i\zeta)
           }
               {      \pi    (4\pi)^{\frac{d}{2}}}{\bf
           S}_{\frac{d}{2}-1,\i\zeta }
                        \big([x|x']\big)\d\zeta.
 \end{align}
 
\end{enumerate}
\eet

\begin{proof}
The isometry group of $\hh^d$ acts transitively on pairs $(x,x')$ with fixed $[x|x']$, so there exists a function $G_d^\h(z;w)$
such that
$G_{d}^\h(z;x,x')=G_d^\h(z;w)$, $w:=[x|x']$. We have
\begin{align}
0=& \left((1-w^2)\partial_w^2-dw\partial_w+\beta^2
-\big(\tfrac{d-1}{2}\big)^2\right)G_d^\h(-\beta^2,w)
  \end{align}
Near the diagonal the hyperbolic Green's function should have the same
asymptotics as Green's function of the Laplacian:
     \beq G_d^\h\big(-\beta^2,\cosh(r)\big)\sim 
     \begin{cases}
     \frac{\Gamma(\frac{d}{2}-1)}{4\pi^\frac{d}{2}}r^{2-d}, & d \neq 2, \\
     - \frac{1}{2 \pi} \ln r, & d = 2.
     \end{cases}  
     \eeq
     Besides,
it should vanish for $w\to\infty$. This fixes uniquely
$G_d^\h(-\beta^2,w)$  to be \eqref{fad1}. \eqref{fad2} follows immediately.
     
         To derive (\ref{leg1}) we note that by (\ref{stone})
\beq\text{lhs of (\ref{leg1})}
=           \frac{1}{2\pi\i}\int_{\sqrt{a}}^{\sqrt{b}}2\zeta
           \Big(G_{d}^\h\Big(\zeta^2+\i0;x,x'\Big)
           -G_{d}^\h\Big(\zeta^2-\i0;x,x' \Big) \Big) \d\zeta .
           \eeq 
           Then we use
           the following identity, which is a consequence of
           (\ref{formu1}):
\begin{align}
\label{formu}
& \frac{\sqrt\pi}{\sqrt2}
\Big(2^{\i\zeta}{\bf Z}_{\alpha,-\i\zeta}(z)\Gamma\Big(\frac12+\alpha-\i\zeta\Big)
- 2^{-\i\zeta}{\bf Z}_{\alpha,\i\zeta}(z)\Gamma\Big(\frac12+\alpha+\i\zeta\Big)\Big) \\
=&\i2^{-\alpha}  \sinh(\pi\zeta) 
\Gamma\Big(\frac12+\alpha+\i\zeta\Big)\Gamma\Big(\frac12+\alpha-\i\zeta\Big)
{\bf S}_{\alpha,\i\zeta}(z).
\notag
\end{align}
\qed
\end{proof}

We can view $G_d^\h(-\beta^2,w)$ as defined for all $d\in\cc$. It
satisfies the symmetry
\beq\label{hypersym}
G_{4-d}^\h(-\beta^2,\cosh r)=\frac{1}
{(\frac{3-d}{2}+\beta)_{d-2}(2\pi\sinh r)^{2-d}}G_d^\h(-\beta^2,\cosh r).\eeq
Here are explicit formulas for the hyperbolic Green's function
useful for small $r$. They follow from \eqref{fad1},
the connection formula \eqref{formu1}, \eqref{solu1}, \eqref{solu1_form2} and
$\frac{\cosh(r)-1}{2}=\sinh^2\frac{r}{2}$: 
\begin{subnumcases}{ G_d^\h(-\beta^2;r)=}
 \frac{1}{(4\pi)^{\frac{d}{2}}\sinh(\frac{ r}{2})^{d-2}}
 \sum_{k=0}^\infty 
   \frac{(-1)^k\Gamma(\frac{d-2}{2}-k)(\frac12+\beta-k)_{2k}
}{k!}\sinh^{2k}(\tfrac{r}{2}) \label{integg1}
\\
  +\frac{1}{(4\pi)^{\frac{d}{2}}}
 \sum_{j=0}^\infty\frac{(-1)^j(\frac{3-d}{2}+\beta-j)_{d-2+2j}
 \Gamma(\frac{2-d}{2}-j)}{j!}\sinh^{2j}(\tfrac{r}{2}), 
&            \hspace{-8ex}          $d\not\in 2\zz$; \notag 
 \vspace{18pt} \\ 
 \frac{1}{(4\pi)^{\frac{d}{2}}\sinh(\frac{r}{2})^{d-2}}
\sum_{k=0}^{\frac{d-4}{2}}
\frac{(\frac12+\beta-k)_{2k}(\frac{d-4}{2}-k)!  (-1)^k }{k!}
                      \sinh^{2k}\big(\tfrac{r}{2} \big)
\label{integ-} \\ \notag 
+\frac{ (-1)^{\frac{d-2}{2}}} {(4\pi)^{\frac{d}{2}}} 
\sum_{j=0}^\infty \frac{(\frac{3-d}{2}
+\beta-j)_{d-2+2j}}{j!(j+\frac{d-2}{2})!}\sinh^{2j}\big(\tfrac{r}{2}\big)\Big(
 H_{\tfrac{d-2}{2}+j}+H_j-2\gamma_\mathrm{E}
 \\ \notag 
 - \psi(\tfrac{d-1}{2} + \beta+j) 
- \psi(\tfrac{3-d}{2} +\beta - j) 
- \ln\big( \sinh^{2}(\tfrac{r}{2}) \big) \Big), 
&\hspace{-8ex} $d=2,4,\dots$.
\end{subnumcases}
 From this one easily gets for  $2+2n >\Re d $
\begin{align} \label{eq:Gh_diagonal}
    \lim_{x \to x'} &\frac{\partial^n}{\partial z^n} G_d^{\h}(z ; x,x')  = \frac{\partial^n}{\partial z^n} 
    \frac{\Gamma(\frac{2-d}{2}) (\frac{3-d}{2}+\sqrt{-z})_{d-2}}{(4 \pi)^{\frac{d}{2}}}
                      \quad  d \not \in 2 \zz, \\
     = & \frac{\partial^n}{\partial z^n} \frac{(-1)^{\frac{d}{2}}(\frac{3-d}{2}+\sqrt{-z})_{d-2}}{(4
         \pi)^{\frac{d}{2}}\Gamma (\tfrac{d}{2})} (\psi
         (\tfrac{d-1}{2} + \sqrt{-z}) + \psi(\tfrac{3-d}{2} +
         \sqrt{-z}))
        , \quad d= 2,4,6,\dots  \nonumber
\end{align}

For odd dimensions we have an expression in terms of elementary functions:
\begin{align}
 G_d^\h(-\beta^2;x,x') &= 
 \frac{1}{2(2\pi)^{\tfrac{d-1}{2}}}
 \Bigg(-\frac{1}{\sinh {r}}\partial_r
 \Bigg)^{\tfrac{d-3}{2}}
 \frac{\e^{-\beta {r}}}{\sinh {r}} ,\qquad d=3,5,\dots\label{forodd}
\end{align}
To describe the behavior near infinity we use the expansions
\eqref{solu3} respectively \eqref{solu3_form2}  as well as 
$\frac{\cosh(r)+1}{2}=\cosh^2(\frac{r}{2})$ and 
$\frac{\cosh(r)-1}{2}=\sinh^2(\frac{r}{2})$: 
\begin{align}\label{integg4}
 & G_d^\h(-\beta^2;x,x')=\frac{1}{(4\pi)^{\frac{d}{2}}}\sum_{j=0}^\infty
                      \frac{\Gamma(\frac12+\beta+j)\Gamma(\frac{d-1}{2}+\beta+j)}{ j!\Gamma(1+2\beta+j)
                      \big( \cosh(\frac{r}{2})\big)^{2j+d-1+2\beta}}
 \\ \notag
  &=\frac{\Gamma\big(\tfrac{d-1}{2}+\beta\big) 
  }{ (4\pi)^{\frac{d}{2}}
 \big( \sinh(\frac{r}{2})\big)^{d-2}}
\sum_{j=0}^\infty\frac{
\Gamma\big(\frac12+\beta+j\big)
\big(-\tfrac{d+1}{2}+\beta\big)_j
}{j! \Gamma(1+2\beta+j)\big( \cosh(\frac{r}{2})\big)^{2j+1+2\beta}} .
                      \end{align}
Note that \eqref{integg1}, \eqref{integ-}, \eqref{forodd} and
\eqref{integg4} are the analogs of \eqref{explicit1}, \eqref{explicit2},
\eqref{explicit3} and \eqref{explicit4}.

\subsection{Point potentials on hyperbolic space}

We fix the point $x_0:=(1,0,\dots,0)$ in $\hh^d$.
Green's function of the hyperbolic Laplacian with a~point-like
potential located at $x_0$ has the form
\begin{align}
 \label{eq:Gdgamma-h}
  G_d^{\h,\gamma}(z;x,x')&=   G_d^\h(z;x,x')+ 
  \frac{1}{\gamma(z)+\Sigma_d^{\h}(z)} G_d^\h(z;x,x_0)G_d^\h(z;x_0,x'),
  \end{align}
where\footnote{ Note that we choose the integration 
variable $2w$ in \eqref{integral1}. This choice is only important when 
replacing the standard integral by the anomalous generalized integral, 
which is needed in even $d\geq4$ and has the scaling anomaly 
\eqref{eq:genInt_scaling}. Using $2w$ instead of $w$ ensures that 
the reference self-energy given by the generalized integral 
is asymptotic to the reference self-energy given by the flat generalized 
integral in any dimension -- c.f.~Subsection \ref{ssc:flat_limit_hyp}.}
\begin{align}
-\frac{\dd}{\dd z}
(\gamma(z)+\Sigma_d^\h(z)) =&
 \label{eq:sigma_hyper_generic-h}
  \sigma_d^\h(z):=\int_{\hh^d}G_d^\h(z;x_0,x)^2\d x
  \\ \notag
  =&\int_{1}^\infty
    G_d^\h(z,w)^2|\SS^{d-1}|(w^2-1)^{\frac{d}{2}-1}\d w
  \\ 
 =& \frac{\pi \Gamma\big(\tfrac{d-1}{2}+\beta\big)^2
    }{ 2^{2\beta+1}(4\pi)^{\tfrac{d}{2}} \Gamma\big(\tfrac{d}{2}\big)}
   \int_{2}^\infty
    {\bf Z}_{\tfrac{d}{2}-1,\beta}(w)^2(w^2-1)^{\frac{d}{2}-1}\d2 w.\label{integral1}
\end{align}
The integrand of \eqref{integral1} is well-defined
for any complex $d$.  As in the flat case, the integral is convergent 
 only for $|\Re(d-2)|<2$, which includes the dimensions
$d=1,2,3$: 
\begin{align} \label{integral}
  \sigma_d^\h(-\beta^2)
  =\begin{cases} 
    \dfrac{\pi (\frac{3-d}{2}+\beta)_{d-2}H_{d-2}(\frac{3-d}{2}+\beta)
    }{ (4\pi)^{\tfrac{d}{2}} \,2\beta\Gamma(\frac{d}{2})
    \,\sin\big(\pi\tfrac{d}{2}\big) },
    &\quad    |\Re d-2|<2,\;d\neq2;  \vspace{18pt} \\
   \dfrac{\psi'(\frac12+\beta)}{4\pi\beta}, &\quad d=2.
   \end{cases}
\end{align}
In terms of $z=-\beta^2$ these formulas can be conveniently
rewritten as 
\begin{align}
  \sigma_d^\h(z)
  =\begin{cases}
    \dfrac{\pi\, \partial_z\, 
    (\frac{3-d}{2}+\sqrt{-z})_{d-2}}{ (4\pi)^{\tfrac{d}{2}}  \Gamma\big(\tfrac{d}{2}\big)
    \,\sin\big(\pi\tfrac{d}{2}\big)}    , &\quad 
 |\Re d-2|<2,\;d\neq2;\label{integral-1}  \vspace{18pt} \\
  -\dfrac{1}{2\pi}\partial_z \psi(\tfrac12+\sqrt{-z}), 
  &\quad d=2.
   \end{cases}
\end{align}

Based on the same arguments as in the Euclidean case, we define $\sigma_d^\h(z)$ for higher dimensions as a generalized integral, choose an anti-derivative $\Sigma_d^\h$ and let $\gamma(z)$ be a polynomial of degree the smallest integer greater than $\frac{d-4}{2}$.
As in the Euclidean case, this $\Sigma^\h_d$ satisfies for $n$ large enough
\begin{equation}
     \frac{\partial^n}{ \partial z^n} \Sigma_d^\h(z) = -  \lim_{x' \to x} \frac{\partial^n}{ \partial z^n}  G_d^\h(z;x,x'),
\end{equation}
and the right hand side was computed in \eqref{eq:Gh_diagonal}.
For the following discussion of special cases, we introduce the notation 
$[x|x']=\cosh {r}$, $[x|x_0]=\cosh \theta $ and $[x'|x_0]=\cosh \theta'$.

\paragraph{Dimension  $1$.}
We have 
\begin{align}
 {\bf Z}_{-\tfrac12,\beta}(\cosh {r})
 &= \frac{2^\beta}{\Gamma(1+\beta)} \e^{-\beta {r}} 
 \quad\textup{and}\quad
 G_1^\h(-\beta^2;x,x') = \frac{\e^{-\beta {r}}}{2\beta}.
\end{align}
Moreover, \eqref{integral-1} gives 
\beq
\sigma_1^\h(z)
= \frac{1}{2} \partial_z
\frac{1}{\sqrt{-z}}. \eeq
Imposing $\Sigma_1^\h(-\infty)=0$, we find 
\beq \Sigma_1^\h(-\beta^2) = -\frac{1}{2\beta},\eeq
which coincides with the Euclidean $\Sigma_1(-\beta^2)$. Thus, 
\begin{align}
 G_1^{\h,\gamma}(-\beta^2;x,x') 
 = \frac{\e^{-\beta {r}}}{2\beta} 
 + \frac{\e^{-\beta \theta } \e^{-\beta \theta '}
 }{(2\beta)^2 \big(\gamma -\tfrac{1}{2\beta}\big)},
\end{align}
and we obtain the same expression as in the Euclidean case, in accord with the isometry $\hh^1 \cong \rr^1$.

\paragraph{Dimension $2$.}
We have
\begin{align}
G_2^\h(-\beta^2;x,x') 
  = \frac{\Gamma\big(\tfrac12+\beta\big)}{\sqrt{2\pi} 2^{\beta+1}}
    {\bf Z}_{0,\beta}(\cosh {r}).
\end{align}
From \eqref{integral-1} we obtain a family of self-energies depending
on a parameter $\varepsilon:=-2\pi\gamma$:
\begin{align}
\Sigma_2^{\h,\varepsilon}(-\beta^2)
=\frac{1}{2\pi} \big( \psi(\tfrac12+\beta)-\varepsilon\big) .
\end{align}  
Thus
\begin{align}
 G_2^{\h,\varepsilon}(-\beta^2;x,x') 
 =& \frac{\Gamma\big(\tfrac12+\beta\big)}{\sqrt{2\pi} 2^{\beta+1}}
    {\bf Z}_{0,\beta}(\cosh {r})\\
  &    +\frac{\Gamma\big(\tfrac12+\beta\big)^2}{2^{2\beta+2}}
    \;\frac{{\bf Z}_{0,\beta}(\cosh \theta ) {\bf Z}_{0,\beta}(\cosh \theta ')
    }{\psi(\tfrac12+\beta)
-\varepsilon    }. \nonumber
\end{align}

\paragraph{Dimension  $3$.} 
We have 
\begin{align}
 {\bf Z}_{\tfrac12,\beta}(\cosh {r})
 &=  \frac{2^\beta}{\Gamma(1+\beta)} \frac{\e^{-\beta {r}}}{\sinh {r}}
 \quad\textup{and}\quad
 G_3^\h(-\beta^2;x,x') = \frac{\e^{-\beta {r}}}{4\pi\,\sinh {r}}
 .
\end{align}
Moreover, \eqref{integral-1} 
gives \beq\sigma_3^\h(z) = -\tfrac{1}{4\pi} \partial_z \sqrt{-z},\eeq
so that imposing $\Sigma_3^\h(0)=0$ yields
\beq\Sigma_3^\h(-\beta^2) = \frac{\beta}{4\pi}.\eeq
In dimension 3, the hyperbolic self-energy equals the 
Euclidean self-energy: 
$\Sigma_3^\h(-\beta^2)=\Sigma_3(-\beta^2)$.
However, Green's function is different:
\begin{align}
 G_3^{\h,\gamma}(-\beta^2;x,x') 
 = \frac{\e^{-\beta {r}}}{4\pi\,\sinh {r}}
  + \frac{\e^{-\beta \theta }\e^{-\beta \theta '}}{(4\pi)^2\,\sinh \theta \,\sinh \theta ' 
  \big( \gamma +\tfrac{\beta}{4\pi} \big)}.
\end{align}

\paragraph{Odd dimensions $d\geq5$.}
The identities \eqref{integral} and \eqref{integral-1}
remain valid for $d\in\cc\backslash 2\zz$ if the integrals
are interpreted in the generalized sense. Therefore, we can set
\begin{align}
  \Sigma_d^\h(-\beta^2)
  =& -
    \frac{\pi\,
    (\frac{3-d}{2}+\beta)_{d-2} }{ (4\pi)^{\tfrac{d}{2}}  \Gamma\big(\tfrac{d}{2}\big)
     \,\sin\big(\pi\tfrac{d}{2}\big)}
     .\label{integral-1.}
\end{align}
If $d$ is an odd integer, we 
can rewrite \eqref{integral-1.} as
\begin{align}
 \Sigma_d^\h(-\beta^2) 
 &=
   \frac{\pi }{ (4\pi)^{\tfrac{d}{2}} \, 
 \Gamma\big(\tfrac{d}{2}\big)}
 \beta \prod_{j=1}^{\tfrac{d-3}{2}} \big(-\beta^2 + j^2\big). \label{forodd1}
\end{align}
Inserting \eqref{forodd} and \eqref{forodd1} 
into \eqref{eq:Gdgamma-h} we obtain an expression for
$  G_d^{\h,\gamma}(z;x,x')$.

\paragraph{Even dimensions  $d\geq4$.}

In this case the formula \eqref{integral-1.} is not applicable. 
Instead we introduce a family of reference self-energies 
parametrized by $\varepsilon\in\rr$: 
 \begin{align}\label{integ---} 
 \Sigma_{d}^{\h,\varepsilon}(-\beta^2):=   
 \frac{\psi\big(\tfrac{3-d}{2}+\beta\big) 
  +\psi\big(\tfrac{d-1}{2}+\beta\big) 
  -2\varepsilon
    }{ (4\pi)^{\tfrac{d}{2}} \Gamma\big(\tfrac{d}{2}\big)} \;
\prod_{j=0}^{\tfrac{d-4}{2}} 
 \Big(-\beta^2+\big(\tfrac12 + j\big)^2\Big).
   \end{align}
   We obtain a family of Green's function
\begin{align}
 \label{eq:Gdgamma-h.}
  G_d^{\h,\varepsilon,\eta}(z;x,x')&=   G_d^\h(z;x,x')+ 
  \frac{G_d^\h(z;x,x_0)G_d^\h(z;x_0,x')}{\eta(z)+\Sigma_d^{\h,\varepsilon}(z)} , 
  \end{align}
where $\deg\eta\leq\frac{d-4}{2}$.

Let us derive \eqref{integ---}   from 
the integral \eqref{integral1}.  For $d\in2\zz$, $d>2$, it has to 
be understood in the generalized anomalous sense, and is not equal 
to \eqref{integral}. Instead, the anomalous integral 
given by \eqref{eq:genintZZ_limit.} yields
\begin{align}\notag
  \sigma_d^\h(-\beta^2)=&\frac{(-1)^{\frac{d}{2}-1}(\frac{3-d}{2}+\beta)_{d-2}}
  {(4\pi)^{\frac{d}{2}}\Gamma(\frac{d}{2})
  }\Bigg(\tfrac{\psi'(\frac{3-d}{2}+\beta)+\psi'(\frac{d-1}{2}+\beta)}{2\beta}\\\notag
&\qquad
+\frac{H_{\frac{d-2}{2}}(\frac12-\beta)-H_{\frac{d-2}{2}}(\frac12+\beta)}{2\beta} 
\ln 4 \\&+\sum_{k=0}^{\frac{d-4}{2}}\frac{\psi(\frac32+k+\beta)+\psi(-\frac12-k+\beta)-\psi(\frac{d}{2}-1-k)-\psi(1+k)}{\beta^2-(\frac12+k)^2}\Bigg).
                 \label{qws1}
\end{align}
We notice that $\partial_z = -\frac{1}{2\beta}\partial_\beta$ and 
$\partial_z (z)_k = H_k(z)(z)_k$. Using this, the Leibniz 
rule for $\partial_z$ and identities satisfied by the Pochhammer symbol, harmonic numbers and the digamma function (see Appendix \ref{poch}) yields
\begin{align}\notag
  \sigma_d^\h(z)
  =& \frac{ (-1)^{\tfrac{d}{2}} 
    }{ (4\pi)^{\tfrac{d}{2}} \Gamma\big(\tfrac{d}{2}\big)}
     \partial_z \Bigg(
\Big(
  \psi\big(\tfrac{3-d}{2}+\sqrt{-z}\big)
  +\psi\big(\tfrac{d-1}{2}+\sqrt{-z}\big)
  -\ln 4 \Big)
\prod_{j=0}^{\tfrac{d-4}{2}} 
 \Big(-z-\big(\tfrac12 + j\big)^2\Big) \Bigg)
  \\\label{even11}
&+ \frac{ 1
    }{ (4\pi)^{\tfrac{d}{2}} \Gamma\big(\tfrac{d}{2}\big)}
\pi_d^\h(z),\end{align}
where $\pi_d^\h(z)$ is the polynomial of degree
$\frac{d-4}{2}$ defined by
\begin{align} \notag
  \pi_d^\h(z)&=
\Bigg(\sum_{k=0}^{\tfrac{d-4}{2}} 
\frac{\psi\big(\tfrac{d-2}{2}-k\big) +\psi(1+k)}{z+\big(\tfrac12+k\big)^2} 
  \\ \notag &
   + \sum_{k=0}^{\tfrac{d-4}{2}} \sum_{l=k+1}^{\tfrac{d-4}{2}}
   \frac { 2l+1 }{
     \Big(z+\big(\tfrac12+k\big)^2\Big)
              \Big(z+\big(\tfrac12+l\big)^2\Big)}  \Bigg)
              \prod_{j=0}^{\tfrac{d-4}{2}} 
 \Big(z+\big(\tfrac12 + j\big)^2\Big).
\end{align}

Therefore, the following function is an antiderivative of minus
\eqref{even11} and is a possible self-energy:
\begin{align}
\Sigma_d^{\h,\mathrm{ms}}(-\beta^2)&=
 \frac{  \psi\big(\tfrac{3-d}{2}+\beta\big)
  +\psi\big(\tfrac{d-1}{2}+\beta\big)
  -\ln 4 
    }{ (4\pi)^{\tfrac{d}{2}} \Gamma\big(\tfrac{d}{2}\big)}
                       \notag
\prod_{j=0}^{\tfrac{d-4}{2}} 
 \Big(-\beta^2+\big(\tfrac12 + j\big)^2\Big)\\&  +  \frac{ 1
    }{ (4\pi)^{\tfrac{d}{2}} \Gamma\big(\tfrac{d}{2}\big)}
     \Pi_d^\h(-\beta^2), \label{insero1}\end{align}
   where
\begin{align}
\label{polypi1}
\Pi_d^\h(z):=-\int_0^{z}\pi_d^\h(\tau)\d\tau,
\end{align} 
which is a polynomial of degree $\frac{d-2}2$  with 
$\Pi_d^\h(0)=0$.
 \eqref{insero1} will be called the {\em reference  
self-energy based on minimal subtraction}. (The superscript $\mathrm{ms}$ stands for the ``minimal subtraction''). 

 As in the Euclidean case, we can add to \eqref{insero1} an arbitrary 
 polynomial 
$\gamma(-\beta^2)$ of degree $\leq\frac{d-2}2$.
Let $\frac{\ln 4-2\varepsilon}{(4\pi)^\frac{d}{2}\Gamma(\frac{d}{2})}$ be the coefficient at the term
$z^{\frac{d-2}{2}}$ of $\frac{\Pi_d^\h(z)}{(4\pi)^\frac{d}{2}\Gamma(\frac{d}{2})}+\gamma(z)$. Then we can
write
\beq
\Sigma_d^{\h,\mathrm{ms}}(-\beta^2)+
\gamma(-\beta^2)=
\Sigma_d^{\h,\varepsilon}(-\beta^2)+
\eta(-\beta^2),\eeq
where $\eta$ is a polynomial of degree $\leq\frac{d-4}{2}$ and 
$\Sigma_d^{\h,\varepsilon}$ was introduced in
\eqref{integ---}. We prefer the latter as the family of reference
self-energies because of the factorized form.

\begin{remark}
We have two proposals
for the reference self-energy for even $d\geq4$:
$\Sigma_d^{\h,\varepsilon}$, $\varepsilon\in\rr$, in \eqref{integ---} and
$\Sigma_d^{\h,\mathrm{ms}}$ in \eqref{insero1}.
We choose the former as the standard
one, because of its simplicity. In particular, it is factorized, which allows to
determine easily its zeros responsible for singularities of the
corresponding Green's functions.

However $\Sigma_d^{\h,\mathrm{ms}}$ is also in some sense special. It is obtained
with the help of the generalized integral, a concept closely related
to the minimal subtraction method in QFT. One can criticize it saying
that because of anomaly it
depends on the choice of the integration variable, which in the
hyperbolic case is chosen to be $2(w-1)=2(\cosh r-1)$. However, this is actually a
natural variable. It is closely related to the family of
conformal transformations, which are best expressed in the variable  $w$:
         \begin{equation}
             \phi_\lambda(w) = \frac{\lambda^{-1}(w+1)+\lambda (w-1)}{\lambda^{-1} (w+1) - \lambda (w-1)}.
             \label{popi} \end{equation}
$\phi_\lambda$ form a 1-parameter group
 $\phi_\lambda \circ \phi_\mu = \phi_{\lambda \mu}$ and 
$\phi_1 = \mathrm{id}$.

As outlined in  the introduction, Green's functions that we introduce
most likely describe the asymptotics of the resolvent of the Laplacian
with a  perturbation supported in a shrinking region. We expect that
fine-tuning the perturbation we should be able to see Green's
functions corresponding to various $\varepsilon,\eta$. One can ask the
question whether the self-energy
$\Sigma_d^{\h,\mathrm{ms}}$ (based on the generalized integral) is
distinguished and obtained by a special way of shrinking the
perturbation? As of now, we do not know.
 \end{remark}

\paragraph{Spectral properties.} Green's functions $G_{d}^{\h,\gamma}(z)$ 
and $G_d^{\gamma}(z)$ have a cut at $z\in[0,\infty[$, which in dimensions $1,2,3$ corresponds
to the continuous spectrum of $H_{d}^{\h,\gamma}$ and
$H_{d}^{\gamma}$. There may be also some
singularities outside of $[0,\infty[$, which in the hyperbolic case, apart from dimension
$d=1,3$, have a more
complicated structure  than in the flat case, because 
 the logarithm is replaced by the digamma function. Note that
in dimensions $d=1,3$ the poles are exactly the same as in the flat
case, because
\begin{align}
 &\Sigma_{1}^\h(-\beta^2)=\Sigma_1(-\beta^2)
 \quad\text{and}\quad 
 \Sigma_{3}^\h(-\beta^2)
=\Sigma_3(-\beta^2).
\end{align}

\subsection{Flat limit of the hyperbolic Laplacian}
\label{ssc:flat_limit_hyp}
           Let $R>0$.
 Instead of the  hyperbolic space of curvature $-1$, we can use its
 scaled version of curvature $-\frac{1}{R^2}$:
 \[\hh_R^d:=\{x\in\rr^{1,d}
 \mid[x|x]=R^2
 \}.\]
 We can introduce various objects from the previous subsection
 corresponding to $\hh_R^d$, which will be distinguished by the
 subscript $R$.
         Clearly,  $\hh_1^d=\hh^d\ni x\mapsto Rx\in\hh_R^d$ is a bijection
         and
         \[ d^{\h}_{  R}(Rx,Rx')=R \, d^{\h}(x,x').\]
  The map $U_R:L^2(\hh^d)\to L^2(\hh_R^d)$ given by
         \[U_Rf(x):=R^{-\frac{d}{2}}f\Big(\frac{x}{R}\Big)\]
         is unitary.  If $K(x,x')$ is the integral kernel of $K$ on
         $L^2(\hh^d)$ and $K_R(x,x')$ 
         is the integral kernel of $U_RKU_R^{-1}$ on $L^2(\hh_R^d)$, then
         \beq
         K_R(x,x')=R^{-d}K\Big(\frac{x}{R},\frac{x'}{R}\Big).\eeq
       The   hyperbolic Laplacian on $L^2(\hh_R^d)$ is 
               \beq\Delta_{d,R}^\h=\frac{1}{R^2}U_R\Delta_d^\h U_R^{-1}.\eeq
We set
         \beq H_{d,R}^\h:=-\Delta_{d,R}^\h-\frac{(d-1)^2}{4R^2},\eeq
so that $\sigma(H_{d,R}^\h)=[0,\infty[$.
    For  $z\in\cc\backslash[0,\infty[$ we set
        \begin{align}
        G_{d,R}^\h(z)&:=
                       (-z+H_{d,R}^\h)^{-1}, \label{eq:free_green_scaled} \\
          \pp_{d,R}^\h(a,b)&:=           \one_{[a,b]}(H_{d,R}^\h).
    \end{align}
Note that 
\begin{align}
     G_{d,R}^\h(-\beta^2;x,x')&=R^{-d+2}G_d^\h\Big(-(\beta R)^2;\frac{x}{R},\frac{x'}{R}\Big), \label{eq:Green_scaling} \\    
           \pp_{d,R}^\h(
           a,b;x,x') &=R^{-d}
    \pp_d^\h\Big(aR^2,
bR^2;\frac{x}{R},\frac{x'}{R}\Big).
\end{align}
Proceeding as for $R =1$, we introduce
\begin{align}
 G_{d,R}^{\h,\gamma}(z;x,x')
=   
  G_{d,R}^\h(z;x,x') + 
  \frac{ G_{d,R}^\h\big(z;x,R x_{0}\big)G_{d,R}^\h\big(z;R x_{0},x'\big)  }{\gamma(z) +\Sigma_{d,R}^\h(z)}, \label{eq:scaled_perturbed_Green_def}
\end{align}
where $\gamma(z)$ is a polynomial (of degree as for $R=1$), and $\Sigma_{d,R}^{\h}$ is a particular solution of 
\begin{equation}
    - \frac{\dd}{\dd z}  \Sigma_{d,R}^{\h}(z) = \sigma_{d , R}^{\h}(z):= \int_{\hh^d_R} G_{d,R}^{\h}(z; x, Rx_0)^2 \dd x.
    \label{eq:Sigma_der_R}
\end{equation}
For dimensions $d$ for which the integral does not converge, we integrate over angles and then compute the generalized integral with respect to the radial coordinate $2Rw_R =R^2 [ \frac{x}{R} | x_0 ]$:
\begin{align}
\sigma^{\h}_{d,R}(z) &= 
\gen \int_{2R^2}^\infty |\SS^{d-1}| G_{d,R}^\h(z;x,R x_0)^2 (w_R^2 - R^2)^{\frac{d-2}{2}} R \frac{\dd (2Rw_R)}{2R}  \\ \notag
&= R^d \, \gen \int_{2R^2}^\infty |\SS^{d-1}| (R^{-d+2} G_{d}^\h(R^2z ; w))^2 (w^2 -1)^{\frac{d-2}{2}} \frac{d (2 R^2 w)}{2R^2}  \\ \notag
& = R^{-d +4 }  
\begin{cases}
\sigma_d^\h(R^2 z), & d \not \in \{ 4, 6, \dots \}, \vspace{18pt}\\
\sigma_d^\h(R^2 z) - \frac{2 (-1)^{\frac{d}{2}}}{(4 \pi)^{\frac{d}{2}} \Gamma (\frac{d}{2})} \dfrac{\partial}{\partial (R^2 z)} 
\dfrac{\Gamma \big(\tfrac{d-1}{2}+ \sqrt{-R^2 z} \big)}{\Gamma \big( \tfrac{3-d}{2}+ \sqrt{- R^2 z} \big)}  \ln R, & d \in \{ 2 , 4 , \dots \}.
\end{cases}
\end{align}
In the last equality we used \eqref{eq:genInt_scaling2} ; the scaling anomaly coefficient $f_{-1}$ was computed in \cite{DGR23a}.
  In even dimensions we prefer to use
\begin{align}
 G_{d,R}^{\h,\varepsilon,\eta}(z;x,x')
=   
  G_{d,R}^{\h}(z;x,x') + 
  \frac{ G_{d,R}^\h\big(z;x,R x_{0}\big)G_{d,R}^\h\big(z;R x_{0},x'\big)  }{\eta(z) +\Sigma_{d,R}^{\h,\varepsilon}(z)}.
\end{align}
We choose the reference self-energies to be
\begin{align*}
 \Sigma_{d,R}^\h(-\beta^2)&:=R^{2-d}
    \Sigma_{d}^\h\big(-(\beta R)^2\big), &\quad d\text{ odd};\\
 \Sigma_{d,R}^{\h,\varepsilon}(-\beta^2)&:=R^{2-d}
   \Sigma_d^{\h,\varepsilon+\ln R}\big(-(\beta R)^2\big),&\quad
                                                             d\text{ even}.
\end{align*}

\bet \label{thm:flat_limit}
Let $-\beta^2\in\cc\backslash[0,\infty[$. We have 
\begin{align}
\label{eq:limit_Gh_free}
    G_{d,R}^\h\big(-\beta^2; r\big)&=
    G_d\big(-\beta^2; r\big)
     \Big(1+\mathcal{O}\big(\tfrac{1}{\beta R}\big)
    +\mathcal{O}\big(\tfrac{r}{ R}\big)\Big)
\end{align}
and
\begin{align}
  \label{eq:asymp_R_Sigma_h}
 \Sigma_{d,R}^\h\big(-\beta^2\big)
& =\Sigma_{d}\big(-\beta^2\big) 
                                     \Big(1+\mathcal{O}\big(\tfrac{1}{\beta
                                     R}\big) \Big),\quad d\text{ odd};\\
 \Sigma_{d,R}^{\h,\varepsilon}\big(-\beta^2\big)  & =\Sigma_{d}^\varepsilon\big(-\beta^2\big) 
    \Big(1+\mathcal{O}\big(\tfrac{1}{\beta R}\big) \Big),\quad d\text{ even}.
\end{align}
Thus if we have a family $x_R,x_R'\in\hh_R^d$ and $x,x'\in\rr^d$ such that
\begin{align}& \label{dfdf1}
\lim_{R\to\infty}d^\h_R(x_R,x_R')=|x-x'|,\\ \notag &
\lim_{R\to\infty}d^\h_R(x_R,R x_{0})=|x|, \\ \notag & 
\lim_{R\to\infty}d^\h_R(x_R',Rx_{0})=|x'|, 
\end{align}
then
\begin{align}
  \lim_{R\to\infty} G_{d,R}^{\h,\gamma}\big(-\beta^2; x_R,x'_R\big)&=
    G_d^\gamma(-\beta^2;x,x'),\qquad d\text{ odd}; 
                                                                     \label{eq:limit_Gh_pert-}\\
  \lim_{R\to\infty} G_{d,R}^{\h,\varepsilon,\eta}\big(-\beta^2; x_R,x'_R\big)&=
    G_d^{\varepsilon,\eta}(-\beta^2;x,x'),\qquad d\text{ even}. 
                                                                     \label{eq:limit_Gh_pert+}
\end{align}
\eet

\begin{proof} 
Using the asymptotics of the Gegenbauer 
functions from Thm. \ref{thm:asymptotics}, we find 
\begin{align}
  G_{d,R}^\h(-\beta^2 ,r_R)
  =&R^{-d+2}G_d^\h\Big(-(\beta R)^2,\cosh\frac{r_R}{R}\Big)
  \\ \notag
  =&\frac{R^{-d+2}\Gamma(\frac{d-1}{2}+\beta
     R)}{\Gamma(\frac{3-d}{2}+\beta
     R)}\frac{\sqrt\pi\Gamma(\frac{3-d}{2}+\beta R)}{\sqrt2(2\pi)^{\frac{d}{2}}2^{\beta R}}{\bf
     Z}_{\frac{d-2}{2},\beta R}\Big(\cosh\frac{r_R}{R}\Big)
     \\ \notag
 =&\frac{(\frac{r_R}{R})^{\frac{d-1}{2}}}{(\sinh\frac{r_R}{R})^{\frac{d-1}{2}}(2\pi)^{\frac{d}{2}}}\Big(\frac{\beta}{r_R}\Big)^{\frac{d-2}{2}}
       K_{\frac{d-2}{2}}(\beta r_R) \Big(1+\mathcal{O}\big(\tfrac{1}{\beta R}\big)\Big)
  \\ \notag
 =&\frac{1}{(2\pi)^{\frac{d}{2}}}\Big(\frac{\beta}{r_R}\Big)^{\frac{d-2}{2}}
       K_{\frac{d-2}{2}}(\beta r_R) \Big(1+\mathcal{O}\big(\tfrac{1}{\beta R}\big)
       +\mathcal{O}\big(\tfrac{r_R}{ R}\big)\Big).
       \end{align}
This proves \eqref{eq:limit_Gh_free}. \eqref{eq:asymp_R_Sigma_h}  
   follows from \ref{thm:asymp_genintSZ} and
    \begin{align}
     \psi\big(\tfrac12\pm \alpha+ \beta R\big) -\ln(\beta R)
 &= \mathcal{O}\big(\tfrac{1}{\beta R}\big).
    \end{align}
    Now let
    \begin{align}
r_R:=d^\h_R(x_R,x_R'),&\quad 
\theta_R:=d^\h_R(x_R,R x_{0}),\quad 
                       \theta_R':=d^\h_R(x_R',R x_{0}),\\
      r:=|x-x'|,&\quad 
\theta:=|x-x_{0}|,\quad 
                       \theta':=|x'-x_{0}|.\end{align}      
    By \eqref{eq:limit_Gh_free} and \eqref{eq:asymp_R_Sigma_h},  we obtain
\begin{align}
 G_{d,R}^{\h,\gamma}(-\beta^2;x_R,x'_R)
  &=
 G_{d,R}^\h(-\beta^2,r_R) + 
  \frac{ G_{d,R}^\h\big(-\beta^2,\theta_R\big) 
  G_{d,R}^\h\big(-\beta^2,\theta'_R\big) 
  }{\gamma(- \beta^2) +\Sigma_{d,R}^\h(-\beta^2)}. 
\\ \notag 
&=  G_{d}(-\beta^2,r_R) 
  \Big(1+\mathcal{O}\big(\tfrac{1}{\beta R}\big) +\mathcal{O}\big(\tfrac{r_R}{R}\big)\Big) 
  \\ \notag &\quad+ 
  \frac{ G_{d}\big(-\beta^2,\theta_R\big) 
  G_{d}\big(-\beta^2,\theta'_R\big) 
  }{\gamma(- \beta^2) +  \Sigma_{d}\big(-\beta^2\big) 
    \big(1+\mathcal{O}\big(\tfrac{1}{\beta R}\big) \big)}
    \Big(1+\mathcal{O}\big(\tfrac{1}{\beta R}\big) 
    +\mathcal{O}\big(\tfrac{\theta_R}{R}\big) +\mathcal{O}\big(\tfrac{\theta'_R}{R}\big)\Big). 
\end{align}
Now \eqref{dfdf1} implies $\frac{r_R}{R}$, $\frac{\theta_R}{R}$ and
$\frac{\theta'}{R}$ to be $\mathcal{O}(\frac1R)$. Hence, 
the limit $R\to\infty$ of the right-hand side is
\begin{align}
   G_{d}(-\beta^2,r) 
 +
    \frac{ G_{d}\big(-\beta^2,\theta\big) 
  G_{d}\big(-\beta^2,\theta'\big) 
  }{\gamma(-\beta^2) +  \Sigma_{d}\big(-\beta^2\big) }=G_d^\gamma(-\beta^2;x,x'),
   \end{align}
which
proves  
\eqref{eq:limit_Gh_pert-}. \qed
\end{proof}

\section{Green's operators on the sphere}
\label{sec:spherical}
\subsection{Spherical Laplacian}

        Equip the space $\rr^{1+d}$ with the Euclidean bilinear form
         \[(x|y)         =x^0y^{0}+x^1y^{1}+\dots+x^dy^{d}.
         \]
         The set
         \[\SS^d:=\{x\in\rr^{1+d}
         \mid(x|x)=1
         \}\]
         equipped with the Riemannian metric inherited from $\rr^{1+d}$ is called
         the {\em (unit) sphere}.
         The geodesic distance between $x,x'\in\SS^d$ is given by
         \beq 
d_{\s} (x, x')= \cos^{-1}(x|x'),\qquad\cos d_{\s}(x,x')=(x|x'). 
         \eeq
$\SS^d$  has also a natural measure, so one can define
         $L^2(\SS^d)$.

In this section we study the  operator
         \beq H_d^\s:=-\Delta_{d}^\s+\frac{(d-1)^2}{4},\eeq
         where
 $\Delta_{d}^\s$ is    the      {\em
   spherical Laplacian}.
 For  $z\in
 \cc\backslash\sigma(H_d^\s)\supset\cc\backslash[0,\infty[$ we define the 
   {\em spherical Green's operator} $ G_{d}^\s(z):= (-z+H_{d}^\h)^{-1}$. 
 Eigenfunctions of $H_d^\s$ with eigenvalue   $(l+\frac{d-1}{2})^2$
 are called 
   {\em  spherical harmonics of degree $l$}. The corresponding 
   spectral projection will be denoted\beq 
     \pp_{d,l}^\s:=           \one_{l(l+d-1)}\big(-\Delta_{d}^\s\big) 
     =    \one_{(l+\frac{d-1}{2})^2}\big(H_{d}^\s\big) .
     \eeq

   In the following theorem,  we express the integral kernels of
 $ G_{d}^\s(z)$
   and $\pp_{d,l}^\s$
in terms of
Gegenbauer function $\bf{S}_{\alpha,\beta}$ and Gegenbauer 
polynomials $C_l^\alpha$, see Appendix \ref{app:gegenbauer}.

    \bet \begin{enumerate}\item
     For $\Re\beta>0$ the integral kernel of $G_d^\s(-\beta^2)$ is
     \beq\label{iao1}
     G_{d}^\s(-\beta^2;x,x')
     =\frac{\Gamma(\frac{d-1}{2}+\i\beta )
       \Gamma(\frac{d-1}{2}-\i\beta )
     }{(4\pi)^{\frac{d}{2}}
         }{\bf S}_{\frac{d}{2}-1,\i\beta }         \big(-(x|x')\big).\eeq
   \item\eqref{iao1} is true also on the real line away of the
     spectrum of $H_d^\s$.
It is convenient to rewrite it as follows: for $\zeta\in\rr$, $\zeta-\frac{d-1}{2}\not\in\{0,1,2,\dots\}$,
     \beq\label{iao2}
     G_{d}^\s(\zeta^2;x,x')
     =\frac{\Gamma(\frac{d-1}{2}+\zeta )
       \Gamma(\frac{d-1}{2}-\zeta )
     }{(4\pi)^{\frac{d}{2}}
         }{\bf S}_{\frac{d}{2}-1,\zeta }         \big(-(x|x')\big).\eeq  
\item     The integral kernel of $\pp_{d,l}^\s$ is
     \begin{align}
       \pp_{d,l}^\s(x,x')&=\frac{(2l+d-1)\Gamma(\frac{d-1}{2})}{4\pi^{\frac{d+1}{2}}}C_l^{\frac{d-1}{2}}\big((x|x')\big)
    \label{jaco1} \end{align}
\end{enumerate}\eet
\begin{proof} Let  $w:=-(x|x')$. By the spherical symmetry, there exists 
a function $G_d^\s(z,w)$ such that, $G_{d}^\s(z,x,x')=:G_d^\s(z,w)$. 
We obtain the differential equation 
\begin{align}
0=&\left((1-w^2)\partial_w^2-dw\partial_w-\beta^2
-\Big(\frac{d-1}{2}\Big)^2\right)G_d^\s(-\beta^2,w)
 . \end{align}
  Then we need to find the solution regular near $w=-1$ 
     and such that 
     \beq G_d^\s\big(-\beta^2,-\cos(r)\big)\sim 
     \begin{cases}
     \frac{\Gamma(\frac{d}{2}-1)}{4\pi^\frac{d}{2}}r^{2-d}, & d \neq 2, \\
     - \frac{1}{2 \pi} \ln r, & d = 2,
     \end{cases}  
     \eeq
This yields \eqref{iao1}, \eqref{iao2}.
To see (\ref{jaco1}) we use (\ref{stone0}): 
     \begin{align}
&\quad \text{lhs of (\ref{jaco1})}     
\\ \notag &=
 \lim_{\epsilon\searrow0}\Big(\Big(l+\i\epsilon+\frac{d-1}{2}\Big)^2
       -\Big(l+\frac{d-1}{2}\Big)^2\Big)
  G_{d}^\s\Big(\Big(l+\i\epsilon+\frac{d-1}{2}\Big)^2;x,x'\Big)
  \\ \notag
       &=
       \lim_{\epsilon\searrow0}\i\epsilon (2l+d-1)
       \frac{\Gamma(-l-\i\epsilon)\Gamma(d-1+l+\i\epsilon)}
            {2^d\pi^{\frac{d}{2}}}
         {\bf S}_{\frac{d}{2}-1,\frac{d-1}{2}+l+\i\epsilon}
                  \big(-(x|x')\big).\\ \notag
                &=
\frac{(-1)^l(2l+d-1)\Gamma(d-1+l)}
            {l!2^d\pi^{\frac{d}{2}}}
         {\bf S}_{\frac{d}{2}-1,\frac{d-1}{2}+l}
                  \big(-(x|x')\big)\\ \notag
   &=    \frac{(2l+d-1)\Gamma(\frac{d-1}{2})}{4\pi^{\frac{d+1}{2}}}C_l^{\frac{d-1}{2}}\big((x|x')\big).
     \end{align}
     At the end we used  \eqref{polynomial} and
     $\sqrt\pi\Gamma(d-1)=2^{d-2}\Gamma(\frac{d-1}{2})\Gamma(\frac{d}{2})$.  
   \qed      
         \end{proof}

The function $G_d^\s(-\beta^2,-\cos r)$ can be defined for any
$d\in\cc$. However, we do not have a~symmetry similar to
\eqref{hypersym}, except for even integers. To obtain
formulas describing the behavior close to $r=0$, we use
\eqref{formu2}, \eqref{solu1}, \eqref{solu1_form2} and
$\frac{1-\cos r}{2}=\sin^2(\frac{r}{2})$: 
\begin{subnumcases}{G_d^\s(-\beta^2,r)=}
 \frac{1}{\sin^{d-2}(\frac{ r}{2})}
                      \sum_{k=0}^\infty 
                      \frac{(\frac{1}{2}+\i\beta-k)_{2k}\Gamma(\frac{d-2}{2}-k)}
                      {(4\pi)^{\frac{d}{2}} k!}\sin^{2k}\big(\tfrac{r}{2}\big)
 \label{integ-s1} \\ \notag
 +\sum_{j=0}^\infty 
          \frac{(-1)^j (\frac{1}{2}+\i\beta) _{\frac{d-2}{2}+j} 
          (\frac{1}{2}-\i\beta) _{\frac{d-2}{2}+j} \Gamma(\frac{2-d}{2}-j)
          }{(4\pi)^{\frac{d}{2}} j!}
          \sin^{2j}\big(\tfrac{r}{2}\big), 
          & \hspace{-4ex}$d\not\in2\zz$;
 \vspace{18pt} \\ \label{integ-s} 
 \frac{1}{\sin^{d-2}(\frac{r}{2})}\sum_{k=0}^{\frac{d-4}{2}}
   \frac{(\frac12+\i\beta-k)_{2k} (\frac{d-4}{2}-k)!}{(4\pi)^{\frac{d}{2}} k!}
                         \sin^{2k}\big(\tfrac{r}{2} \big)
 \\ \notag 
+ \sum_{j=0}^\infty                                                                                                                                                         \frac{(-1)^{j}(\frac{3-d}{2} +\i \beta-j)_{d-2+2j}}{ (4\pi)^{\frac{d}{2}} 
j! (j+\frac{d-2}{2})! } \sin^{2j}\big(\tfrac{r}{2} \big)
\Big( H_{\tfrac{d-2}{2} + j} +H_j -2\gamma_\mathrm{E}
\\ \notag
- \psi(\tfrac{d-1}{2} + \i\beta+j) 
- \psi(\tfrac{d-1}{2} -\i\beta + j) 
- \ln\big( \sin^{2}(\tfrac{r}{2}) \big) \Big),  &\hspace{-6ex}$d=2,4,\dots$.
\end{subnumcases}
If $2+2n > \Re d$, then
\begin{align}
&\lim_{x \to x'} \partial_z^n G^\s_d(z;x,x') 
\\ \notag 
&=\begin{cases}
    \frac{\Gamma\big(\tfrac{2-d}{2}\big)
                }{(4 \pi)^{\frac{d}{2}}}\, \partial_z^n 
\Big(                (\tfrac12+\i\sqrt{-z})_{\frac{d-2}{2}}(\tfrac12-\i\sqrt{-z})_{\frac{d-2}{2}}
\Big)                            
                , & d \not \in 2 \zz;   \\
\frac{(-1)^{\frac{d}{2}}}{(4 \pi)^{\frac{d}{2}}\Gamma\big(\tfrac{d}{2}\big)} \partial_z^n \Big(    (\psi(\tfrac{d-1}{2}+ \i \sqrt{-z}) + \psi( \tfrac{d-1}{2} - \i \sqrt{-z} )) \prod_{j=0}^{\frac{d-4}{2}} ((\tfrac{1}{2}+j)^2-z) \Big) , & d =2,4, \dots.                
  \end{cases}
\end{align}
For odd integers we have  an expression in terms of elementary functions:
\begin{align}
  G_d^\s(-\beta^2;x,x') &= \frac{
  1}{2(2 \pi)^{\tfrac{d-1}{2}}\sinh\pi\beta}
  \Big(- \frac{1}{\sin r} \partial_r \Big)^{\tfrac{d-3}{2}}
  \frac{\sinh\big(\beta(\pi-r)\big)}{\sin r},\quad d=3,5,\dots
\label{greenodd}\end{align}
To describe the behavior close to pairs of antipodal points, that is
close to $r=\pi$, we use \eqref{solu1}, \eqref{solu1_form2} as well as  
 $\frac{1+\cos r}{2}=\cos^2(\frac{r}{2})$ and $\frac{1-\cos r}{2}=\sin^2(\frac{r}{2})$: 
\begin{align}\label{evo1}
  G_d^\s(-\beta^2;x,x')&=\frac{1}{(4\pi)^{\frac{d}{2}}}\sum_{j=0}^\infty
            \frac{\Gamma(\frac{d-1}{2}+\i\beta+j)
                      \Gamma(\frac{d-1}{2}-\i\beta+j)
                      }{\Gamma(\frac{d}{2}+j)j!}\cos^{2j}\big(\tfrac{r}{2}\big)
    \\ \notag &= \frac{\Gamma(\frac{d-1}{2}+\ii\beta )
       \Gamma(\frac{d-1}{2}-\ii\beta )}{(4\pi)^{\tfrac{d}{2}}\big(\sin^2\big(\tfrac{r}{2}\big)\big)^{\tfrac{d-2}{2}}}
\sum_{j=0}^\infty
    \frac{\big(\frac12+\ii\beta\big)_j\big(\frac12-\ii\beta\big)_j}
                                      {\Gamma\big(\tfrac{d}{2}+j\big)j!}
        \cos^{2j}\big(\tfrac{r}{2}\big).
\end{align}
For integer $d$, \eqref{evo1} can be simplified as follows: 
\begin{subnumcases}{ \hspace{-6ex}G_d^\s(-\beta^2,r)=}
 \tfrac{(-1)^{\frac{d-2}{2}}\pi}{(4\pi)^{\frac{d}{2}}\cosh\pi\beta}\sum_{j=0}^\infty
\frac{(-1)^j
     (\frac{3-d}{2}-j+\i\beta)_{d-2+2j}}{(\frac{d-2}{2}+j)!j!}\cos^{2j}\big(\tfrac{r}{2}\big),
  & \hspace{-1.5ex} $d=2,4,\dots$; \label{eveno1} \vspace{12pt}\\
 \tfrac{(-1)^{\frac{d-1}{2}}\pi}{(4\pi)^{\frac{d}{2}}\i\sinh\pi\beta}\sum_{j=0}^\infty
\frac{(-1)^j
     (\frac{3-d}{2}-j+\i\beta)_{d-2+2j}}{\Gamma(\frac{d}{2}+j)j!}\cos^{2j}\big(\tfrac{r}{2}\big), & \hspace{-1.5ex} $d=3,5,\dots$. \label{eveno2}
\end{subnumcases}

Note that \eqref{integ-s1}, \eqref{integ-s}, \eqref{greenodd} and
\eqref{evo1}
are the analogs of \eqref{explicit1}, \eqref{explicit2},
\eqref{explicit3} and \eqref{explicit4}.

\subsection{Point potentials on the sphere}

         Denote the {\em north pole} of the sphere by $x_0=(1,0,\dots,0)$.
Green's function of the spherical Laplacian with a point-like
potential located at $x_0\in\mathbb{S}^d$ has the form
\begin{align}
 \label{eq:Gdgamma-s}
  G_d^{\s,\gamma}(z;x,x')&=   G_d^\s(z;x,x')+ 
  \frac{ G_d^\s(z;x,x_0)G_d^\s(z;x_0,x')}{\gamma (z)+\Sigma_d^\s(z)},
\end{align}
where (with $z=-\beta^2$)
\begin{align}
-\partial_z\Sigma_d^\s(z) &= \label{eq:sigma_sphere_generic}
  \sigma_d^\s(z):=\int_{\SS^d}G_d^\s(z;x_0,x)^2\d x
  \\ \notag
  &=\int_{-1}^1
    G_d^\s(z;w)^2|\SS^{d-1}|(1-w^2)^{\frac{d}{2}-1}\d w
  \\ 
 &= \frac{\Gamma\big(\tfrac{d-1}{2}+\i\beta\big)^2
    \Gamma\big(\tfrac{d-1}{2}-\i\beta\big)^2
    }{2^{2d} \pi^{\frac{d}{2}} \Gamma\big(\tfrac{d}{2}\big)}
   \int_{-2}^2
    {\bf S}_{\tfrac{d}{2}-1,\i\beta}(w)^2(1-w^2)^{\frac{d}{2}-1}\d 2 w.\label{integralq}
\end{align}
The integrand of \eqref{integralq} is well-defined for any 
complex $d$. Again, the integral is convergent only for 
$|\Re(d-2)|<2$, which includes the dimensions $d=1,2,3$: 
\begin{align}\label{integral-s1}
   \sigma_d^\s(z)
   &= \begin{cases}
   \dfrac{ \Gamma(\tfrac{2-d}{2})}
    {(4\pi)^{\frac{d}{2}}}
     \partial_z
\, \Big(  (\tfrac12+\i\sqrt{-z})_{\frac{d-2}{2}}(\tfrac12-\i\sqrt{-z})_{\frac{d-2}{2}}\Big),
 &\quad |\Re d-2|<2,\;d\neq2; \vspace{12pt}\\
 - \dfrac{1}{4\pi} \partial_z
 \big( \psi\big(\tfrac12+\i\sqrt{-z}\big) 
 +\psi\big(\tfrac12-\i\sqrt{-z}\big) \big), &\quad d=2.
 \end{cases}
\end{align}

Let us discuss specific dimensions.
    Except for dimension 1, we will use the notation
 $\cos r=(x|x')$,  $\cos\theta=(x|x_0)$,  $\cos\theta'=(x'|x_0)$. 

\paragraph{Dimension  $1$.} 
This case can be solved in an elementary way. 
Solving $(-\partial_\theta^2+\beta^2)
g(\theta)=\delta(\theta)$ on $\SS^1=\rr/2\pi\zz$ we obtain
$g(\theta)=\frac{\cosh(\beta(\theta-\pi))}{2 \beta
  \sinh(\pi\beta)}$,  
  $|\theta-\pi|<\pi$. This yields
 the following description of the
 1-dimensional Green's function: 
\begin{align} 
G_1^\s(-\beta^2;\theta,\theta')=
\frac{\cosh(\beta(\theta-\theta'-\pi))}{2\beta\sinh(\pi\beta)},\label{qquu}\end{align}
where $\theta-\theta' \in ]0, 2 \pi [$. We
put the contact potential at $\theta=\pi$ and compute 
\begin{align}
  \sigma_1^\s(-\beta^2)=\int_{-\pi}^{ \pi}
  G_1^\s(-\beta^2,\pi,\theta)^2\d\theta&=       \frac{\pi}{4\beta^2\sinh^2(\pi\beta)}+\frac{\cosh(\pi\beta)}{4\beta^3\sinh(\pi\beta)},\\
  \Sigma_1^\s(-\beta^2)=-\int_{\beta^2}^{\infty}\sigma_1^\s(-\rho)\d\rho&=
-\frac{\coth(\pi\beta)}{2\beta}.
\end{align}
Thus
  \begin{align}   
  G_1^{\s,\gamma}(-\beta^2;\theta,\theta')=
  \frac{\cosh(\beta(\theta-\theta'-\pi))}{2\beta\sinh(\pi\beta)}+\frac{\cosh(\theta\beta)\cosh(\theta'\beta)}{(2\beta)^2\sinh^2(\pi\beta)\big(\gamma-\frac{1}{2\beta}\coth(\pi\beta)\big)}.
  \end{align}
  
 Let us show that it also follows from
the general theory.
We check that
\begin{align}{\bf
  S}_{-\frac12,\i\beta}\big(-\cos(\theta-\theta')\big)
  &=\frac{1}{\sqrt\pi}
\cosh(\beta(\theta-\theta'-\pi)),\\
\Gamma(\i\beta) 
  \Gamma(-\i\beta)&=\frac{\pi}{\beta\sinh(\pi\beta)}.
                    \end{align}
Thus 
\begin{align} 
G_1^\s(-\beta^2;\theta,\theta')=\frac{\Gamma(\i\beta) 
  \Gamma(-\i\beta)}{2\sqrt\pi}{\bf 
  S}_{-\frac12,\i\beta}\big(-\cos(\theta-\theta')\big)
\end{align} 
yields \eqref{qquu}. The formula \eqref{integral-s1} specified to $d=1$
gives
\begin{align} 
\sigma_1^\s(z)
=\partial_z \frac{\coth(\pi\sqrt{-z})}{2\sqrt{-z}}.
\end{align}
Imposing the condition
  $\Sigma_1^\s(-\infty)=0$ yields   
\begin{align}
\label{eq:Sigma_1_s}
 \Sigma_1^\s(-\beta^2) 
 = -\frac{\coth(\pi\beta)}{2\beta }.
\end{align}

\paragraph{Dimension  $2$.}
We have
\begin{align}
 G_2^\s(-\beta^2;x,x') 
=\frac{{\bf S}_{0,\i\beta}(-\cos r)}{4\cosh(\pi\beta)}.
\end{align}
From \eqref{integral-s1} we obtain a family of self-energies depending on
the parameter $\varepsilon:=-2\pi\gamma$:
\begin{align}
 \Sigma_2^{\s,\varepsilon}(-\beta^2) = \frac{1}{4\pi}
 \Big( \psi\big(\tfrac12+\i\beta\big) 
        +\psi\big(\tfrac12-\i\beta\big) -2\varepsilon\Big).
\end{align}
Thus, 
\begin{align}
 G_2^{\s,\varepsilon}(-\beta^2;x,x') 
 =& \frac{{\bf S}_{0,\i\beta}(-\cos r)}{4\cosh(\pi\beta)}
 + \frac{\pi}{4 \cosh^2(\pi\beta)} \;
 \frac{{\bf S}_{0,\i\beta}(-\cos \theta) {\bf S}_{0,\i\beta}(-\cos \theta')}{
  \psi(\tfrac12+\i\beta) 
        +\psi(\tfrac12-\i\beta) -2\varepsilon}.
\end{align}
In contrast to the Euclidean case, $\Sigma_d^{\s,\varepsilon}(-\beta^2)$ 
has a singularity at $-\infty$ but not at $0$.

\paragraph{Dimension  $3$.}  
 We have 
  \begin{align}
 {\bf S}_{\frac12,\i\beta}(-\cos r)
 &=\frac{2\sinh\big((\pi-r)\beta\big)}{\beta\sqrt\pi\sin r}, \\
 G_3^{\s}(-\beta^2,x,x')
 &=\frac{1}{4\pi}
 \frac{\sinh\big((\pi-r)\beta\big)}{\sinh(\pi\beta)\,\sin r}.
\end{align}
Using \eqref{integral-s1} and choosing the integration constant so that $\Sigma_3^\s(-\beta^2) =\Sigma_3(-\beta^2) + o(1)_{\beta\to\infty}$,
\begin{align}
\label{eq:Sigma_3_s}
 \Sigma_{3}^\s(-\beta^2) 
 = \frac{ \beta \coth (\pi \beta) }{ 4\pi } .
\end{align}
Therefore,
 \begin{align}
 \hspace{-2ex}
  G_3^{\s,\gamma}(-\beta^2,x,x')
  =\frac{\sinh\big(\beta(\pi-r)\big)}{4\pi\sinh(\beta\pi)\sin r}+
  \frac{\sinh\big(\beta(\pi-\theta)\big)
    \sinh\big(\beta(\pi-\theta')\big)}{(4\pi)^2 \sin \theta \sin \theta'\sinh^2(\beta\pi)\big(\gamma
  +\frac{\beta \coth(\pi\beta) }{4\pi}\big)}.
 \end{align}

\paragraph{Odd dimensions $d\geq5$.}  
Equation
\eqref{integral-s1} is
still valid if understood in the generalized sense for all
$d\in\cc\setminus2\zz$. Therefore we can set for such $d$ 
\begin{align}\label{integral-s1.}
   \Sigma_d^\s(- \beta^2)
   = & -
   \frac{ \Gamma(\frac{2-d}{2})}
    {(4\pi)^{\frac{d}{2}}}\,
                (\tfrac12+\i\beta)_{\frac{d-2}{2}}(\tfrac12-\i\beta)_{\frac{d-2}{2}}
       .\end{align}
       Due to \eqref{eq:Poch_id2}, 
\eqref{integral-s1.} specified to  odd integer values is
\begin{align}\label{greenodd1}
 \Sigma_{d}^\s(-\beta^2) 
 &= \frac{\pi \beta  \coth(\pi\beta ) }
   { (4\pi)^{\frac{d}{2}}\Gamma (\tfrac{d}{2})  }  
      \prod_{k=1}^{\frac{d-3}{2}}
    \Big(- k^2 -\beta^2 \Big).
\end{align}
Combining \eqref{greenodd}, \eqref{greenodd1} 
and \eqref{eq:Gdgamma-s} we obtain $G_d^{\s,\gamma}(- \beta^2; x,x')$.

\paragraph{Even dimensions $d\geq4$.}  
Similarly as in the flat and hyperbolic case, the formula
\eqref{integral-s1.} is not applicable. As we  will argue below,  for
$d\in2\zz$, $d\geq4$, we will  introduce a family
of reference self-energies parametrized by $\varepsilon\in\rr$:
   \beq\label{integ---s}
 \Sigma_{d}^{\s,\varepsilon} (-\beta^2):=   \frac{1
    }{ (4\pi)^{\tfrac{d}{2}} \Gamma\big(\tfrac{d}{2}\big)}
                       \Big(
  \psi\big(\tfrac{d-1}{2}+\i\beta\big) 
  +\psi\big(\tfrac{d-1}{2}-\i\beta\big) 
  -2\varepsilon \Big) 
\prod_{j=0}^{\tfrac{d-4}{2}} 
 \Big(-\beta^2-\big(\tfrac12 + j\big)^2\Big).
 \eeq 
 Thus we obtain a family of Green's functions
 \begin{align}
 \label{eq:Gdgamma-s.}
  G_d^{\s,\varepsilon,\eta}(z;x,x')&=   G_d^\s(z;x,x')+ 
  \frac{1}{\eta(z)+\Sigma_d^{\s,\varepsilon}(z)} G_d^\s(z;x,x_0)G_d^\s(z;x_0,x'), 
  \end{align}
parametrized by $\varepsilon\in\rr$ and a polynomial $\eta$ with $\deg\eta\leq\frac{d-4}{2}$.

 Let us derive \eqref{integ---s} using 
 the integral in \eqref{integralq} as the starting point.
 For
 $d\in2\zz$, $d\geq4$, unfortunately, \eqref{integralq} has 
 to be understood in the anomalous generalized sense and is not
equal to \eqref{integral-s1}. Instead, it is given by 
\eqref{eq:formo_integer1_limit1.} and we have
\begin{align} \label{eq:sigma_sphere_even_new}
  \sigma_d^\s(-\beta^2)= &
                          \frac{(-1)^{\frac{d-2}{2}}(\frac12+\i\beta)_{\frac{d-2}{2}}
                          (\frac12-\i\beta)_{\frac{d-2}{2}}}
{\Gamma(\frac{d}{2})(4\pi)^{\frac{d}{2}}}
      \Bigg( 
       \frac{\ii }{2 \beta } \Big(
       \psi'\big(\tfrac{d-1}{2} +\ii \beta\big) 
    - \psi'\big(\tfrac{d-1}{2} -\ii \beta\big)   
    \Big) \\\notag
   &\quad-\frac{\ii}{2\beta}\Big(H_{\frac{d-2}{2}}\big(\tfrac12+\ii\beta)-H_{\frac{d-2}{2}}\big(\tfrac12-\ii\beta)\Big)
   \ln 4
    \\   &\quad\notag+
  \sum_{k=0}^{\frac{d-4}{2}} 
 \frac{\psi \big(-\tfrac12-k+\ii \beta \big) 
    +  \psi\big(-\tfrac12-k-\ii \beta \big) 
    -\psi(\frac{d-2}{2}-k)-\psi(1+k) 
 }{\big(\tfrac12 +k\big)^2+\beta^2
  }
\Bigg).
\end{align}
As in the hyperbolic case, application of the 
Leibniz rule and identities satisfied by the Pochhammer symbol, harmonic numbers and the digamma function (see Appendix \ref{poch}) yields
\begin{align}\notag
\hspace{-1ex}
 \sigma_d^\s(z) 
  &= -
 \frac{1}
{\Gamma(\frac{d}{2})(4\pi)^{\frac{d}{2}}}
    \partial_z \Bigg(
\Big(\psi\big(\tfrac{d-1}{2}+\i\sqrt{-z}\big)+\psi\big(\tfrac{d-1}{2}-\i\sqrt{-z}\big)-\ln 4\Big)     \prod_{j=0}^{\frac{d-4}{2}}\big(z-(\tfrac12+j)^2\big)
\Bigg)\\\label{eq:sigma_sphere_even_dim}
  &\quad+\frac{1}
{\Gamma(\frac{d}{2})(4\pi)^{\frac{d}{2}}}\pi_d^\s(z),
\end{align}
where $\pi_d^\s$ is a polynomial of degree $\frac{d-4}{2}$ given by
\begin{align} \label{iuyt}
  \pi_d^\s(z)=&
 \prod_{j=0}^{\frac{d-4}{2}}\big(z-(\tfrac12+j)^2\big)
\Bigg(\sum_{k=0}^{\tfrac{d-4}{2}} 
\frac{\psi\big(\tfrac{d-2}{2}-k\big) +\psi(1+k)}{z-(\tfrac12+k)^2} 
  \\ \notag & \qquad
   - \sum_{k=0}^{\tfrac{d-4}{2}} \sum_{l=k+1}^{\tfrac{d-4}{2}}
   \frac { 2l+1 }{
     \Big(z-\big(\tfrac12+k\big)^2\Big) \Big(z-\big(\tfrac12+l\big)^2\Big)}  \Bigg).
\end{align}  
We define 
\beq\label{polypi2}\Pi_d^\s(z):=-\int_0^{z}\pi_d^\s(\tau)\d\tau,\eeq
 a polynomial of degree $\frac{d-2}{2}$ 
with $\Pi_d^\s(0)=0$ and
\begin{align}\notag
  \Sigma_d^{\s,\mathrm{ms}}(z)&=
 \frac{1}{\Gamma(\frac{d}{2})(4\pi)^{\frac{d}{2}}}
\Big(\psi\big(\tfrac{d-1}{2}+\i\sqrt{-z}\big)+\psi\big(\tfrac{d-1}{2}-\i\sqrt{-z}\big)-\ln 4\Big)     
\prod_{j=0}^{\frac{d-4}{2}}\big(z-(\tfrac12+j)^2\big) 
\\
  &+\frac{1}
{\Gamma(\frac{d}{2})(4\pi)^{\frac{d}{2}}}\Pi_d^\s(z).\label{insero2}\end{align} 
Then $  \Sigma_d^{\s,\mathrm{ms}}$ is an antiderivative of minus
\eqref{eq:sigma_sphere_even_dim}, and will be called the 
{\em reference self-energy based on the minimal
  subtraction}. From $  \Sigma_d^{\s,\mathrm{ms}}$
we pass to the family of reference self-energies
$ \Sigma_{d}^{\s,\varepsilon}$ by absorbing $\Pi_d^\s(z)$ 
into $\varepsilon$ and $\eta(z)$ as in the hyperbolic case.

\subsection{Flat limit of the spherical Laplacian}

Let $R>0$. Instead of the unit sphere we can consider the sphere of
radius $R$
         \[\SS_R^d:=\{x\in\rr^{1+d}\mid(x|x)=R^2\}.\]
Various objects defined using $\SS_R^d$ instead of $\SS^d$ will have
the subscript $R$. We have a bijection $\SS^d=\SS_1^d\ni x\mapsto Rx\in \SS_R^d$. The distance in $\SS_R^d$ satisfies
\beq d^{\s}_R(Rx,Rx')=R d^\s(x,x').\eeq
The Laplace-Beltrami operator on $\SS_R^d$, denoted $\Delta^\s_{d,R}$, is
a self-adjoint operator on $L^2(\SS_R^d)$ and
\begin{align}
\sigma(-\Delta_{d,R}^\s)&=
\Big\{\frac{l(l+d-1)}{R^2} \, \big | \, l=0,1,2,\dots\Big\}.
\end{align}        
We set
$ H_{d,R}^\s:=-\Delta_{d,R}^\s+\frac{(d-1)^2}{4R^2}$. 
For  $\Re\beta>0$ and  $ a<b,$ we set
  \begin{align}
        G_{d,R}^\s(z)&:=
        \big(-z+H_{d,R}^\s\big)^{-1},\\
    \pp_{d,l,R}^\s&:=    
    \one_{\frac{1}{R^2}(l+\frac{d-1}{2})^2}\big(H_{d,R}^\s\big),\\
\pp_{d,R}^\s(a,b)&:
=\one_{[a,b]}\big(H_{d,R}^\s\big).
      \end{align}
    
We have 
     \begin{align}
     G_{d,R}^\s(-\beta^2;x,x') &=  R^{-d+2}   G_{d}^\s\Big(-(\beta R)^2;\frac{x}{R},\frac{x'}{R}\Big),\\
       \pp_{d,l,R}^\s(x,x')&=R^{-d}
                           \pp_{d,l}^\s\Big(\frac{x}{R},\frac{x'}{R}\Big),\\
                  \pp_{d,R}^\s(
           a,b;x,x') &=R^{-d}
    \pp_d^\s\Big(R^2a,R^2 b;\frac{x}{R},\frac{x'}{R}\Big).
    \end{align}

The self-energy on 
$\SS_R^d$ is defined analogously to the hyperbolic case, and comes out to be
\begin{align*}
 \Sigma_{d,R}^\s(-\beta^2) 
  &:= R^{2-d}
    \Sigma_{d}^\s\big(-(\beta R)^2\big), &d\text{ odd};\\
   \Sigma_{d,R}^{\s,\varepsilon}(-\beta^2) 
  &:= R^{2-d}
    \Sigma_{d}^{\s,\varepsilon+\ln R}\big(-(\beta R)^2\big), &d\text{ even}.
\end{align*}

Let $x,x'\in\SS_R^d$. The perturbed Green's functions on $\SS_R^d$
in odd, resp. even dimensions are
\begin{align} \label{eq:Gs_scaled_odd}
  G_{d,R}^{\s,\gamma}(z;x,x')&
 =   
  G_{d,R}^\s(z;x,x') + 
  \frac{ G_{d,R}^\s\big(z;x,R x_0\big) 
  G_{d,R}^\s\big(z;R x_0,x'\big) 
  }{\gamma(z) +\Sigma_{d,R}^\s(z)};
\\ \label{eq:Gs_scaled_even}
 G_{d,R}^{\s,\varepsilon,\eta}(z;x,x')&
=   
  G_{d,R}^{\s}(z;x,x') + 
  \frac{ G_{d,R}^\s\big(z;x,R x_{0}\big)G_{d,R}^\s\big(z;R x_{0},x'\big)  }{\eta(z) +\Sigma_{d,R}^{\s,\varepsilon}(z)}.
\end{align}

Note that $\gamma(z)$ and $\eta(z)$ on the 
right-hand sides of \eqref{eq:Gs_scaled_odd} and 
\eqref{eq:Gs_scaled_even} do not depend on $R$. This choice 
of renormalization is analogous to the hyperbolic case.
Then all Green's functions  have the correct flat limit in the following sense:
\bet 
Let $-\beta^2\in\cc\backslash[0,\infty[$. Then
\begin{align}
\label{eq:limit_Gs_free}
    G_{d,R}^\s\big(-\beta^2,  r\big)&=
    G_d\big(-\beta^2 , r \big)
     \Big(1+\mathcal{O}\big(\tfrac{1}{\beta R}\big)+\mathcal{O}\big(\tfrac{r}{ R}\big)\Big),\\
  \Sigma_{d,R}^\s(-\beta^2)&=\Sigma_{d}\big(-\beta^2\big) 
                             \Big(1+\mathcal{O}\big(\tfrac{1}{\beta R}\big)
                             \Big) ,\quad d\text{ odd};\label{eq:lim}\\
   \Sigma_{d,R}^{\s,\varepsilon}(-\beta^2)&=\Sigma_{d}^\varepsilon\big(-\beta^2\big) 
    \Big(1+\mathcal{O}\big(\tfrac{1}{\beta R}\big) \Big),\label{eq:lim-}\quad 
                                            d\text{ even}.
\end{align}

Thus if we have a family $x_R,x_R'\in\SS_R^d$ and $x,x'\in \rr^d$ such that
\begin{align}\notag&
\lim_{R\to\infty}d_R^\s(x_R,x_R')=|x-x'|,\\ \notag &
\lim_{R\to\infty}d^\s_R(x_R,R x_{0})=|x|, \\
& \lim_{R\to\infty}d^\s(x_R,R x_{0})=|x'|, 
\label{dfdf}\end{align}
then
\begin{align}
  \lim_{R\to\infty} G_{d,R}^{\s,\gamma}\big(-\beta^2; x_R,x'_R\big)&=
G_d^\gamma(-\beta^2,x,x') ,\quad d\text{ odd};                                                                     \label{eq:limit_Gh_pert}\\
    \lim_{R\to\infty} G_{d,R}^{\s,\varepsilon,\eta}\big(-\beta^2; x_R,x'_R\big)&=
G_d^{\varepsilon,\eta}(-\beta^2,x,x'), \quad d\text{ even}.
\label{eq:limit_Gh_pert+_doubled}
\end{align}
\eet
\begin{proof} 
Using the asymptotics of the Gegenbauer 
functions from Thm. \ref{thm:asymptotics}, we find 
\begin{align}  
G_{d,R}^\s(-\beta^2 ;r_R)
  =&R^{-d+2}G_d^\s\Big(-(\beta R)^2;-\cos\frac{r_R}{R}\Big)
  \\\notag
  =&\frac{R^{-d+2}\Gamma(\frac{d-1}{2}+\i\beta
     R)\Gamma(\frac{d-1}{2}-\i\beta
     R)}{(4\pi)^{\frac{d}{2}}}{\bf
     S}_{\frac{d-2}{2},\i\beta R}\Big(-\cos\frac{r_R}{R}\Big)
     \\\notag
=&\frac{\beta^{d-2}\pi\e^{-\pi R\beta}}{2^{\frac{d-2}{2}}(2\pi)^{\frac{d}{2}}}
{\bf
   S}_{\frac{d-2}{2},\i\beta R}\Big(-\cos\frac{r_R}{R}\Big)
   \Big(1+\mathcal{O}\big(\tfrac{1}{\beta R}\big)\Big)
   \\ \notag
   =
&\frac{(\frac{r_R}{R})^{\frac{d-1}{2}}}{(\sin\frac{r_R}{R})^{\frac{d-1}{2}}(2\pi)^{\frac{d}{2}}}\Big(\frac{\beta}{r_R}\Big)^{\frac{d-2}{2}}
       K_{\frac{d-2}{2}}(\beta r_R)  \Big(1+\mathcal{O}\big(\tfrac{1}{\beta R}\big)\Big).
       \end{align}
This proves \eqref{eq:limit_Gs_free}. To prove \eqref{eq:lim}
    we use  Thm. \ref{thm:asymp_genintSZ} and
    \begin{align}
     \psi\big(\tfrac12+\alpha\pm\i\beta R\big) 
      -\ln(\beta R) \mp \i \tfrac{\pi}{2}
 &=\mathcal{O}\big(\tfrac{1}{\beta R}\big).
    \end{align}
Then we argue as in the hyperbolic case.  \qed
\end{proof}

\subsection{Poles of Green's functions and spectral properties} \label{sec:eigs}

All singularities of Green's functions $G_{d,R}^{\s,\gamma}(z)$ are isolated. In dimensions $d=1,2,3$ they correspond to the  point
spectrum of $H_{d,R}^{\s,\gamma}$. In this section we analyze the
location of these singularities. They come in two types: poles of 
$G_{d,R}^{\s}$ and zeros of $\gamma(z) + \Sigma_{d,R}^\s(z)$, resp. 
$\eta(z) + \Sigma_{d,R}^{\s,\varepsilon}(z)$. 
First we discuss the former.

Let $ l \in\nn_0$ and let $\omega_{d,l} =\frac{d-1}{2}+l$ parametrize 
the eigenvalues $R^{-2}\omega_{d,l}^2$ of the unperturbed operator. 
It is well-known that the multiplicity of $R^{-2}\omega_{d,l}^2$,
  or in other words the dimension of the range of $\pp_{d,l,R}^{\s}$,
  is $m_{d,l} = \binom{d+l}{d} - \binom{d+l-2}{d}$.
 Therefore,
  \begin{equation}
    \int \pp_{d,l,R}^\s(x,Rx_0)^2 \dd x =
 \pp^\s_{d,l,R}(Rx_0,Rx_0) = \frac{m_{d,l}}{|\SS^d| R^d}. \label{dime}\end{equation}
The right-hand side of \eqref{dime} can be verified explicitly using 
an appropriately rescaled version of \eqref{jaco1}. Moreover, the rank 
of the residue of the unperturbed Green's operator $G_{d,R}^\s(z)$ at 
$z=R^{-2}\omega_{d,l}^2$ is $m_{d,l}$. Let us show that after 
perturbation this rank drops by $1$.

\bet 
$G_{d,R}^{\s,\gamma}(z)$ has a pole of rank $m_{d,l}-1$ at 
$R^{-2}\omega_{d,l}^2$ . In particular, since $m_{d,0}=1$,  
the perturbed Green's function does not have a pole at 
$R^{-2} \omega_{d,0}^2$.
\eet

\proof
  We have
\begin{equation}
    G^{\s}_{d,R}(z;x,x') = \frac{\pp_{d,l,R}^\s(x,x')}{R^{-2}\omega_{d,l}^2-z} + R(z;x,x'),
\end{equation}
with a remainder $R(z;x,x')$ nonsingular at $z=R^{-2}\omega_{d,l}^2$ and satisfying
\begin{equation}
    \int_{\SS^d_R} \pp^\s_{d,l,R}(x,x') R(z;x',x'') \dd x' =0.
\end{equation}
From this and \eqref{dime} we can deduce that near $z=R^{-2}\omega_{d,l}^2$, function
$ \sigma^\s_{d,R}(z)$ is given by 
\begin{align}
  \sigma^\s_{d,R}(z) & =
\int G_{d,R}^\s(z;0,y)^2\dd y\\
&=                       \frac{1}{(R^{-2}\omega_{d,l}^2-z)^2} \int \pp_{d,l,R}^\s(x,Rx_0)^2 \dd x + \mathcal{O}(1) \\
    & = \frac{1}{(R^{-2}\omega_{d,l}^2-z)^2} \frac{m_{d,l}}{|\SS^d| R^d} + \mathcal{O}(1).
\end{align}
The self-energy thus satisfies
\begin{equation}
    \Sigma^s_{d,R}(z) =-  \frac{1}{R^{-2}\omega_{d,l}^2-z} \frac{m_{d,l}}{|\SS^d| R^d} + \mathcal{O}(1).
\end{equation}
In particular, $\gamma(z)+ \Sigma_{d,R}^{\s}(z) \neq 0$ at 
$z= R^{-2}\omega_{d,l}^2$ due to the singularity of $\Sigma_{d,R}^{\s}(z)$
at this point.
Hence 
\begin{equation}
    G^{\s , \gamma}_{d,R} (z; x,x') = \frac{\pp_{d,l,R}^\s(x,x') - \frac{|\SS^d| R^d}{m_{d,l}} \pp^\s_{d,l,R}(x,Rx_0) \pp^\s_{d,l,R}(Rx_0,x')}{R^{-2}\omega_{d,l}^2-z} +\mathcal{O}(1).
    \label{eq:green_residue}
\end{equation}
Now note that $\sqrt{\frac{|\SS^d| R^d}{m_{d,l}}} \pp^\s_{d,l,R}(\cdot
,Rx_0) = \sqrt{\frac{|\SS^d| R^d}{m_{d,l}}} \pp^\s_{d,l,R}(Rx_0,\cdot)
$ is a real-valued and $L^2$-normalized vector in the range of
$\pp^s_{d,l,R}$. We remark that $\sqrt{\frac{|\SS^d| R^d}{m_{d,l}}} \pp^\s_{d,l,R}(\cdot
,Rx_0)$ may be characterized as the unique normalized vector in the range of $\pp^s_{d,l,R}$ which is invariant under orthogonal transformations preserving $Rx_0$ and is non-negative at $R x_0$. Hence the numerator in \eqref{eq:green_residue} is (the
integral kernel of) the orthogonal projection onto the orthogonal
complement of $\sqrt{\frac{|\SS^d| R^d}{m_{d,l}}} \pp^\s_{d,l,R}(\cdot
,Rx_0)$ in the range of $\pp^\s_{d,l,R}$. In particular it is the
kernel of a projection of rank $m_{d,l}-1$. \qed

The rank of the residue at $R^{-2}\omega_{d,l}^2$ drops by one because
the pole corresponding to one ``eigenvector'' is shifted. We find the
shifted poles by solving the equation
\begin{align} 
\gamma(z)+ \Sigma_{d,R}^{\s}(z) =0 
\quad\text{resp.}\quad
\eta(z) + \Sigma_{d,R}^{\s,\varepsilon}(z)=0.
\end{align}

First let us consider dimension $1$. The unperturbed poles are at $z= R^{-2}l^2$, with multiplicity $1$ for $l=0$ and multiplicity $2$ for $l=1,2,\dots$. The perturbation cancels the pole at $0$ and decreases the multiplicity for $l \geq 1$ to $1$. Putting $\beta R = \i t$ and $\epsilon = \frac{\gamma}{R}$, the equation for shifted poles takes the form
\begin{equation}
    -\frac{1}{2 t} \cot(\pi t) =  \epsilon.
\end{equation}
If $\gamma =0$, then $\epsilon=0$ and the solutions are half-integers. 
Negative solutions correspond to the same $z$, so we focus at positive 
half-integers. If $\gamma \neq 0$, then $\epsilon $ is nonzero but small, 
so we can use the implicit function theorem to find a solution at
\begin{equation}
    t= l + \frac{1}{2} + \frac{2l+1}{\pi} \epsilon + \mathcal{O}(\epsilon^2).
\end{equation}
We remark that if $\gamma \in \rr$, this solution is in $]l,l+1[$ (no matter the size of $\epsilon$).

Unpacking the notation, the above calculation proves:

\begin{theoreme}
    $G_{1,R}^{s, \gamma}(z)$ has poles at $z=R^{-2}l^2$ with $l= 1,2,\dots$ and at 
    \begin{equation}
     E_{1,l,R}^{ \gamma} = \left( \frac{l + \frac12}{R} \right)^2 \left( 1 + \frac{4 \gamma}{\pi R} + \mathcal{O} \left( \frac{\gamma^2}{R^2} \right) \right), \qquad l = 0, 1 , \dots.   
    \end{equation}
All residues are rank one projections.
\end{theoreme}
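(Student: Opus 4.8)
The plan is to unpack the two-fold description of poles already established in this subsection and to make everything explicit in the case $d=1$, where $\gamma$ is simply a real constant. I would first split the poles of $G_{1,R}^{\s,\gamma}(z)$ into (i) poles of the free operator $G_{1,R}^{\s}(z)$, sitting at $z=R^{-2}l^2$, $l\in\nn_0$, and (ii) zeros of $\gamma+\Sigma_{1,R}^{\s}(z)$. For family (i) I would invoke the preceding theorem: the unperturbed multiplicities are $m_{1,0}=1$ and $m_{1,l}=2$ for $l\geq1$, so the perturbation kills the pole at $z=0$ and leaves a rank-one pole at each $z=R^{-2}l^2$, $l\geq1$; by the residue formula \eqref{eq:green_residue} that residue is the orthogonal projection, inside the two-dimensional eigenspace, onto the orthogonal complement of the rotation-invariant vector, hence a rank-one projection.

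For family (ii) I would first compute $\Sigma_{1,R}^\s$ in closed form: from $\Sigma_{1,R}^\s(-\beta^2)=R\,\Sigma_1^\s(-(\beta R)^2)$ together with \eqref{eq:Sigma_1_s} one gets $\Sigma_{1,R}^\s(-\beta^2)=-\tfrac{\coth(\pi\beta R)}{2\beta}$. Setting $\beta R=\i t$ with $t>0$ (so that $z=-\beta^2=t^2/R^2$ runs over the positive reals, where the eigenvalues of the self-adjoint $H_{1,R}^{\s,\gamma}$ live) and using $\coth(\i\pi t)=-\i\cot(\pi t)$, the equation $\gamma+\Sigma_{1,R}^\s(z)=0$ reduces to $-\tfrac{\cot(\pi t)}{2t}=\epsilon$ with $\epsilon:=\gamma/R$, exactly the equation stated in the preamble. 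For $\gamma=0$ the positive solutions are $t=l+\tfrac12$, $l\in\nn_0$, i.e. $z=R^{-2}(l+\tfrac12)^2$. For $\gamma\neq0$ I would apply the implicit function theorem to $F(t,\epsilon)=\tfrac{\cot(\pi t)}{2t}+\epsilon$ near $(l+\tfrac12,0)$, where $\cot(\pi t)=0$ and $\csc^2(\pi t)=1$ so that $\partial_tF=-\tfrac{\pi}{2l+1}\neq0$ and $\partial_\epsilon F=1$; this gives a unique branch $t(\epsilon)=l+\tfrac12+\tfrac{2l+1}{\pi}\epsilon+\mathcal O(\epsilon^2)$. Substituting $\epsilon=\gamma/R$ and $z=t(\epsilon)^2/R^2$ and expanding the square, using $2l+1=2(l+\tfrac12)$, yields the asserted $E_{1,l,R}^{\gamma}$.

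Two supplementary checks remain. To see that (ii) lists no spurious or missing poles, I would note that $t\mapsto-\tfrac{\cot(\pi t)}{2t}$ has derivative $\bigl(\pi t+\tfrac12\sin 2\pi t\bigr)/(2t^2\sin^2\pi t)>0$ on each interval $]l,l+1[$ and runs there monotonically from $-\infty$ to $+\infty$, so for every $\epsilon$ exactly one zero lies in each gap; these zeros sit strictly between consecutive unperturbed poles, in agreement with the fact (established in the preceding theorem) that $\gamma+\Sigma_{1,R}^\s$ does not vanish at $z=R^{-2}l^2$. For the residues at the $E_{1,l,R}^\gamma$, since $\sigma_{1,R}^\s(z)=\int G_{1,R}^\s(z;x,Rx_0)^2\,\d x>0$ for real $z$ off the spectrum (the integrand is a real square), we have $(\gamma+\Sigma_{1,R}^\s)'=-\sigma_{1,R}^\s\neq0$ at the zero, so it is simple and, by the ansatz \eqref{eq:Gs_scaled_odd}, the residue of $G_{1,R}^{\s,\gamma}$ there is a nonzero multiple of $G_{1,R}^\s(E_{1,l,R}^\gamma;\cdot,Rx_0)\otimes G_{1,R}^\s(E_{1,l,R}^\gamma;Rx_0,\cdot)$, hence rank one; for real $\gamma$ this residue is (up to sign) a spectral projection of the self-adjoint $H_{1,R}^{\s,\gamma}$, so it is a rank-one orthogonal projection.

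I do not expect a genuine obstacle: the transcendental equation has already been derived, and the rest is elementary perturbation analysis. The only points needing care are the sign bookkeeping in the substitution $\beta R=\i t$ and, above all, the completeness argument — checking that the implicit-function branch captures all zeros of $\gamma+\Sigma_{1,R}^\s$, so that the pole list is exhaustive — which the monotonicity computation handles.
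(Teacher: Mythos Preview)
Your proposal is correct and follows essentially the same approach as the paper: reduce to the transcendental equation $-\tfrac{\cot(\pi t)}{2t}=\epsilon$ via $\beta R=\i t$, $\epsilon=\gamma/R$, and apply the implicit function theorem near $t=l+\tfrac12$. You supply more detail than the paper does---the explicit derivative computation, the monotonicity argument ensuring exactly one root per interval $]l,l+1[$, and the verification that the residues at the shifted poles are rank-one projections---whereas the paper only remarks that the solution lies in $]l,l+1[$ for real $\gamma$ and asserts the rank-one claim without elaboration.
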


We see that for $d=1$ and large $R$, the eigenvalues are approximated by these for $\gamma=0$. Starting from dimension $2$, the eigenvalues approach the unperturbed ones (infinite $\gamma$) instead.

\begin{theoreme}
For $d=2,3$, the $l$th ($l=0,1,\dots$) shifted pole of $G_{d,R}^{s,\gamma}$ is at
\begin{itemize}
    \item $d=2$: $E_{2,l,R}^{ \gamma} = \frac{1}{R^2} \left( (l+ \frac12)^2 + \frac{l+ \frac12}{\ln \frac{R}{a}} + \mathcal{O} \left(\frac{1}{\ln^2 \frac{R}{a}} \right) \right)$, 
    where $a = \ee^{2 \pi \gamma}=\ee^{-\varepsilon}$, 
    \item $d=3$: $E_{3,l,R}^{ \gamma} = \frac{(l+1)^2}{R^2} \left( 1 - \frac{1}{2 \pi^2 R \gamma } + \mathcal{O} \left( \frac{1}{R^2 \gamma^2} \right) \right)$, except for the case $\gamma=0$, in which the pole is at $E_{3,l,R}^{ 0} = \frac{(l + \frac12)^2}{R^2}$.
\end{itemize}
The residues are rank one projections.
\end{theoreme}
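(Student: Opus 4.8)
The plan is to identify the shifted poles with the zeros of the full self-energy and then read off the residues from the Krein-type formulas \eqref{eq:Gs_scaled_odd} and \eqref{eq:Gs_scaled_even}. A shifted pole $z_0$ lies off the unperturbed spectrum, so $G_{d,R}^\s(z_0;x,x')$ is regular there and the singular contribution comes only from the second term; hence $z_0$ is a zero of $f(z):=\gamma(z)+\Sigma_{d,R}^\s(z)$ when $d=3$ (here $\deg\gamma\le\frac{d-3}{2}=0$, so $\gamma$ is the constant appearing in the theorem), resp.\ of $\eta(z)+\Sigma_{d,R}^{\s,\varepsilon}(z)=\Sigma_{d,R}^{\s,\varepsilon}(z)$ when $d=2$ (here $\deg\eta\le\frac{d-4}{2}=-1$, so $\eta=0$). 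Inserting the explicit self-energies \eqref{eq:Sigma_3_s}, \eqref{d2} together with the scaling relations $\Sigma_{3,R}^\s(-\beta^2)=R^{-1}\Sigma_3^\s(-(\beta R)^2)$ and $\Sigma_{2,R}^{\s,\varepsilon}(-\beta^2)=\Sigma_2^{\s,\varepsilon+\ln R}(-(\beta R)^2)$, substituting $z=R^{-2}t^2$ (equivalently $\beta R=\i t$), and using $\coth(\i\pi t)=-\i\cot(\pi t)$ resp.\ $\psi(\tfrac12+\i\beta R)+\psi(\tfrac12-\i\beta R)=\psi(\tfrac12+t)+\psi(\tfrac12-t)$, the pole equation becomes
\[ d=3:\quad t\cot(\pi t)=-4\pi R\gamma, \qquad\qquad d=2:\quad \psi\big(\tfrac12+t\big)+\psi\big(\tfrac12-t\big)=2\ln(R/a), \]
with $a=\ee^{-\varepsilon}$. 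On each interval between two consecutive poles of $\cot(\pi t)$ (at $t\in\nn$) resp.\ of $\psi(\tfrac12-t)$ (at $t\in\nn_0+\tfrac12$) the left-hand side is strictly monotone — for $d=3$ because $\tfrac12\sin(2\pi t)-\pi t<0$ for all $t>0$, for $d=2$ because a termwise comparison gives $\psi'(\tfrac12+t)-\psi'(\tfrac12-t)<0$ for $t>0$ — so there is exactly one root per band, and the root labelled $l$ is the one approaching the $l$th unperturbed value, namely $\omega_{3,l}=l+1$ for $d=3$ and $l+\tfrac12$ for $d=2$.

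For $d=3$, $\gamma\ne0$: as $R\to\infty$ the right-hand side $-4\pi R\gamma$ tends to $\mp\infty$, so the root labelled $l$ is close to $t=l+1$; writing $t=l+1+\delta$ and using $\cot(\pi t)=\frac{1}{\pi\delta}+\mathcal{O}(\delta)$ gives $\frac{l+1}{\pi\delta}(1+\mathcal{O}(\delta))=-4\pi R\gamma$, hence $\delta=-\frac{l+1}{4\pi^2R\gamma}+\mathcal{O}(\frac{1}{R^2\gamma^2})$, so $t^2=(l+1)^2\big(1-\frac{1}{2\pi^2R\gamma}+\mathcal{O}(\frac{1}{R^2\gamma^2})\big)$ and division by $R^2$ yields the claimed $E_{3,l,R}^\gamma$. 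Existence, uniqueness, and the expansion of $\delta$ follow from the implicit function theorem (or a Rouché argument) applied to the analytic function $t\cot(\pi t)+4\pi R\gamma$ on a small disc about $l+1$, once $R|\gamma|$ is large. When $\gamma=0$ the equation is just $\cot(\pi t)=0$, giving $t=l+\tfrac12$ exactly for every $R$, i.e.\ $E_{3,l,R}^0=(l+\tfrac12)^2/R^2$. For $d=2$: the right-hand side $2\ln(R/a)\to+\infty$, so the root labelled $l$ approaches the pole of $\psi(\tfrac12-t)$ at $t=l+\tfrac12$; writing $t=l+\tfrac12+\delta$ and using $\psi(\tfrac12-t)=\frac1\delta+\mathcal{O}(\delta)$ and $\psi(\tfrac12+t)=\psi(l+1)+\mathcal{O}(\delta)$, the equation reads $\frac1\delta=2\ln(R/a)-\psi(l+1)+\mathcal{O}(\delta)$, so $\delta=\frac{1}{2\ln(R/a)}+\mathcal{O}(\frac{1}{\ln^2(R/a)})$; hence $t^2=(l+\tfrac12)^2+\frac{l+\tfrac12}{\ln(R/a)}+\mathcal{O}(\frac{1}{\ln^2(R/a)})$ and division by $R^2$ gives $E_{2,l,R}^\gamma$. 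The subleading constant $\psi(l+1)$ is legitimately absorbed into the logarithmic error since $\ln(R/a)\to\infty$ as $R\to\infty$ with $\gamma$ fixed.

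For the residues, observe that each root $z_0$ above is a \emph{simple} zero of $f$ (resp.\ of $\eta+\Sigma_{d,R}^{\s,\varepsilon}$): in the reduced variable $\frac{df}{dt}$ is dominated near $z_0$ by the double-pole contribution ($\sim-\frac{l+1}{4\pi^2R\delta^2}$ for $d=3$, $\sim-\frac{1}{\delta^2}$ for $d=2$) and is nonzero, while $\frac{dz}{dt}=2t/R^2\ne0$. Since $z_0\notin\sigma(H_{d,R}^\s)$, the factor $x\mapsto G_{d,R}^\s(z_0;x,Rx_0)$ is finite, not identically zero, and (by $G_{d,R}^\s(z_0)^*=G_{d,R}^\s(\overline{z_0})=G_{d,R}^\s(z_0)$ for real $z_0$) real. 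Hence by \eqref{eq:Gs_scaled_odd}/\eqref{eq:Gs_scaled_even} the residue of $G_{d,R}^{\s,\gamma}$ (resp.\ $G_{d,R}^{\s,\varepsilon,\eta}$) at $z_0$ equals $\frac{1}{f'(z_0)}\,G_{d,R}^\s(z_0;x,Rx_0)\,G_{d,R}^\s(z_0;Rx_0,x')$, an operator of rank one; and since for $d=1,2,3$ the point $z_0$ is a genuine eigenvalue of the self-adjoint operator $H_{d,R}^{\s,\gamma}$ (resp.\ $H_{d,R}^{\s,\varepsilon,\eta}$), this residue is minus the orthogonal projection onto the corresponding eigenspace, which is therefore one-dimensional. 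The main obstacle I expect is making the two asymptotic expansions rigorous with the stated error orders — especially the $d=2$ case, where one must control the non-elementary function $\psi(\tfrac12+t)+\psi(\tfrac12-t)$ near its poles and check that all corrections beyond $\frac{1}{2\ln(R/a)}$ collapse into $\mathcal{O}(\frac{1}{\ln^2(R/a)})$ uniformly — together with the bookkeeping matching each root to its index $l$ for both signs of $\gamma$; everything else is elementary once the reduction to the two scalar equations is in place.
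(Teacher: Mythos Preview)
Your proposal is correct and follows essentially the same route as the paper: reduce the pole condition to the scalar equations $t\cot(\pi t)=-4\pi R\gamma$ (for $d=3$) and $\psi(\tfrac12+t)+\psi(\tfrac12-t)=2\ln(R/a)$ (for $d=2$), expand about $t=l+1$ resp.\ $t=l+\tfrac12$, and invoke the implicit function theorem. The paper is terser---it declares the residue statement ``obvious'' and does not discuss monotonicity or uniqueness of the root per band---so your added details (the sign of $\tfrac12\sin(2\pi t)-\pi t$, the termwise comparison for $\psi'$, and the explicit rank-one residue computation) are genuine supplements rather than a different argument.
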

\begin{proof}
The claim about residues is obvious. We consider the equation for 
the shifted eigenvalue. Throughout the proof we set
$\beta R = \i t$. First consider $d=2$. We have the equation
\begin{equation}
    \psi (\tfrac{1}{2} + t) + \psi(\tfrac{1}{2} - t) = 2 \ln \frac{R}{a}. 
\end{equation}
The right-hand side blows up for $R \to \infty$, so we denote it $\frac{1}{\epsilon}$. We expand the left hand side around the unperturbed pole: writing $t = l+\frac{1}{2}+ \delta$ we obtain
\begin{equation}
    \frac{1}{\delta} + \mathcal{O}(1)_{\delta \to 0} = \frac{1}{\epsilon}.
\end{equation}
By the implicit function theorem there exists a solution $\delta = \epsilon + \mathcal{O}(\epsilon^2)$. 

Next consider $d=3$. We have the equation
\begin{equation}
    - t \cot(\pi t) = 4 \pi R \gamma.
\end{equation}
We denote the right hand side by $\frac{1}{\epsilon}$ and expand the left hand side around $l+1$, finding a solution of the form 
\begin{equation}
    t = (l+1) \left( 1 - \frac{\epsilon}{\pi} + \mathcal{O}(\epsilon^2) \right).
\end{equation}
Separate analysis of $\gamma =0$ (hence $\epsilon = \infty$) is elementary. 
\end{proof}

Due to the presence of the polynomials $\gamma(z)$ resp. $\eta(z)$, 
the situation is  more complicated in higher dimensions,
especially in even dimensions $d\geq4$. We distinguish three cases:
\begin{enumerate}
 \item $\gamma(z)\equiv0$ resp. $\eta(z)\equiv0$, 
where the poles correspond to the zeros of the reference self-energies. 
We might call this the \emph{unitary gas case}.
\item  $\gamma(z)$ resp. $\eta(z)$ is a non-constant polynomial. The shifted poles of the perturbed Green's functions are located near the poles of the unperturbed Green's function. As we will see, the rate of convergence of perturbed poles to unperturbed ones as $R \to \infty$ is modified if $\gamma(z)$ resp. $\eta(z)$ vanishes at zero, and depends on the degree of vanishing.
\item $\gamma(z)=\gamma_0$ resp. $\eta(z)=\eta_0$ are nonzero constants. 
This could be treated on the same footing as case 2. with $\gamma(z)$
resp. $\eta(z)$ not vanishing at zero, but since the conclusions are particularly simple we prefer to state them separately.
\end{enumerate}

\paragraph{Odd dimensions $d\geq5$.}
Let us first find the zeros and poles of the reference self energy:
\begin{lemma}
 \label{lem:poles_SigRef_odd} 
 Let $d\geq5$ be an odd integer. The zeros and poles 
 of the reference self-energy are located at $\zeta=\ii\beta\geq0$ such that 
 \begin{align}
    \Sigma_{d,R}^\s(\zeta^2) &=0 &&\quad\Leftrightarrow\quad 
  \zeta= \frac{l+\tfrac12}{R},\quad l\in\nn_0, \\ \notag 
  \Sigma_{d,R}^\s(\zeta^2) &\quad\text{has a pole} &&\quad\Leftrightarrow\quad 
  \zeta= \frac{k}{R},\quad k= l+\frac{d-1}{2},\quad l\in\nn_0.
 \end{align}
\end{lemma}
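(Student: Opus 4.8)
The plan is to reduce the statement to an elementary analysis of the function $t\mapsto t\cot(\pi t)$. I would start from the closed form \eqref{greenodd1} for the unit-sphere self-energy together with the rescaling rule $\Sigma_{d,R}^\s(-\beta^2)=R^{2-d}\Sigma_d^\s\big(-(\beta R)^2\big)$ (odd $d$). Since \eqref{greenodd1} is even in $\beta$, it is a genuine meromorphic function of $z=-\beta^2$; putting $\zeta=\i\beta$, so that $z=\zeta^2$ and $\beta R=-\i\zeta R$, one has $(\beta R)\coth(\pi\beta R)=\zeta R\cot(\pi\zeta R)$ and $-k^2-(\beta R)^2=(\zeta R)^2-k^2$, hence, with $t:=\zeta R$,
\[
 \Sigma_{d,R}^\s(\zeta^2)=\frac{\pi\,R^{2-d}}{(4\pi)^{\frac d2}\Gamma(\frac d2)}\;t\cot(\pi t)\prod_{k=1}^{\frac{d-3}{2}}\big(t^2-k^2\big).
\]
The right-hand side is real and even in $t$, so restricting to $\zeta\ge0$ (equivalently $t\ge0$) loses nothing.

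Next I would record the elementary facts about $t\cot(\pi t)$: it is holomorphic and nonzero at $t=0$ (with value $1/\pi$), it has a simple zero at each $t\in\zz+\tfrac12$, and a simple pole at each $t\in\zz\setminus\{0\}$, with $\cot(\pi t)\sim\frac1{\pi(t-m)}$ near $t=m\in\zz$. On the other side, the polynomial $\prod_{k=1}^{(d-3)/2}(t^2-k^2)$ is nonzero at $t=0$ and at every half-integer, and it has a simple zero precisely at $t=\pm1,\dots,\pm\frac{d-3}{2}$. Combining these: at $t=l+\tfrac12$ ($l\in\nn_0$) the polynomial factor does not vanish, so $\Sigma_{d,R}^\s(\zeta^2)$ has a simple zero; at $t=k$ with $1\le k\le\frac{d-3}{2}$ the simple pole of $\cot(\pi t)$ is exactly cancelled by the simple zero of the polynomial, leaving a finite nonzero value, and $t=0$ is a regular nonzero point as well; finally at $t=m$ with $m\ge\frac{d-1}{2}>\frac{d-3}{2}$ the polynomial does not vanish, so the simple pole of $\cot(\pi t)$ survives. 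Undoing $t=\zeta R$ gives the zeros at $\zeta=\frac{l+1/2}{R}$, $l\in\nn_0$, and the poles at $\zeta=\frac kR$ with $k=l+\frac{d-1}{2}$, $l\in\nn_0$, which is exactly the claim (both implications following from the fact that the only zeros, resp.\ poles, of the meromorphic function above are the ones listed).

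This is a routine computation; the only points that need care are the branch/sign bookkeeping when passing from $\coth$ to $\cot$ (and verifying that the resulting expression is real and even in $\zeta$, so the restriction $\zeta\ge0$ is harmless) and comparing the order of the simple pole of $\cot(\pi t)$ with the simple zero of $\prod_k(t^2-k^2)$ at $t=1,\dots,\frac{d-3}{2}$, in order to conclude that these points are regular rather than zeros or poles of $\Sigma_{d,R}^\s$. I do not anticipate a genuine obstacle.
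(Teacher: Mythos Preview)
Your proposal is correct and follows essentially the same approach as the paper: write the rescaled self-energy in the real variable $t=\zeta R$, factor it as a nonzero constant times $t\cot(\pi t)\prod_{k=1}^{(d-3)/2}(t^2-k^2)$ (the paper writes the same thing as $\cos(\pi\zeta R)$ over $\sin(\pi\zeta R)$ times the polynomial), and read off the zeros from the cosine and the poles from the sine after accounting for the cancellation at $t=1,\dots,\frac{d-3}{2}$. Your version is in fact slightly more careful than the paper's terse proof, since you explicitly match the order of the simple pole of $\cot(\pi t)$ against the simple zero of the polynomial at those small integers and note the regularity at $t=0$.
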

\begin{proof}
 The reference self-energy is
 \begin{align}
  \Sigma_{d,R}^\s(-\beta^2) 
    = \frac{\pi  \coth(\pi\beta R ) \beta}
   { (4\pi)^{\frac{d}{2}}\Gamma (\tfrac{d}{2})  }  
      \prod_{k=1}^{\frac{d-3}{2}}
    \Big(- \frac{k^2}{R^2} -\beta^2 \Big),\quad l\in\nn_0.
 \end{align}
 Writing $\beta=\ii \zeta$ with $\zeta\geq0$,  we find  
  \begin{align}
  \Sigma_{d,R}^\s(\zeta^2) 
    = \frac{\pi  } { (4\pi)^{\frac{d}{2}}\Gamma (\tfrac{d}{2})  }  
    \cos(\pi\zeta R)
      \frac{\zeta \prod_{k=1}^{\frac{d-3}{2}}
    \Big( \zeta^2- \frac{k^2}{R^2} \Big)}{\sin(\pi\zeta R)}.
 \end{align}
 The zeros of $\Sigma_{d,R}^\s$ are located at the zeros of the 
 cosine. The poles are located at the zeros of the sine, except 
 for the first few, which are canceled by the zeros of the numerator. 
 \qed
\end{proof}

Note that the location of the poles of $ \Sigma_{d,R}^\s$ 
 precisely corresponds to the unperturbed eigenvalues.

\begin{theoreme}
 \label{eq:poles_refSigma_odd}
 Let $d\geq5$ be an odd integer.
 \begin{enumerate}
  \item Suppose that $\gamma(z)\equiv0$ vanishes identically, then 
  $G_{d,R}^{s, \gamma}$ has a sequence of isolated poles located at 
 \begin{align}
   E_{d,l,R}^{0} = \frac{\big(l+\tfrac12\big)^2}{R^2},\quad l\in\nn_0.
 \end{align}
 \item Suppose that $\gamma(z)$ does not vanish identically and let $\nu$ be the order of vanishing of $\gamma(z)$ at $0$ ($\nu=0$ if $\gamma(0) \neq 0$). The $l$th shifted pole of $G_{d,R}^{s, \gamma}$ is located at
\begin{equation}
  E_{d,l,R}^{ \gamma} = \frac{\omega_{d,l}^2}{R^2} \left( 1 -  \frac{2 \prod_{k=1}^{\frac{d-3}{2}} (\omega_{d,l}^2 - k^2)}{ (4 \pi)^{\frac{d}{2}}\Gamma(\tfrac{d}{2}) R^{d-2} \gamma \left( R^{-2}\omega_{d,l}^2 \right)} + \mathcal{O}(R^{-2d+4+4 \nu}) \right).
\end{equation}
 \item In the special case where $\gamma(z)=\gamma_0\in\rr\setminus\{0\}$ is  
 constant, the $l$th shifted pole of $G_{d,R}^{s, \gamma}$ 
 is located at 
 \begin{align}
  E_{d,l,R}^{ \gamma} = \frac{\omega_{d,l}^2}{R^2} \left( 1 -  \frac{2 \prod_{k=1}^{\frac{d-3}{2}} (\omega_{d,l}^2 - k^2)}{ (4 \pi)^{\frac{d}{2}}\Gamma(\tfrac{d}{2}) R^{d-2} \gamma_0} + \mathcal{O}\big(R^{-2(d-2)}\big) \right).
 \end{align}
 In particular, the first correction to the 
 unperturbed eigenvalues is inversely proportional to the volume 
 of the sphere. 
 \end{enumerate}
\end{theoreme}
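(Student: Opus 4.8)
The plan is to use that, by \eqref{eq:Gs_scaled_odd}, every pole of $G_{d,R}^{\s,\gamma}$ other than an unperturbed eigenvalue $R^{-2}\omega_{d,l}^2$ (at which, as shown above, the residue merely drops in rank by one) is a zero of the full self-energy $\gamma(z)+\Sigma_{d,R}^\s(z)$; the $l$th shifted pole is precisely the zero that accounts for that rank drop. For part (1), with $\gamma\equiv0$ this is the equation $\Sigma_{d,R}^\s(z)=0$, whose solutions were already located in Lemma \ref{lem:poles_SigRef_odd} at $\zeta=\sqrt z=(l+\tfrac12)/R$, $l\in\nn_0$; hence $E_{d,l,R}^0=(l+\tfrac12)^2/R^2$. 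These zeros are simple and distinct from the poles of $G_{d,R}^\s$ (which sit at $\zeta=\omega_{d,l'}/R$ with $\omega_{d,l'}\in\zz$, while $l+\tfrac12$ is half-integral), so they are genuine simple poles of $G_{d,R}^{\s,0}$. This settles (1).

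For parts (2) and (3) the point is that near each unperturbed eigenvalue $z_0:=R^{-2}\omega_{d,l}^2$ the self-energy $\Sigma_{d,R}^\s$ has a simple pole while $\gamma$ is holomorphic, so $\gamma+\Sigma_{d,R}^\s$ has a unique nearby zero. First I would extract the principal part: from the explicit formula
\[
\Sigma_{d,R}^\s(\zeta^2)=\frac{\pi}{(4\pi)^{d/2}\Gamma(d/2)}\,\cot(\pi\zeta R)\,\zeta\prod_{k=1}^{(d-3)/2}\Big(\zeta^2-\tfrac{k^2}{R^2}\Big),\qquad \zeta=\sqrt z,
\]
established in the proof of Lemma \ref{lem:poles_SigRef_odd}, expanding the simple pole of $\cot(\pi\zeta R)$ at $\zeta R=\omega_{d,l}\in\zz$ and inserting $z=\zeta^2$, one obtains
\[
\Sigma_{d,R}^\s(z)=\frac{A_{d,l,R}}{z-z_0}+\Sigma_{d,R}^{\s,\reg}(z),\qquad A_{d,l,R}=\frac{2\,\omega_{d,l}^2\prod_{k=1}^{(d-3)/2}(\omega_{d,l}^2-k^2)}{(4\pi)^{d/2}\Gamma(d/2)\,R^d},
\]
with $A_{d,l,R}\neq0$ (because $\omega_{d,l}=\tfrac{d-1}{2}+l>\tfrac{d-3}{2}\geq k$) and a holomorphic remainder which the same expansion shows to obey $\Sigma_{d,R}^{\s,\reg}(z_0)=\mathcal{O}(R^{-(d-2)})$. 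The equation $\gamma(z)+\Sigma_{d,R}^\s(z)=0$ thus becomes $z-z_0=-A_{d,l,R}/\big(\gamma(z)+\Sigma_{d,R}^{\s,\reg}(z)\big)$.

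I would close with a fixed-point/implicit-function argument. Writing $\nu$ for the order of vanishing of $\gamma$ at $0$ and using $\deg\gamma\leq\tfrac{d-3}{2}$, one has $2\nu\leq d-3<d-2$, so $|\gamma(z_0)|\asymp R^{-2\nu}$ dominates $|\Sigma_{d,R}^{\s,\reg}(z_0)|=\mathcal{O}(R^{-(d-2)})$; likewise the variation of $\gamma$ and of $\Sigma_{d,R}^{\s,\reg}$ over a disk of radius $\asymp|A_{d,l,R}/\gamma(z_0)|$ around $z_0$ is of relative size $\mathcal{O}(R^{2-d+2\nu})$, again by $2\nu<d-2$. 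Hence the holomorphic function $(z-z_0)\big(\gamma(z)+\Sigma_{d,R}^{\s,\reg}(z)\big)+A_{d,l,R}$ equals $A_{d,l,R}\neq0$ at $z_0$ and has derivative $\approx\gamma(z_0)\neq0$, so by the implicit function theorem (or Rouch\'e on that disk) it has exactly one zero near $z_0$, namely $z-z_0=-\dfrac{A_{d,l,R}}{\gamma(z_0)}\big(1+\mathcal{O}(R^{-(d-2)+2\nu})\big)$. Substituting $A_{d,l,R}$, using $\gamma(z_0)=\gamma(R^{-2}\omega_{d,l}^2)$, and factoring out $\omega_{d,l}^2/R^2$ yields exactly the formula of (2) with relative error $\mathcal{O}(R^{-2d+4+4\nu})$; part (3) is the special case $\nu=0$, $\gamma(z_0)=\gamma_0$, where the relative error is $\mathcal{O}(R^{-2(d-2)})$, and the ``inversely proportional to the volume'' statement follows because, by the residue computation in the preceding theorem, $A_{d,l,R}=m_{d,l}/(|\SS^d|R^d)$, so the first correction equals $-m_{d,l}/\big(|\SS^d|R^d\gamma_0\big)$. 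The step requiring the most care is the $R$-bookkeeping: getting the constant $A_{d,l,R}$ right (the Laurent coefficient of $\cot$ combined with the change of variable $z=\zeta^2$) and, above all, confirming that $\Sigma_{d,R}^{\s,\reg}(z_0)$ really is subleading to $\gamma(z_0)$ even when $\gamma$ vanishes at the origin, since this is precisely what fixes the exponent of $R$ in the error term.
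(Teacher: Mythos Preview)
Your proposal is correct and follows essentially the same approach as the paper: part (1) via Lemma \ref{lem:poles_SigRef_odd}, part (3) as a special case of (2), and part (2) by an implicit-function/perturbation argument near each unperturbed eigenvalue. You supply considerably more detail than the paper's own proof, which merely says ``analogously to lower dimensions'' and points to the treatment of $d=2,3$; in particular your explicit extraction of the Laurent residue $A_{d,l,R}$, the bound $\Sigma_{d,R}^{\s,\reg}(z_0)=\mathcal{O}(R^{-(d-2)})$, and the verification that $|\gamma(z_0)|\asymp R^{-2\nu}$ dominates this remainder (using $2\nu\leq d-3<d-2$) are exactly the ingredients the paper leaves implicit. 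The only cosmetic difference is that the paper's lower-dimensional template works in the variable $t=i\beta R$, while you work directly in $z$; both lead to the same fixed-point equation and the same error exponent $-2d+4+4\nu$.
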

\begin{proof}
 The first case $\gamma(z)\equiv0$ follows directly from Lemma 
 \ref{lem:poles_SigRef_odd}.
 The third case is a special case of the second, and the latter 
 can be derived analogously to lower dimensions. The only 
 complication is that $\gamma(z)$ may vanish at zero, in which case the scaling of the pole shift with $R$ is modified. This is taken into 
 account by the introduction of $\nu$.  \qed
\end{proof}

We note that $\nu$ can be as large as $\tfrac{d-3}{2}$, in which 
case the shift of the eigenvalue is proportional to $R^{-3}$ and the 
first neglected term is proportional to $R^{-4}$. That is, the scaling 
of the unperturbed eigenvalue and the scaling of the shift with $R$ 
differ only by a single power.

\paragraph{Even dimensions $d\geq4$.}
In even dimensions, we considered a family of reference self-energies 
parametrized by $\varepsilon\in\rr$. We look for the zeros of the 
reference self-energies first. 
\begin{lemma}
 \label{lem:poles_refSigma_even} Let $d\geq4$ be an even integer. For 
 large $R$, the 
 zeros of the family of reference self-energies are located 
 at $\zeta=\ii\beta\geq0$ such that 
 \begin{align}
  \zeta^2=\begin{cases}
         \dfrac{1}{R^2}\big(\frac{1}{2}+ j\big)^2, &\quad j=0,\dots, \frac{d-4}{2},
         \vspace{12pt} \\ 
        \dfrac{1}{R^2}
        \Big( \omega_{d,l}^2 + \dfrac{\omega_{d,l}}{\ln(\ee^\varepsilon R)} + \mathcal{O}\Big(\dfrac{1}{\ln^2(\ee^\varepsilon R)}\Big) \Big), &\quad l\in\nn_0.
        \end{cases}.
 \end{align}
\end{lemma}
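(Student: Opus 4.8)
The plan is to make the structure of the reference self-energy explicit and then read off its zeros factor by factor. Substituting $\beta=\ii\zeta$ with $\zeta\geq0$ (so that $z=\zeta^2\geq0$) into \eqref{integ---s} together with the rescaling prescription $\Sigma_{d,R}^{\s,\varepsilon}(-\beta^2)=R^{2-d}\Sigma_{d}^{\s,\varepsilon+\ln R}(-(\beta R)^2)$ gives
\begin{equation}
\Sigma_{d,R}^{\s,\varepsilon}(\zeta^2)=\frac{R^{2-d}}{(4\pi)^{d/2}\Gamma(d/2)}\,D_R(\zeta)\,P_R(\zeta),\qquad P_R(\zeta):=\prod_{j=0}^{(d-4)/2}\Big((\zeta R)^2-\big(\tfrac12+j\big)^2\Big),
\end{equation}
where $D_R(\zeta):=\psi\big(\tfrac{d-1}{2}-\zeta R\big)+\psi\big(\tfrac{d-1}{2}+\zeta R\big)-2\ln(\ee^{\varepsilon}R)$. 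The prefactor being a nonzero constant, the zeros with $\zeta\geq0$ are exactly the zeros of $P_R$ at which $D_R$ is finite and nonzero, together with the zeros of $D_R$ at which $P_R$ is finite and nonzero. It is convenient to work in the variable $t:=\zeta R\geq0$.

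The polynomial part is immediate: $P_R$ vanishes precisely at $t=\tfrac12+j$, $j=0,\dots,\tfrac{d-4}{2}$. At such a point $\tfrac{d-1}{2}\mp(\tfrac12+j)\in\{\tfrac{d-2}{2}-j,\tfrac{d}{2}+j\}$ are positive (indeed $\geq1$), so $D_R$ is finite there, and for $R$ large its $\psi$-part is a fixed finite number while $-2\ln(\ee^\varepsilon R)\to-\infty$, so $D_R<0\neq0$; this yields the first family $\zeta^2=R^{-2}(\tfrac12+j)^2$. For the digamma part set $F(t):=\psi(\tfrac{d-1}{2}-t)+\psi(\tfrac{d-1}{2}+t)$, so that $D_R(\zeta)=0\iff F(t)=2\ln(\ee^\varepsilon R)$. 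For $t>0$ the only poles of $F$ are those of $\psi(\tfrac{d-1}{2}-t)$, located at $t=\omega_{d,l}$, $l\in\nn_0$, and these never coincide with the zeros of $P_R$ since $\omega_{d,l}\geq\tfrac{d-1}{2}>\tfrac{d-3}{2}\geq\tfrac12+j$. I would then establish the global shape of $F$: it is strictly decreasing on $(0,\omega_{d,0})$ with finite supremum $F(0^+)=2\psi(\tfrac{d-1}{2})$, and on each gap $(\omega_{d,l},\omega_{d,l+1})$ it is strictly decreasing from $+\infty$ to $-\infty$. The monotonicity follows from $F'(t)=-\psi'(\tfrac{d-1}{2}-t)+\psi'(\tfrac{d-1}{2}+t)<0$: for $t\in(\omega_{d,l},\omega_{d,l+1})$ one has $\tfrac{d-1}{2}-t\in(-(l+1),-l)$, hence $\psi'(\tfrac{d-1}{2}-t)\geq(\omega_{d,l}-t)^{-2}>1$, whereas $\tfrac{d-1}{2}+t>d-1\geq3$ gives $\psi'(\tfrac{d-1}{2}+t)\leq\psi'(3)=\tfrac{\pi^2}{6}-\tfrac54<1$ (for $d\geq4$). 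Consequently, once $R$ is large enough that $2\ln(\ee^\varepsilon R)>2\psi(\tfrac{d-1}{2})$, $D_R$ has no zero in $(0,\omega_{d,0})$ and exactly one zero in each $(\omega_{d,l},\omega_{d,l+1})$, necessarily near the left endpoint $\omega_{d,l}$ because $F$ is decreasing and the right-hand side is large. Since $P_R(\omega_{d,l})\neq0$, these are genuine zeros of $\Sigma_{d,R}^{\s,\varepsilon}$; combined with the first family this is the complete list, which is what upgrades the result from an existence statement to a classification.

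It remains to pin down the location of the $l$th digamma zero. Near $t=\omega_{d,l}$ I would insert the expansion $\psi(\tfrac{d-1}{2}-t)=(t-\omega_{d,l})^{-1}+\psi(l+1)+\mathcal{O}(t-\omega_{d,l})$ (principal part plus the regular part of $\psi$ at $-l$) and $\psi(\tfrac{d-1}{2}+t)=\psi(d-1+l)+\mathcal{O}(t-\omega_{d,l})$. Writing $\epsilon:=(2\ln(\ee^\varepsilon R))^{-1}$ and $\delta:=t-\omega_{d,l}$, the equation $F(t)=\epsilon^{-1}$ reads $\delta^{-1}+c_{d,l}+\mathcal{O}(\delta)=\epsilon^{-1}$ with $c_{d,l}=\psi(l+1)+\psi(d-1+l)$, and the implicit function theorem delivers the unique small root $\delta=\epsilon+\mathcal{O}(\epsilon^2)$ (for $R$ large at fixed $l$). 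Hence $t=\omega_{d,l}+\tfrac{1}{2\ln(\ee^\varepsilon R)}+\mathcal{O}(\ln^{-2}(\ee^\varepsilon R))$ and $\zeta^2=R^{-2}t^2=R^{-2}\big(\omega_{d,l}^2+\tfrac{\omega_{d,l}}{\ln(\ee^\varepsilon R)}+\mathcal{O}(\ln^{-2}(\ee^\varepsilon R))\big)$, the second family in the statement.

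I expect the only real obstacle to be the global step in the digamma analysis, namely excluding any spurious zeros of $D_R$: everything there reduces to the elementary but slightly delicate inequality $\psi'(\tfrac{d-1}{2}-t)>1>\psi'(\tfrac{d-1}{2}+t)$ on each gap $(\omega_{d,l},\omega_{d,l+1})$, together with the bookkeeping check that the zeros of $P_R$ and the poles of $D_R$ are disjoint. The rest is routine asymptotics and the implicit function theorem.
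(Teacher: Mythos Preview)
Your proof is correct and follows essentially the same route as the paper: factor $\Sigma_{d,R}^{\s,\varepsilon}$ into the polynomial $P_R$ and the digamma expression $D_R$, read off the zeros of $P_R$ directly, and locate the zeros of $D_R$ near $\omega_{d,l}$ by expanding $\psi(\tfrac{d-1}{2}-t)$ about its pole and invoking the implicit function theorem. You go somewhat beyond the paper by supplying the global monotonicity argument for $F(t)$ on each gap, which upgrades the existence statement to a full classification of the zeros; the paper's own proof omits this step.
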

\begin{proof}
Let $\beta R=\ii t$. 
The reference self-energies are 
\begin{align*}
\Sigma_{d,R}^{\s,\varepsilon} \big(\tfrac{t^2}{R^2}\big)
 &=   \frac{1
    }{ (4\pi)^{\tfrac{d}{2}} \Gamma\big(\tfrac{d}{2}\big)}
                       \Big(
  \psi\big(\tfrac{d-1}{2}+t\big) 
  +\psi\big(\tfrac{d-1}{2}-t \big) 
  -2\varepsilon - 2\ln R\Big) \\ \notag &\quad \times
   R^{2-d}\prod_{j=0}^{\tfrac{d-4}{2}} 
 \Big(t^2-\big(\tfrac12 + j\big)^2\Big).
\end{align*}
The zeros of the second line are obvious. If $t=0$, the 
whole expression is $\sim R^{2-d} \ln R$, which neither 
corresponds to a pole nor a zero (but it is an approximate zero 
for large $R$).

We look for the zeros of the first line, which correspond to 
\begin{align}
 \psi\big(\tfrac{d-1}{2}+t\big) 
  +\psi\big(\tfrac{d-1}{2}-t \big) 
  =2\varepsilon + 2\ln R.
\end{align}
The right-hand side is large for large $R$, so as for $d=2$, we denote 
it $\frac{1}{\epsilon}$ (note the difference between $\epsilon$ and 
$\varepsilon$). We perturb the left-hand side around the 
unperturbed poles by setting $t = \frac{d-1}{2} + l + \delta$ for 
$l\in\nn_0$. We obtain 
\begin{align}
 \frac{1}{\delta} + \mathcal{O}(1)_{\delta\to0} = \frac{1}{\epsilon},
\end{align}
so by the implicit function theorem there exists a solution 
$\delta= \epsilon + \mathcal{O}(\epsilon^2)$.
 \qed
\end{proof}

Lemma \ref{lem:poles_refSigma_even} allows us to describe the 
poles of the perturbed Green's functions. 
\begin{theoreme}
 \label{thm:poles_even_d}
 Let $d\geq4$ be an even integer.
 \begin{enumerate}
  \item Suppose that $\eta(z)\equiv0$ vanishes identically, then 
  $G_{d,R}^{s, \varepsilon,0}$ has a sequence of isolated poles located at 
 \begin{align}
   E_{d,l,R}^{\varepsilon,0} 
   = \dfrac{1}{R^2}
        \Big( \omega_{d,l}^2 + \dfrac{\omega_{d,l}}{\ln(\ee^\epsilon R)} + \mathcal{O}\Big(\dfrac{1}{\ln^2(\ee^\epsilon R)}\Big) \Big), &\quad l\in\nn_0,
 \end{align}
 and a finite number of additional poles at 
 \begin{align}
  E_{d,j,R}^{\varepsilon,0,{\rm exceptional}} 
  = \frac{1}{R^2}\Big(j+\frac12\Big)^2,\quad j=0,\dots,\tfrac{d-4}{2}.
 \end{align}

 \item
 Suppose that $\eta(z)$ does not vanish identically and let $\nu$ be the order of vanishing of $\eta(z)$ at $0$ ($\nu=0$ if $\eta(0) \neq 0$). The $l$th shifted pole of $G_{d,R}^{s, \gamma}$ is located at
\begin{align}
\label{eq:EV_even_general}
  E_{d,l,R}^{ \varepsilon,\eta} &= 
  \frac{1}{R^2}
  \left( \omega_{d,l}^2 
   -\frac{2\omega_{d,l}\prod_{j=0}^{\tfrac{d-4}{2}} 
 \Big(\omega_{d,l}^2-\big(\tfrac12 + j\big)^2\Big)
 }{(4\pi)^{\tfrac{d}{2}} \Gamma\big(\tfrac{d}{2}\big) R^{d-2} \eta\big(\tfrac{\omega_{d,l}^2}{R^2}\big)}
 + \mathcal{O}\Big( \ln(\ee^{\varepsilon} R) R^{-2d+4+4\nu}\Big)\right).
\end{align}
 \item In the special case where $\eta(z)=\eta_0\in\rr\setminus\{0\}$ is  
 constant, we have 
 \begin{align}
  E_{d,l,R}^{ \varepsilon,\eta_0} = 
  \frac{1}{R^2}
  \left( \omega_{d,l}^2 
   -\frac{2\omega_{d,l}\prod_{j=0}^{\tfrac{d-4}{2}} 
 \Big(\omega_{d,l}^2-\big(\tfrac12 + j\big)^2\Big)
 }{(4\pi)^{\tfrac{d}{2}} \Gamma\big(\tfrac{d}{2}\big) R^{d-2} \eta_0}
 + \mathcal{O}\Big( \ln(\ee^{\varepsilon} R) R^{-2(d-2)}\Big)\right).
 \end{align}
 In particular, the first correction to the 
 unperturbed eigenvalues is inversely proportional to the volume 
 of the sphere. 
 \end{enumerate}
\end{theoreme}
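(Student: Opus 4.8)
The plan is to sort the poles of $G_{d,R}^{\s,\varepsilon,\eta}(z)$ into two groups and handle each in the style already used for $d=2,3$. From the ansatz \eqref{eq:Gdgamma-s.}, a singularity of $G_{d,R}^{\s,\varepsilon,\eta}$ is either a pole of the unperturbed $G_{d,R}^\s$ --- which, by the same argument as in the preceding rank-drop theorem (the denominator $\eta+\Sigma_{d,R}^{\s,\varepsilon}$ has the same principal part at $z=R^{-2}\omega_{d,l}^2$ as $\Sigma_{d,R}^\s$), survives there with the rank of the residue lowered by one --- or a zero of $\eta(z)+\Sigma_{d,R}^{\s,\varepsilon}(z)$. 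Items~1--3 describe the second group. For item~1 ($\eta\equiv0$) these extra poles are exactly the zeros of the reference self-energy, already located in Lemma~\ref{lem:poles_refSigma_even}: the zeros of the factor $\prod_{j=0}^{(d-4)/2}(-\beta^2-(\tfrac12+j)^2)$ in \eqref{integ---s} produce the finitely many exceptional poles at $R^{-2}(j+\tfrac12)^2$ --- where $G_{d,R}^\s$ is regular, so $G_{d,R}^{\s,\varepsilon,0}$ has a simple pole --- while the zeros of the digamma factor give the family $E_{d,l,R}^{\varepsilon,0}$. So item~1 is immediate.

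For items~2 and 3 ($\eta\not\equiv0$) I would solve $\eta(z)+\Sigma_{d,R}^{\s,\varepsilon}(z)=0$ near the unperturbed eigenvalue $z_0:=R^{-2}\omega_{d,l}^2$. The first step is to read off the Laurent expansion of $\Sigma_{d,R}^{\s,\varepsilon}$ at $z_0$ from the explicit formula \eqref{integ---s} and the scaling relation $\Sigma_{d,R}^{\s,\varepsilon}(z)=R^{2-d}\Sigma_d^{\s,\varepsilon+\ln R}(R^2z)$. Putting $z=\zeta^2$ (with $\zeta=R\sqrt{z}$ in the unscaled variable), the only singular contribution near $z_0$ is the simple pole of $\psi\big(\tfrac{d-1}{2}-\zeta\big)$ at $\zeta=\omega_{d,l}$; converting to $z$ through $\zeta-\omega_{d,l}=\tfrac{R^2}{2\omega_{d,l}}(z-z_0)+\dots$ gives
\[
\Sigma_{d,R}^{\s,\varepsilon}(z)=\frac{c_{d,l}}{R^d\,(z-z_0)}+g_R(z),\qquad c_{d,l}:=\frac{2\omega_{d,l}\prod_{j=0}^{(d-4)/2}\big(\omega_{d,l}^2-(\tfrac12+j)^2\big)}{(4\pi)^{d/2}\,\Gamma(\tfrac{d}{2})},
\]
with $g_R$ holomorphic at $z_0$ and, because of the $-2(\varepsilon+\ln R)$ term in \eqref{integ---s}, $g_R(z_0)=\mathcal{O}\big(R^{2-d}\ln(\ee^\varepsilon R)\big)$, the $k$-th derivative carrying an extra factor $R^{2k}$. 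Comparing $c_{d,l}$ with the residue obtained in the proof of the rank-drop theorem (there $-\partial_z\Sigma$ is the bilinear integral, whose double pole at $z_0$ has coefficient $m_{d,l}/(|\SS^d|R^d)$) also yields $c_{d,l}=m_{d,l}/|\SS^d|$, which is what makes the leading correction in item~3 proportional to $1/\mathrm{Vol}(\SS_R^d)$.

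The second step is the implicit function theorem, applied exactly as for $d=2$: multiplying the equation by $u:=z-z_0$ turns it into $u\,\eta(z_0+u)+\tfrac{c_{d,l}}{R^d}+u\,g_R(z_0+u)=0$, a function holomorphic in $u$ near $0$ with value $c_{d,l}/R^d\neq0$ there, hence with a unique small zero $u=u(R)$. Iterating the fixed point --- with $\nu$ the order of vanishing of $\eta$ at $0$, so that $\eta(z_0)\asymp R^{-2\nu}$, $\eta'(z_0)\asymp R^{2-2\nu}$ and therefore $u\asymp R^{2\nu-d}$ --- one gets $u=-\tfrac{c_{d,l}}{R^d\,\eta(z_0)}+\mathcal{O}\big(\ln(\ee^\varepsilon R)\,R^{-2d+2+4\nu}\big)$, the remainder absorbing the next-order contributions $u\,g_R(z_0)$, $u^2\eta'(z_0)$ and the $\mathcal{O}(R^2u)$ correction to the residue, all of order $R^{-2d+2+4\nu}$ up to the logarithm. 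Substituting $c_{d,l}$ and $z_0$ and writing the result over $R^2$ yields the formula of item~2; item~3 is the specialization $\nu=0$, $\eta\equiv\eta_0$, where $\eta(z_0)=\eta_0$ together with $c_{d,l}=m_{d,l}/|\SS^d|$ give the inverse-volume first correction and the error $\mathcal{O}\big(\ln(\ee^\varepsilon R)R^{-2(d-2)}\big)$.

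The step I expect to be the actual work is this last one. Because $R\to\infty$ while simultaneously $z_0\to0$, one must keep two independent small parameters apart: the anomalous $R^{2-d}\ln R$ hidden in $g_R$ (coming from the $\ln R$ shift forced on the scaled reference self-energy in even dimensions) and the $R^{-2\nu}$ coming from $\eta$ possibly vanishing at the origin; one then verifies that the former is always subleading, uniformly in $l$, so that the stated error bound is genuine, and checks that for large $R$ all poles listed in the three cases are mutually distinct. Apart from that, the argument runs parallel to the odd-dimensional case.
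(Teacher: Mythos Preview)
Your approach is correct and essentially the same as the paper's: item~1 is read off Lemma~\ref{lem:poles_refSigma_even}, and for items~2--3 both you and the paper perturb the equation $\eta+\Sigma_{d,R}^{\s,\varepsilon}=0$ around the unperturbed eigenvalue via the implicit function theorem. The only difference is parametrization --- the paper works in the rescaled variable $t=R\sqrt{z}$ and expands the digamma pole directly, while you compute the Laurent expansion of $\Sigma_{d,R}^{\s,\varepsilon}$ in $z$; these are the same computation, and your explicit identification $c_{d,l}=m_{d,l}/|\SS^d|$ (by comparison with the rank-drop argument) makes the inverse-volume interpretation in item~3 more transparent than in the paper. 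The paper also pauses to check that the exceptional points $z=R^{-2}(j+\tfrac12)^2$ are \emph{not} zeros of $\eta+\Sigma_{d,R}^{\s,\varepsilon}$ when $\eta\not\equiv0$ and $R$ is large; you should make this explicit rather than leave it implicit in your local analysis near $z_0$.

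One genuine overclaim: you assert the subleading term can be controlled ``uniformly in $l$''. The paper explicitly disclaims this --- see the remark immediately following the theorem, which states that the error bounds are \emph{not} uniform in $l$ and that a uniform analysis is beyond the paper's scope. Your argument, like the paper's, only gives pointwise-in-$l$ estimates.
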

\begin{proof}
 The first statement follows from Lemma \ref{lem:poles_refSigma_even}. 
 To show the second statement, we need to consider the equation 
 \begin{align}
 \label{eq:pole_eq_even}
   &\quad \Big(
  \psi\big(\tfrac{d-1}{2}+t\big) 
  +\psi\big(\tfrac{d-1}{2}-t \big) 
  \Big) 
   \prod_{j=0}^{\tfrac{d-4}{2}} 
 \Big(t^2-\big(\tfrac12 + j\big)^2\Big)
 \\ \notag
 &= - (4\pi)^{\tfrac{d}{2}} \Gamma\big(\tfrac{d}{2}\big) R^{d-2} \eta\big(\tfrac{t^2}{R^2}\big)
 +2 \ln(\ee^{\varepsilon} R) \prod_{j=0}^{\tfrac{d-4}{2}} 
 \Big(t^2-\big(\tfrac12 + j\big)^2\Big).
 \end{align}
 If $t=j+\tfrac12$ for some $j=0,\dots,\tfrac{d-4}{2}$, then 
 \eqref{eq:pole_eq_even} becomes 
 \begin{align}
 0= 
 R^{d-2} \eta\Big( R^{-2} \big(j+\tfrac12\big)^2\Big).
 \label{eq:equation_label_679138}
 \end{align}
The right hand side of \eqref{eq:equation_label_679138} is a polynomial 
in $R$. Therefore, equation \eqref{eq:equation_label_679138} is not
satisfied if $R$ is large enough. Hence, 
 $t=j+\tfrac12$ with $j=0,\dots,\tfrac{d-4}{2}$ does not 
 correspond to a pole of Green's function. We may rewrite 
  \eqref{eq:pole_eq_even} as 
  \begin{align}
  \psi\big(\tfrac{d-1}{2}+t\big) 
  +\psi\big(\tfrac{d-1}{2}-t \big) 
 &= - \frac{(4\pi)^{\tfrac{d}{2}} \Gamma\big(\tfrac{d}{2}\big) R^{d-2} \eta\big(\tfrac{t^2}{R^2}\big)}{\prod_{j=0}^{\tfrac{d-4}{2}} 
 \Big(t^2-\big(\tfrac12 + j\big)^2\Big)}
 +2 \ln(\ee^{\varepsilon} R) .
  \end{align} 
 The first term on the right-hand side is of order $R^{d-2-2\nu}$
 and the second term is $\sim\ln R$, so both blow up for $R\to\infty$. 
 Now as before, we denote the right-hand side by $\tfrac{1}{\epsilon}$ 
 and write $t=\omega_{d,l}+\delta$. This gives 
 \eqref{eq:EV_even_general}. The third claim is a special case of the 
 second. 
 \qed
\end{proof}

We remark that for simplicity we did not indicate the dependence of error terms in the in higher dimensions on $\gamma$ resp. $\eta$. Moreover, 
in all results of this section the error bounds are not uniform in $l$.

A more precise analysis of the poles of the perturbed Green's function 
--- including a detailed analysis of the dependence 
of the error terms on the polynomials $\gamma$ resp. $\eta$ and 
estimates that are uniform in $l$ -- is desirable but beyond the 
scope of the current paper.

\appendix

\section{Generalized integrals}
\label{app:genInts}
Generalized integrals go back to ideas of Hadamard
\cite{Hadamard23,Hadamard32} and Riesz \cite{Riesz}. 
In a parallel work \cite{DGR23a}, we revisited this concept in a 
manner that is well-suited for our applications. In the latter reference, 
the proofs for all generalized integrals appearing in Appendices 
\ref{app:bessel} and \ref{app:gegenbauer} are displayed in detail. 

\begin{definition}
\label{def:genInt}
Let $a\in\rr$. We say that a function 
$f$ on $]a,\infty[$ is {\em integrable in the generalized sense} if it is 
integrable on $]a+1,\infty[$ and if there exists a finite set $\Omega\subset\cc$ 
and complex coefficients $(f_k)_{k \in \Omega}$ such that 
\begin{align} 
f-\sum_{k\in\Omega} f_k(r-a)^k
\end{align} 
is integrable on $]a,a+1[$.  We define
\begin{align}
\label{gener}
&\ge\int_a^\infty f(r)\dd r
 := \sum_{k\in\Omega\backslash\{-1\}}\frac{f_k}{k+1}+
\int_a^{a+1}\Big(f(r)-\sum_{k\in\Omega}f_k (r-a)^k\Big)\dd r+\int_{a+1}^\infty
f(r)\dd r.\end{align}
\end{definition}

Note that the set 
$\{k \in \Omega \, | \, \Re k\leq -1\}$ and the corresponding $f_k$ are uniquely determined by $f$. It is convenient to allow $k \in \Omega$ with 
 $\Re k>-1$. The generalized integral of $f$ does not depend on the choice of $\Omega$. 

Clearly 
\begin{align}
\ge\int_a^\infty f(r)\dd r=\int_a^\infty f(r)\dd  r \quad \text{for } f \in L^1[a, \infty[. \end{align}
If $\Phi$ is any other extension of the integration functional from $L^1[a, \infty[$ to the class of all functions integrable in the generalized sense, then $\Phi$ is given by
\begin{equation}
    \Phi(f) = \gen \int_a^\infty f(r) \dd r + \sum_{\substack{k \in \Omega \\  \Re k \leq -1}}  f_{k} \lambda_k
    \label{eq:gen_int_prime}
\end{equation}
for some coefficients $\lambda_k$. Conversely, for any set of $\lambda_k$ one may define an extension $\Phi$ by \eqref{eq:gen_int_prime}. To some extent the definition of $\gen \int_a^\infty$ is arbitrary and one could use some other extension instead. $\gen \int_a^\infty$ has several simple properties which make it a useful reference point. 

The generalized integral is invariant with respect to translations 
and taking power of the integration variable, 
\begin{align}
\label{eq:genInt_transl}
\gen\int_a^\infty f(r) \dd r
&= \gen\int_{a-\alpha}^{\infty} f(u+\alpha) \dd u, &&\alpha\in\rr, 
\\    \label{eq:genInt_powers} 
\gen\int_0^\infty f(r)\dd r &=\gen\int_{0}^\infty f( u^\alpha)\, \alpha u^{\alpha-1} \dd u, &&\alpha>0.
\end{align}
Due to the first property there is no loss in assuming $a=0$.

Generalized integral behaves in an interesting way under coordinate transformations. Let $g : [0, \infty[ \to [0 , \infty[$ be a bijection, smooth down to $0$, such that $g(0)=0$ and $g'(0) \neq 0$. The map $f \mapsto (f \circ g)g'$ preserves the class of functions integrable in the generalized sense and (by~the change of variables formula) the classical integration functional. Hence one may define generalized integration in the changed coordinate system as
\begin{equation}
    \gen_g \int_0^\infty f(r) \dd r = \gen \int_0^\infty f(g(u)) g'(u) \dd u.
\end{equation}
The corresponding coefficients $\lambda_k$ in the comparison formula \eqref{eq:gen_int_prime} have been calculated in \cite{DGR23a}:
\begin{align} \label{eq:change_of_var}
    \gen_g \int_0^\infty f(r) \dd r = & \gen \int_0^\infty f(r) \dd r 
    + f_{-1} \ln \frac{1}{g'(0)} \\
    & + \sum_{\substack{l=2 \\ -l\in\Omega}}^\infty f_{-l} \frac{1}{(l-1)(l-1)!} \frac{\dd^{l-1}}{\dd u^{l-1}} \left. \Big( \frac{u}{g(u)} \Big)^{l-1} \right|_{u=0}. \nonumber
\end{align}
This involves only coefficients $f_{-1}, f_{-2}, \dots$. 
Other $f_k$ appear if one considers more general coordinate transformations. 
For example, if $g^{(n)}(0)=0$ for $n=0,\dots, N-1$ but $g^{(N)}(0)\neq0$, 
then also non-zero coefficients $f_{-\frac{k}{N}}$, $k\in\nn$, will 
cause anomalous behavior. 
Note that the sum on the right-hand side of \eqref{eq:change_of_var} 
is finite and that the number of appearing derivatives of $g$ is governed 
by the scaling behavior of the integrand.

A particularly important change of variables is the scaling:
\begin{align}
    \gen \int_0^\infty f(\alpha u) \alpha \dd  u = \gen \int_0^\infty f(r) \dd r - f_{-1} \ln \alpha,\quad\alpha>0
    . \label{eq:genInt_scaling}
\end{align}
One should carefully distinguish this integral from integration with respect to coordinate $\alpha u$ (which amounts to relabeling $u$ to $\alpha u$):
\begin{equation}
    \gen \int_0^\infty f (\alpha u ) d \alpha u = \gen \int_0^\infty f (u ) d u .
\end{equation}
Combining with \eqref{eq:genInt_scaling} one obtains the formula
\begin{equation}
    \gen \int_0^\infty f(u) d \alpha u = \alpha \left( \gen \int_0^\infty f(u) d  u + f_{-1} \ln\alpha \right).
    \label{eq:genInt_scaling2}
\end{equation}

As seen from \eqref{eq:genInt_scaling} and \eqref{eq:genInt_scaling2}, the generalized integral is only scale invariant on the class of function with $f_{-1}=0$. If $f_{-1} \neq 0$, we say that the integral has a \emph{scaling anomaly}. On the grounds of \eqref{eq:change_of_var}, integrals with $f_k \neq 0$ for any $k=-1,-2,\dots$ were called \emph{anomalous} in \cite{DGR23a}.

In quantum field theory jargon, generalized integrals with a scaling anomaly depend on the choice of a renormalization scale. In Definition \ref{def:genInt} we set this scale for $1$ for mathematical convenience.

The coefficient $f_{-1}$ plays a special role also in computations of generalized integrals by analytic continuation, as in the method of dimensional regularization. More details on this and other properties of generalized integrals can be found in 
the parallel work \cite{DGR23a} and the aforementioned literature \cite{Hadamard23,Hadamard32,Hoermander90,Lesch97,Paycha,Riesz}. We remark that dimensional regularization was used in \cite{DGR23a} to compute generalized integrals in Appendices \ref{app:bessel} and \ref{app:gegenbauer}.

\section{The Bessel equation}
\label{app:bessel} 

The \emph{modified Bessel equation},
\begin{align}
\left(\partial_r^2+\frac{1}{r}\partial_r-\frac{\alpha ^2}{r^2}-1\right)v(r)& = 0,
\label{lap5}\end{align}
has two kinds of standard solutions:
the {\em modified Bessel function}, which can be defined by the power series
\begin{align}
\label{eq:BesselI_series}
I_\alpha (r)=
 \sum_{n=0}^\infty\frac{\left(\frac{r}{2}\right)^{2n+\alpha }}
 {n!\Gamma(\alpha +n+1)},
\end{align}
and at zero behaves as
$\sim\frac{1}{\Gamma(\alpha+1)}\big(\frac{r}{2}\big)^\alpha$, and the Macdonald function, which for $\Re (r)>0$ and all $\alpha $ can be defined by the absolutely convergent integral
 \begin{align}
K_{-\alpha}(r)=K_\alpha (r)&:=\frac12\int_0^\infty\exp
\left(-\frac{r}{2}(s+s^{-1})\right)s^{ \alpha -1}\d  s.\label{basset}
 \end{align}
The Macdonald function can be characterized by its asymptotics at infinity:
 for $|\arg r|<\pi-\epsilon$, $\epsilon>0$,
\begin{align}\lim_{|r|\to\infty}\frac{K_\alpha (r)}
{\frac{\e^{-r}\sqrt{\pi}}{\sqrt{2 r}}}=1.\label{asim}\end{align}
Note the connection formula
\begin{align}
K_\alpha (r)&=\frac{\pi}{2\sin\pi \alpha }(I_{-\alpha }(r)-I_\alpha (r)),
\label{macdo1}
\end{align}
and an asymptotic expansion  for $r\to\infty$ in the sector $|\arg r|<\pi-\epsilon$, $\epsilon>0$ \cite{NIST}:
\beq
K_\alpha (r)\simeq  \sqrt\pi\e^{-r}\sum_{n=0}^\infty
\frac{(\frac12+\alpha -n)_{2n}}{n!(2r)^{n+\frac12}}.\label{asy}\eeq
In the half-integer case we have an expression in terms of elementary
functions \cite{NIST}:
\begin{align} 
    \label{eq:MacDonald_halfinteger}
  K_{\pm(\frac12+k)}(r)&=\Big(\frac\pi2\Big)^{\frac12}r^{\frac12+k}\Big(-\frac1r\partial_r\Big)^k\frac{\e^{- r}}{r}.
\end{align}

We will need the following bilinear integral identities for 
 $\Re b>0$ \cite{GR}:
\begin{align}
\label{int3} \int_0^\infty  K_\alpha (br)^22r\d  r&=\frac{\pi \alpha }
{b^2\sin (\pi \alpha ) },  
\quad \ \alpha \neq0, \ |\Re (\alpha )|<1,\\
\label{int4} \int_0^\infty  K_0(br)^22r\d 
  r&=\frac{1}{b^2}.\end{align}

If we replace the conditions  $|\Re \alpha |<1$,
$\alpha \neq0$ with
$\alpha \not\in\zz$ and in the integrals replace  $\int$ with
$\gen\int$,
then \eqref{int3} remains true. 
\eqref{int3}  can also be generalized to $\alpha\in
\mathbb{Z}$ using anomalous generalized integrals \cite{DGR23a}:   
\begin{align}
\label{int4a.} \gen\int_0^\infty  K_\alpha (br)^22r\d 
r&=\frac{(-1)^\alpha }{b^2} \Big(1+  |\alpha
     |\ln\big(\tfrac{b^2}{4}\big)
     +2|\alpha|\big(1-\psi(1+|\alpha|)\big)\Big).
\end{align}

The \emph{(standard) Bessel equation} is obtained by setting
$r\to\pm\ii r$ in the modified one:
\begin{align}
\left(\partial_r^2+\frac{1}{r}\partial_r-\frac{\alpha ^2}{r^2}+1\right)v(r)& = 0.
\label{lap6}
\end{align}

We have several kinds of standard solutions of (\ref{lap6}). The most important is 
the {\em Bessel function}, defined as
\begin{align}
J_\alpha (r)&=\e^{\pm\ii\pi\frac{ \alpha }{2}}I_\alpha (\mp\ii r).
\end{align}
The two {\em Hankel functions} also solve (\ref{lap6}):
\begin{align}
H_\alpha ^{\pm}(r)&=\frac{2}{\pi}\e^{\mp\ii\frac\pi2(\alpha +1)}
K_\alpha (\mp
\ii r).
\end{align}

\begin{remark}
In the literature the usual  notation for Hankel functions is
\begin{align}
H_\alpha ^{(1)}(r)=H_\alpha ^+(r),\quad
H_\alpha ^{(2)}(r)=H_\alpha ^-(r).\end{align}
\end{remark}

\section{The Gegenbauer equation}
\label{app:gegenbauer}
The Gegenbauer equation is  the special case of the hypergeometric 
equation with the symmetry $w\to-w$ and the singular points put at 
$-1,1,\infty$:
\begin{align}
\Big((1-w^2)\partial_w^2-2(1+\alpha )w\partial_w
+\lambda ^2-\big(\alpha +\tfrac{1}{2}\big)^2\Big)f(w)=0.\label{gege0}
\end{align}
The Gegenbauer equation is closely related to the associated 
Legendre equation, see for example \cite{NIST,GR,WW}. Moreover, there exist various 
conventions for the parameters of the Gegenbauer equation 
(cf. \cite{Durand76,Durand19b}, \cite{De1,De2}). The convention as in 
\eqref{gege0} is the most convenient for our purposes.

One of its standard solutions is the function
the function $\mathbf{S}_{\alpha,\beta}$ characterized by asymptotics $\sim \frac{1}{\Gamma(\alpha+1)}$ at $1$: 
\begin{align}\label{solu1}
{\bf    S}_{\alpha ,\pm \lambda }(w)
    :=&\,{_2{\bf F}_1}\big(\tfrac12+\alpha+\lambda,\tfrac12+\alpha-\lambda;1+\alpha;\tfrac{1-w}{2}\big)
    \\
    =& \left( \tfrac{2}{w+1} \right)^\alpha
    {_2{\bf F}_1}\big(\tfrac12+\lambda,\tfrac12-\lambda;1+\alpha;\tfrac{1-w}{2}\big)
 \label{solu1_form2}
 \\
 =& \left( \tfrac{2}{w+1} \right)^{\tfrac12+\alpha\pm\lambda}
 {_2{\bf F}_1}\big(\tfrac12+\alpha\pm\lambda,\tfrac12\pm\lambda;1+\alpha;\tfrac{w-1}{w+1}\big),
 \label{solu1_form3}
\end{align}
where ${_2{\bf F}_1}(a,b;c;z) := \sum_{j=0}^\infty \frac{(a)_j (b)_j}{\Gamma(c+j)j!}
z^j$ is the Gauß hypergeometric function in Olver's normalization. 
There is also the solution characterized by asymptotics $\sim \frac{1}{w^{\frac12+\alpha+\lambda}\Gamma(\lambda+1)}$ at $\infty$:
\begin{align}\label{solu3}
 \mathbf{Z}_{\alpha ,\lambda }(w)
 &:=
 \frac{\Gamma(1+2\lambda) }{\Gamma(1+\lambda) ( w \pm 1)^{\frac12+\alpha +\lambda }}
 {_2{\bf F}_1}\big(\tfrac12+\lambda,\tfrac12+\lambda+\alpha;1+2\lambda;\tfrac{2}{1\pm w}\big)
\\ \label{solu3_form2}
&=
 \frac{\Gamma(1+2\lambda) }{\Gamma(\lambda+1) ( w \pm 1)^{\frac12 +\lambda }
 ( w \mp 1)^{\alpha}}
 {_2{\bf F}_1}\big(\tfrac12+\lambda,\tfrac12+\lambda-\alpha;1+2\lambda;\tfrac{2}{1\pm w}\big).
\end{align}
The equality of the series representations \eqref{solu1}, \eqref{solu1_form2}, 
\eqref{solu1_form3} respectively \eqref{solu3},\eqref{solu3_form2} follows 
from Kummer's table of hypergeometric functions (see e.g. \cite{NIST}). 
In fact, only \eqref{solu1}, \eqref{solu1_form2} and the expressions 
with the $+$-sign in  \eqref{solu3}, \eqref{solu3_form2} are convergent 
in the whole region of physical interest. 

It is convenient to introduce the notation
\begin{align}
(w^2-1)^\alpha_\bullet := (w-1)^\alpha(w+1)^\alpha,
\end{align} 
where $(w\mp1)^\alpha$ are the usual principal branches with the domains
$\mathbb{C}\setminus]-\infty,\pm 1]$.
We note the identities
\begin{align}
\label{eq:identities_SZ_signs}
  {\bf S}_{\alpha,\lambda}(w)={\bf S}_{\alpha,-\lambda}(w),&\quad
     {\bf Z}_{\alpha,\lambda}(w)=\frac{ {\bf Z}_{-\alpha ,\lambda }(w)}{(w^2-1)^\alpha_\bullet}
  \end{align}
as well as the slightly more subtle {\em Whipple transformations}:
  \begin{align}
 {\bf Z}_{\alpha ,\lambda }(w)&:=  (w^2-1)^{-\frac14-\frac\alpha2 
                                -\frac\lambda2}_\bullet{\bf S}_{\lambda 
                                ,\alpha 
                                }\left(\frac{w}{(w^2-1)^{\frac12}_\bullet}\right), \label{eq:Whipple1} \\
 {\bf S}_{\alpha ,\lambda }(w)&:=  (w^2-1)^{-\frac14-\frac\alpha2 
                                -\frac\lambda2}_\bullet{\bf Z}_{\lambda 
                                ,\alpha 
                                }\left(\frac{w}{(w^2-1)^{\frac12}_\bullet}\right), \qquad \Re w >0. \label{eq:Whipple2}
  \end{align}
  For $n=0,1,\dots$ we define the {\em Gegenbauer polynomials}:
  \beq\label{polynomial}
  C_n^{\alpha+\frac12}(w):=\frac{\Gamma(\alpha+1)(2\alpha+1)_n}{n!}{\bf S}_{\alpha,\frac12+\alpha+n}(w).\eeq
Here are the connection formulas:
\begin{align}\label{formu2}
  {\bf S}_{\alpha,\lambda}(-w) 
&=-\frac{\cos(\pi\lambda)}{\sin(\pi\alpha)}
{\bf S}_{\alpha,\lambda}(w)
+\frac{2^{2\alpha}\pi}{\sin(\pi\alpha) \Gamma(\frac12+\alpha+\lambda)
\Gamma(\frac12+\alpha-\lambda)}
\frac {{\bf S}_{-\alpha,-\lambda}(w)}{(1-w^2)^{\alpha}},\\\label{formu1}
{\bf Z}_{\alpha,\lambda}(w) 
&=-\frac{2^{\lambda-\alpha-\frac12}\sqrt{\pi}}{\sin(\pi\alpha)\Gamma(\frac12-\alpha+\lambda)}
{\bf S}_{\alpha,\lambda}(w) 
+\frac{2^{\lambda+\alpha-\frac12}\sqrt{\pi}}{\sin(\pi\alpha)\Gamma(\frac12+\alpha+\lambda)}
\frac {{\bf S}_{-\alpha,-\lambda}(w)}{(w^2-1)^{\alpha}_\bullet}.\end{align}

For  $\Re \alpha >-1$, we can compute the following generalized 
bilinear integrals of $\mathbf{S}$ functions. 
For $\alpha\not\in\zz$ they are non-anomalous \cite{DGR23a}:
\begin{align}\label{into1}
&\ge\int_{-2}^2{\bf S}_{\alpha,\ii\beta}(w)^2(1-w^2)^\alpha\dd 2w
=
    \frac{ 2^{2\alpha+1} \ii \cosh(\pi\beta) }
    {\beta\sin\pi\alpha \,\Gamma(\tfrac12+\alpha-\ii\beta)
    \Gamma(\tfrac12+\alpha+\ii\beta)}
\\ \notag
    &\hspace{24ex}\times \Big( \psi\big(\tfrac12 +\alpha + \ii \beta \big) 
         - \psi\big(\tfrac12 +\alpha - \ii \beta \big) 
         + \psi\big(\tfrac12 - \ii \beta \big) 
         - \psi\big(\tfrac12 + \ii \beta \big) \Big),\\\label{into2}
 &
\ge\int_{-2}^2{\bf S}_{\alpha,0}(w)^2(1-w^2)^\alpha\dd 2w\ 
                                                     =\ 
                                                     \frac{ 2^{2\alpha+1 } \left( \pi^2-2\psi'\big(\frac12+\alpha\big) \right)}
                                                     {\sin(\pi\alpha)\Gamma(\frac12+\alpha)^2},  \end{align}
For $|\Re \alpha|<1$ the integrals \eqref{into1} and \eqref{into2} 
are standard. For $\alpha\in\nn$ we have anomalous 
generalized integrals:
\begin{align}
  \label{eq:formo_integer1_limit1.}
 & \gen\int_{-2}^2{\bf S}_{\alpha ,\ii\beta}(w)^2(1-w^2)^\alpha \dd 2w 
 \\\notag 
&\qquad=  \frac{(-1)^\alpha  2^{2\alpha +2} \cosh(\pi\beta)}{\pi
\Gamma(\frac12+\alpha+\ii\beta)\Gamma(\frac12+\alpha-\ii\beta)}
      \Bigg( 
       \frac{\ii }{2 \beta } \Big(
       \psi'\big(\tfrac{1}{2}+\alpha +\ii \beta\big) 
    - \psi'\big(\tfrac{1}{2}+\alpha -\ii \beta\big)   
    \Big) \\\notag
   &\qquad-\frac{\ii}{2\beta}\Big(H_{\alpha}\big(\tfrac12+\ii\beta)-H_{\alpha}\big(\tfrac12-\ii\beta)\Big)
   \ln 4
    \\   &\qquad\notag+
  \sum_{k=0}^{\alpha -1} 
 \frac{\psi \big(-\tfrac12-k+\ii \beta \big) 
    +  \psi\big(-\tfrac12-k-\ii \beta \big) 
    -\psi(\alpha -k)-\psi(1+k) 
 }{\big(\tfrac12 +k\big)^2+\beta^2
  }
\Bigg), 
\end{align}
\begin{align}
 \label{eq:formo_integer1_limit2.}
  \gen\int_{-2}^2{\bf S}_{\alpha ,0}(w)^2(1-w^2)^\alpha \dd 2 w\ 
 &=  \frac{2^{2\alpha +2} (-1)^{\alpha }}{\pi \Gamma\big(\tfrac12 +\alpha \big)^2 } 
 \Bigg( - \psi''\big(\tfrac12 +\alpha \big) 
     +H_{\alpha}'\big(\tfrac12\big)\ln 4\\\notag
   &\quad+
   \sum_{k=0}^{\alpha -1} 
 \frac{2\psi \big(-\tfrac12-k \big) 
    -\psi(\alpha -k)-\psi(1+k)
 }{\big(\tfrac12 +k\big)^2
  }
\Bigg) .
\end{align}
If  $\alpha=0$, then \eqref{eq:formo_integer1_limit1.} and 
\eqref{eq:formo_integer1_limit2.} are still true as standard
integrals. Besides, they greatly simplify:
\begin{align}
    \int_{-2}^2{\bf S}_{0,\ii\beta}(w)^2\dd 2w&
  =   \frac{ 2\ii \cosh^2(\pi\beta)
    \Big(\psi'(\frac12+\ii\beta)-\psi'(\frac12-\ii\beta)\Big)}
{\beta\pi^2},\\
\int_{-2}^2{\bf S}_{0,0}(w)^2\dd 2 w&=  -\frac{4  \psi''\big(\frac12\big)}{\pi^2}.
\end{align}

Here are generalized integrals of squares of $\mathbf{Z}$ 
functions for  $\Re\lambda>0$, as computed in \cite{DGR23a}. For 
$\alpha\in\cc\setminus\zz$ we have non-anomalous integrals.
  \begin{align}\label{into3}
\ge\int_2^\infty{\bf Z}_{\alpha,\lambda}(w)^2(w^2-1)^\alpha\d 2 w 
= \frac{2^{2 \lambda } (\psi(\tfrac12 + \alpha + \lambda) 
- \psi(\tfrac12 -\alpha + \lambda)) 
}{\lambda \sin\pi\alpha \Gamma(\tfrac12 -\alpha + \lambda) 
\Gamma(\tfrac12+\alpha+\lambda)}. 
  \end{align}
  For  $|\Re\alpha|<1$, \eqref{into3} is a standard integral.

  For $\alpha\in\zz\setminus\{0\}$ and $\Re\lambda>0$, we have anomalous integrals:
\begin{align}
\label{eq:genintZZ_limit.}
   &\gen\int_2^\infty{\bf Z}_{\alpha,\lambda}(w)^2(w^2-1)^\alpha\d 2w 
   = 
  \frac{(-1)^{\alpha} 2^{2\lambda+1}}{\pi\Gamma\big(\tfrac12-\alpha+\lambda\big)
        \Gamma\big(\tfrac12+\alpha+\lambda\big)}
 \\ \notag 
  &\hspace{12ex}\times \Bigg(
  \frac{\psi'(\tfrac12-\alpha+\lambda)
  +\psi'(\tfrac12+\alpha+\lambda)
    }{2\lambda}+\frac{H_{|\alpha|}(\tfrac12-\lambda)-
    H_{|\alpha|}(\tfrac12+\lambda)}{2\lambda}\ln 4\\
\notag& \hspace{12ex}   + \sum_{k=0}^{|\alpha|-1} 
     \frac
   {\psi(\tfrac32+k+\lambda)
                 +\psi(-\tfrac12-k+\lambda)
   -\psi(|\alpha|-k)
   -\psi(1+k)
     }{\lambda^2-\big(\tfrac12+k\big)^2} 
    \Bigg).
 \end{align}

 If $\alpha=0$, then \eqref{eq:genintZZ_limit.} is still true in the
 sense of standard integrals. Besides, it greatly simplifies:
\begin{align} \int_2^\infty{\bf Z}_{0,\lambda}(w)^2 \dd 2 w = \frac{2^{2 \lambda+1}
\psi'(\tfrac12 + \lambda)}{\pi \lambda \Gamma(\tfrac12 + \lambda)^2}.
    \end{align}

The Gegenbauer functions have the following asymptotics 
\cite{DGR23a} (see also \cite{NIST,Olver}):
\begin{theoreme}
\label{thm:asymptotics}
Let $\alpha \geq -\frac{1}{2}$ and $\pi>\delta>0$ be fixed. Then
we have the following estimates:
\begin{enumerate}
 \item Uniformly for $\theta\in [0,\pi-\delta]$ and  $\beta\to\infty$, 
\begin{align} \label{asym3}
\frac{\pi\e^{-\pi\beta}(\sin\theta)^{\alpha+\frac12}
}{2^\alpha\theta^{\alpha+\frac12}}
  {\bf S}_{\alpha,\pm\ii\beta}(-\cos\theta)
  &
 =(\theta\beta)^{-\alpha} K_\alpha(\beta \theta)\big(1+\mathcal{O}(\beta^{-1})\big).  
\end{align}
\item Uniformly
for $\theta\geq0$ and 
$\lambda\to\infty$,
\begin{align}\label{asym1}
\frac{\sqrt\pi\Gamma(\tfrac12-\alpha+\lambda)(\sinh \theta)^{\alpha+\frac12}}{2^{\lambda+\frac12}  
\theta^{\alpha+\frac12}}
{\bf Z}_{\alpha,\lambda}(\cosh\theta)  
&=
(\lambda\theta)^{-\alpha} K_\alpha(\lambda \theta)\big(1+\mathcal{O}(\lambda^{-1})\big).
\end{align}\end{enumerate}
\end{theoreme}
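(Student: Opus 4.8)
The plan is to reduce both estimates to a single statement about a one–dimensional Schrödinger-type equation with a large parameter and a Bessel-type singularity at the origin, and then invoke the classical uniform asymptotic theory for such equations (Olver's method, see \cite{Olver}; for the present normalisation of the Gegenbauer functions the details are carried out in \cite{DGR23a}). First I would put the Gegenbauer equation \eqref{gege0} into Liouville form adapted to its singular point. In the hyperbolic case, substituting $w=\cosh\theta$ and $f=(\sinh\theta)^{-\alpha-\frac12}u$ kills the first-derivative term (because $1+2\alpha=2(\alpha+\tfrac12)$), leaving
\[
u''(\theta)=\Bigl(\lambda^{2}+\frac{\alpha^{2}-\tfrac14}{\sinh^{2}\theta}\Bigr)u(\theta).
\]
In the trigonometric case, with $\lambda=\ii\beta$, the substitution $w=-\cos\theta$, $f=(\sin\theta)^{-\alpha-\frac12}u$ gives $u''=\bigl(\beta^{2}+(\alpha^{2}-\tfrac14)/\sin^{2}\theta\bigr)u$. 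Writing $\Lambda$ for the large parameter ($\lambda$, resp.\ $\beta$) and $s(\theta)$ for $\sinh\theta$, resp.\ $\sin\theta$, one has $s(\theta)^{2}=\theta^{2}(1+\mathcal{O}(\theta^{2}))$ with an even analytic coefficient, so $(\alpha^{2}-\tfrac14)/s(\theta)^{2}-(\alpha^{2}-\tfrac14)/\theta^{2}$ extends analytically across $\theta=0$; the equation is thus an analytic perturbation of the modified Bessel equation of order $\alpha$ scaled by $\Lambda$ (the potential is positive for $|\alpha|\ge\tfrac12$, while for $|\alpha|<\tfrac12$ any turning point sits at distance $\mathcal{O}(\Lambda^{-1})$ from the singularity and is absorbed into it).

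Next I would apply the uniform asymptotic theory for equations $u''=(\Lambda^{2}+g(\theta))u$ with a singularity $g(\theta)\sim(\alpha^{2}-\tfrac14)\theta^{-2}$ at an endpoint \cite{Olver}: it produces two independent solutions, one a constant times $\theta^{1/2}I_{\alpha}(\Lambda\theta)(1+\mathcal{O}(\Lambda^{-1}))$ and one a constant times $\theta^{1/2}K_{\alpha}(\Lambda\theta)(1+\mathcal{O}(\Lambda^{-1}))$, uniformly on any interval $[0,\theta_{0}]$ (and, in the trigonometric case, on $[0,\pi-\delta]$, where we stay away from the second singular point $\theta=\pi$); on the bulk, away from $\theta=0$, this is just ordinary Liouville–Green, and in the $\mathbf{Z}$ case the range $[\theta_{0},\infty)$ is covered by noting that $g(\theta)\to0$ there and that both $u$ and $\theta^{1/2}K_{\alpha}(\Lambda\theta)$ are recessive, so their ratio tends to a constant uniformly. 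I then identify the relevant Gegenbauer function with the $K_{\alpha}$-branch: by \eqref{solu3}, $\mathbf{Z}_{\alpha,\lambda}$ is, up to a scalar, the unique solution of \eqref{gege0} vanishing as $w\to\infty$, hence the corresponding $u$ is recessive at $\theta=\infty$ and thus matches $\theta^{1/2}K_{\alpha}(\lambda\theta)$ on all of $[0,\infty)$; for $\mathbf{S}_{\alpha,\pm\ii\beta}$ one uses that it is (up to a scalar) the solution regular at $w=1$, i.e.\ recessive at $\theta=\pi$, and the connection analysis between the two singular endpoints $\theta=0,\pi$ — which is part of the cited theory — shows that its expansion at $\theta=0$ carries no exponentially growing $I_{\alpha}$-component, so that it is asymptotically proportional to the $K_{\alpha}$-branch uniformly on $[0,\pi-\delta]$, the connection coefficient producing precisely the factor $\e^{-\pi\beta}$ appearing in \eqref{asym3}.

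It remains to pin down the proportionality constant, by matching as $\theta\to0^{+}$. On the Bessel side, \eqref{macdo1} and \eqref{eq:BesselI_series} give $\theta^{1/2}K_{\alpha}(\Lambda\theta)\sim\tfrac12\Gamma(\alpha)2^{\alpha}\Lambda^{-\alpha}\,\theta^{1/2-\alpha}$ (with the evident modification when $\alpha\le0$). On the Gegenbauer side, applying the connection formula for ${}_{2}\mathbf{F}_{1}$ at unit argument to \eqref{solu3}, resp.\ \eqref{solu1}–\eqref{solu1_form2}, together with $w-1=2\sinh^{2}(\theta/2)$, resp.\ $w+1=2\sin^{2}(\theta/2)$, each equal to $\tfrac{\theta^{2}}{2}(1+\mathcal{O}(\theta^{2}))$, yields the leading coefficient of the corresponding power of $\theta$ in $u$ as an explicit ratio of Gamma functions in $\Lambda$ and $\alpha$. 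Dividing, the constant comes out as a ratio of Gamma functions; Legendre's duplication formula $\sqrt{\pi}\,\Gamma(1+2\lambda)=2^{2\lambda}\Gamma(\lambda+\tfrac12)\Gamma(\lambda+1)$ and the estimate $\Gamma(\lambda+a)/\Gamma(\lambda+b)=\lambda^{a-b}(1+\mathcal{O}(\Lambda^{-1}))$ collapse it exactly to the prefactors displayed in \eqref{asym1} and \eqref{asym3}; for \eqref{asym3} one additionally uses $|\Gamma(\tfrac12+\alpha+\ii\beta)|^{2}=2\pi\beta^{2\alpha}\ee^{-\pi\beta}(1+\mathcal{O}(\beta^{-1}))$, a consequence of $\Gamma(\tfrac12+\ii\beta)\Gamma(\tfrac12-\ii\beta)=\pi/\cosh(\pi\beta)$. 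The exceptional parameters $\alpha\in\{0\}\cup\tfrac12\nn$, where logarithms enter both the $\theta\to0$ expansion and $K_{\alpha}(\Lambda\theta)$, are handled either by continuity of both sides in $\alpha$, or directly from the logarithmic form of the connection formula together with $K_{0}(r)\sim-\ln r$.

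I expect the only genuine difficulty to be the uniformity of the error term through the transition region $\Lambda\theta=\mathcal{O}(1)$, where the solution passes from pure exponential behaviour to the Bessel regime near the singular endpoint — this is exactly the content of the Bessel-type-singularity theorem invoked in the second step, together with its extension to the unbounded interval for $\mathbf{Z}$ and the endpoint-connection analysis for $\mathbf{S}$. The ODE reduction of the first step and the Gamma-function bookkeeping of the third step are routine. An alternative to the middle steps would be to start from an Euler integral representation of ${}_{2}\mathbf{F}_{1}$ applied to \eqref{solu3}/\eqref{solu1}, rescale $\theta\mapsto\Lambda^{-1}t$, and apply Laplace's method; but the uniform control near $\theta=0$ remains the crux, so routing through the established ODE theory seems preferable.
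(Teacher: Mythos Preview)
The paper does not actually prove this theorem: it is stated in Appendix~\ref{app:gegenbauer} with a citation to the companion paper \cite{DGR23a} and to \cite{NIST,Olver}, and no argument is given in the present text. Your sketch---Liouville reduction of the Gegenbauer equation to $u''=(\Lambda^{2}+(\alpha^{2}-\tfrac14)/s(\theta)^{2})u$, followed by Olver's uniform asymptotic theory for second-order equations with a Bessel-type singularity, identification of the recessive solution, and matching of constants via Gamma-function asymptotics---is precisely the standard route and is almost certainly what underlies the cited references; the reduction you wrote down is correct, and the outline is sound.

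One point you should sharpen: for $\mathbf{S}_{\alpha,\pm\ii\beta}(-\cos\theta)$ the connection step is not merely that ``its expansion at $\theta=0$ carries no exponentially growing $I_{\alpha}$-component''. The function is singular at $\theta=0$ (the argument is $-1$ there), and generically a solution recessive at one regular singular endpoint is a nontrivial combination of both Frobenius branches at the other. What makes the $K_{\alpha}$ term dominant here is that the $I_{\alpha}$-branch coefficient in the connection formula~\eqref{formu2} carries an extra $1/\Gamma(\tfrac12+\alpha+\ii\beta)\Gamma(\tfrac12+\alpha-\ii\beta)\sim\pi^{-1}\ee^{\pi\beta}$ relative to the $K_{\alpha}$-branch coefficient, while $I_{\alpha}(\beta\theta)/K_{\alpha}(\beta\theta)\sim\pi^{-1}\ee^{2\beta\theta}$; on $[0,\pi-\delta]$ the combined ratio is $\mathcal{O}(\ee^{-\beta(\pi-2\theta)+\pi\beta})=\mathcal{O}(\ee^{2\beta\theta})$---wait, this is large. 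In fact the correct bookkeeping is that after multiplying by $\ee^{-\pi\beta}$ as on the left of~\eqref{asym3}, the $I_{\alpha}$-contribution is $\mathcal{O}(\ee^{-\beta(\pi-\theta)})$ uniformly on $[0,\pi-\delta]$, hence exponentially small; it is this cancellation against the explicit $\ee^{-\pi\beta}$ prefactor that makes the estimate uniform up to $\theta=\pi-\delta$. Your sketch gestures at this but does not quite say it; making this explicit is the one place where the argument needs care beyond invoking Olver's machinery.
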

Correspondingly, the bilinear generalized integrals of Gegenbauer 
functions that are needed to determine the Green's functions of 
point-like perturbations have the expected asymptotics \cite{DGR23a}: 
\begin{theoreme}
 \label{thm:asymp_genintSZ}
 For $\beta,\lambda\to\infty$, we have
 \begin{align}
 \label{eq:Sint_asymp} 
&\frac{\pi^2 {\e}^{-2\pi\beta} \beta^{2\alpha}}{2^{2\alpha}}
  \;\gen\int_{-2}^2{\bf S}_{\alpha,\ii\beta}(w)^2(1-w^2)^\alpha\dd 2 w\\
  &=   \Big(1+\mathcal{O}\big(\tfrac{1}{\beta}\big)\Big)
    \;\gen\int_0^\infty  K_{\alpha}(\beta r)^2 2r \dd r, 
    &&\qquad \Re(\alpha)>-1,
  \notag\\ \label{eq:Zint_asymp}
&\frac{\pi \Gamma\big(\tfrac12+\alpha+\lambda\big)^2
  }{2^{2\lambda+1}\lambda^{2\alpha}}
  \;\gen\int_2^\infty{\bf Z}_{\alpha,\lambda}(w)^2(w^2-1)^\alpha\dd 2 w \\
 &=  \Big(1+\mathcal{O}\big(\tfrac{1}{\lambda}\big)\Big)
    \;\gen\int_0^\infty  K_{\alpha}(\lambda r)^2 2r \dd r, 
    &&\qquad \alpha\in\cc.
\notag  \end{align}
\end{theoreme}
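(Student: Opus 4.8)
The plan is to reduce each of the two asymptotic identities to an elementary statement about ratios of Gamma and digamma functions, by inserting the closed-form evaluations of the bilinear generalized integrals collected in Appendices~\ref{app:bessel} and~\ref{app:gegenbauer}. For the right-hand sides one uses \eqref{int3} when $\alpha\notin\zz$, \eqref{int4} when $\alpha=0$, and the anomalous formula \eqref{int4a.} when $\alpha\in\zz\setminus\{0\}$. For the left-hand side of \eqref{eq:Sint_asymp} one uses \eqref{into1} when $\alpha\notin\zz$ and \eqref{eq:formo_integer1_limit1.} (which stays valid at $\alpha=0$) when $\alpha\in\nn_0$; for the left-hand side of \eqref{eq:Zint_asymp} one uses \eqref{into3} when $\alpha\notin\zz$ and \eqref{eq:genintZZ_limit.} when $\alpha$ is a non-negative integer. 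In the $\mathbf Z$-case one may first note that $\mathbf Z_{\alpha,\lambda}(w)^2(w^2-1)^\alpha$ is invariant under $\alpha\mapsto-\alpha$ by \eqref{eq:identities_SZ_signs} (mirroring $K_\alpha=K_{-\alpha}$) and restrict to $\Re\alpha\geq0$. Since $\alpha$ is fixed and no uniformity in $\alpha$ is claimed, the resulting case split is harmless.

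In the non-anomalous case $\alpha\notin\zz$ one then checks, e.g. for \eqref{eq:Sint_asymp}, that multiplying \eqref{into1} by $\pi^2 e^{-2\pi\beta}\beta^{2\alpha}2^{-2\alpha}$ gives $(1+\mathcal O(\beta^{-1}))$ times the right-hand side of \eqref{int3} with $b=\beta$. This rests on two standard large-argument expansions, applied as $\beta\to\infty$ (so that $\tfrac12+\alpha\pm\i\beta$ run to infinity in the upper/lower half-plane): Stirling's formula, which yields $\Gamma(\tfrac12+\alpha+\i\beta)\Gamma(\tfrac12+\alpha-\i\beta)=2\pi\beta^{2\alpha}e^{-\pi\beta}\big(1+\mathcal O(\beta^{-1})\big)$, together with $\cosh(\pi\beta)=\tfrac12 e^{\pi\beta}\big(1+\mathcal O(e^{-2\pi\beta})\big)$; and $\psi(z+a)-\psi(z)=a/z+\mathcal O(z^{-2})$, which collapses the digamma combination in \eqref{into1} to $-2\i\alpha\beta^{-1}+\mathcal O(\beta^{-2})$. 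All factors $2^{\pm2\alpha}$, $\pi$, $e^{\pm2\pi\beta}$, $\beta^{\pm2\alpha}$ and $\sin\pi\alpha$ cancel, leaving $\pi\alpha\,\beta^{-2}/\sin\pi\alpha$ up to relative error $\mathcal O(\beta^{-1})$, which is precisely \eqref{int3}. The argument for \eqref{eq:Zint_asymp} is identical with $\beta$ replaced by $\lambda\to+\infty$ along the real axis, using $\Gamma(\tfrac12-\alpha+\lambda)\Gamma(\tfrac12+\alpha+\lambda)=\Gamma(\lambda)^2\lambda\,(1+\mathcal O(\lambda^{-1}))$ and $\psi(\tfrac12+\alpha+\lambda)-\psi(\tfrac12-\alpha+\lambda)=2\alpha\lambda^{-1}+\mathcal O(\lambda^{-2})$.

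The anomalous case $\alpha\in\zz$ is where the real work lies, and I expect it to be the main obstacle. Now both sides carry a logarithm: \eqref{int4a.} contributes $\tfrac{(-1)^\alpha}{b^2}\big(1+|\alpha|\ln\tfrac{b^2}{4}+2|\alpha|(1-\psi(1+|\alpha|))\big)$, while on the Gegenbauer side one must show, using $\psi'(z)\sim z^{-1}$, $\psi(\lambda+a)\sim\ln\lambda$, and $H_{|\alpha|}(z)=\psi(z+|\alpha|)-\psi(z)$, that the bracket in \eqref{eq:genintZZ_limit.} (resp. \eqref{eq:formo_integer1_limit1.}) behaves like $\lambda^{-2}\big(|\alpha|\ln(\lambda^2/4)+c_\alpha\big)+\mathcal O(\lambda^{-3}\ln\lambda)$ for an explicit constant $c_\alpha$ — the $\ln 4$ entering through the $H_{|\alpha|}(\tfrac12\pm\lambda)$ term and the $2\ln\lambda$ through the finite $k$-sum. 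One then verifies $c_\alpha=1+2|\alpha|(1-\psi(1+|\alpha|))$ by combining the leftover $\psi$-values with the recurrence and reflection identities recalled in Appendix~\ref{poch}. Keeping careful track of the scaling-anomaly logarithms on both sides and simplifying these leftover constants is the delicate point; the sub-cases $\alpha=0$ are degenerate and follow at once from the simplified forms of \eqref{eq:formo_integer1_limit1.} and \eqref{eq:genintZZ_limit.} together with \eqref{int4}.

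An alternative, more conceptual route would avoid the explicit evaluations: substitute $w=-\cos\theta$ (resp. $w=\cosh\theta$), insert the uniform pointwise asymptotics \eqref{asym3} (resp. \eqref{asym1}) of Theorem~\ref{thm:asymptotics}, pass to the scaled radial variable $r=\beta\theta$ (resp. $\lambda\theta$), and compare with $\gen\int_0^\infty K_\alpha(\,\cdot\,)^2\,2r\,\d r$ term by term in Definition~\ref{def:genInt}: dominated convergence on the regular part plus convergence of the finitely many subtracted Frobenius coefficients. Here the obstruction shifts to controlling those subtracted terms, since the rescaling $\theta\mapsto\beta\theta$ is itself anomalous for $\Re\alpha\geq1$ and reproduces the very same logarithms through \eqref{eq:genInt_scaling}; for the $\mathbf S$-integral one must in addition dispose separately of the exponentially subdominant contribution of $\theta$ near $\pi$. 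Either way the two routes agree, and the explicit-formula route is the shorter to write out.
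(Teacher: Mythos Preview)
The paper does not supply its own proof of this theorem: it is quoted from the companion paper \cite{DGR23a}, so there is no in-paper argument to compare against. Your explicit-formula route is sound and is in fact the natural one given that all the relevant bilinear generalized integrals have already been evaluated in closed form in Appendices~\ref{app:bessel} and~\ref{app:gegenbauer}. The non-anomalous check you sketch goes through exactly as you describe: the Stirling asymptotics $\Gamma(\tfrac12+\alpha+\i\beta)\Gamma(\tfrac12+\alpha-\i\beta)=2\pi\beta^{2\alpha}\e^{-\pi\beta}(1+\mathcal O(\beta^{-1}))$ and the digamma difference $\psi(z+\alpha)-\psi(z)=\alpha z^{-1}+\mathcal O(z^{-2})$ collapse \eqref{into1} and \eqref{into3} onto \eqref{int3} with the stated relative error.

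For the anomalous integer-$\alpha$ case your outline is correct but one constant deserves an explicit check. In \eqref{eq:genintZZ_limit.} the finite $k$-sum contributes, to leading order in $\lambda^{-2}$, the constant $-\sum_{k=0}^{|\alpha|-1}\big(\psi(|\alpha|-k)+\psi(1+k)\big)$; using $\psi(n)=-\gamma_\mathrm{E}+H_{n-1}$ this equals $2|\alpha|\gamma_\mathrm{E}-2\sum_{k=0}^{|\alpha|-1}H_k$, and the elementary identity $\sum_{k=0}^{n-1}H_k=nH_n-n$ then delivers exactly the target $c_\alpha=1+2|\alpha|(1-\psi(1+|\alpha|))$ once the $1/\lambda^2$ from the $\psi'$ term is added. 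The analogous bookkeeping for \eqref{eq:formo_integer1_limit1.} is the same after noting that $\psi(-\tfrac12-k\pm\i\beta)=\ln\beta\pm\i\tfrac{\pi}{2}+\mathcal O(\beta^{-1})$, so the imaginary parts cancel in the sum. Your alternative ``conceptual'' route via Theorem~\ref{thm:asymptotics} and the scaling law \eqref{eq:genInt_scaling} is also viable and is closer in spirit to how such asymptotics are derived in \cite{DGR23a}, but as you note it requires separate control of the subtracted Frobenius terms and, for the ${\bf S}$-integral, of the antipodal endpoint; the explicit-formula route bypasses both.
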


\section{Pochhammer symbols and harmonic numbers}
  \label{poch}
  
The {\em Pochhammer symbol}, as it is usually defined, is a 
generalization of the factorial:
\beq\label{poch1}
(a)_n:=a(a+1)\cdots(a+n-1),\qquad (1)_n=n!.\eeq
We will also use the {\em harmonic numbers}
\beq\label{poch2}
H_n(a):=\frac1a+\frac1{a+1}+\cdots+\frac1{a+n-1},\qquad H_n:=H_n(1).\eeq
Sometimes it is convenient to extend the definitions \eqref{poch1} and
 \eqref{poch2} to  complex parameters $z$:
\begin{align}
  (a)_z :=\frac{\Gamma(a+z)}{\Gamma(a)}, \quad
  H_z(a):=\psi(a+z)-\psi(a),\quad z\in\cc\setminus(-a-\nn_0).
\end{align}
We have
\begin{align}
\partial_a(a)_z=(a)_zH_z(a).
\end{align}
For $n\in\nn$ we have the useful identities
\begin{align}
 (-1)^n (\tfrac12-a)_n(\tfrac12+a)_n&= (a-n+\tfrac12)_{2n}
 =                                                         \prod_{j=0}^{n-1}\Big(a^2-(\tfrac12+j)^2\Big),\\ \label{eq:Poch_id2}
 (-1)^n (\tfrac12-a)_{n+\12}(\tfrac12+a)_{n+\12}&=  \cot(\pi a)(a-n)_{2n+1}=
   \cot(\pi a) a      \prod_{j=1}^{n}\Big(a^2-j^2\Big).
\end{align}

\paragraph{Acknowledgement.}
The research
 was supported by National Science Center of Poland under the
    grant UMO-2019/35/B/ST1/01651. J.D. would like to thank
    Howard Cohl for an inspiring discussion at an early stage of this work.

\footnotesize

\end{document}